\newtheorem{theorem}{Theorem}[section]
\newtheorem{proposition}{Proposition}[section]
\newtheorem{lemma}{Lemma}[section]
\newtheorem{corollary}{Corollary}[section]
\theoremstyle{definition}
\newtheorem{definition}{Definition}[section]
\newtheorem{remark}{Remark}[section]
\newtheorem{example}{Example}[section]
\newcommand{\fWAP}{\mathsf{WAP}}
\newcommand{\fSAP}{\mathsf{SAP}}
\DeclareMathOperator{\vol}{vol}
\DeclareMathOperator{\dens}{dens}
\DeclareMathOperator{\supp}{supp}
\newcommand{\cD}{\mathscr{D}}
\newcommand{\cM}{\mathcal{M}}
\newcommand{\cS}{\mathscr{S}}
\newcommand{\cMS}{\mathcal{S}_{\widehat{\mathcal{M}}}}
\newcommand{\NN}{\mathbb{N}}
\newcommand{\ZZ}{\mathbb{Z}}
\newcommand{\RR}{\mathbb{R}}
\newcommand{\CC}{\mathbb{C}}
\newcommand{\cSp}{\cS^{\prime}}
\newcommand{\Cc}{C_c(\RR^d)}
\newcommand{\Cu}{C_u(\RR^d)}
\newcommand{\dd}{\,\mathrm{d}}
\newcommand{\ii}{\ts\mathrm{i}\ts}
\newcommand{\ts}{\hspace{0.5pt}}
\newcommand{\id}{\mathtt{id}}
\newcommand{\n}{\text{-}}
\newcommand{\norm}[1]{\Vert #1 \Vert}
\newcommand{\ds}{\displaystyle}
\newcommand{\econv}{\circledast}
\begin{document}

\title{Diffraction theory and almost periodic distributions}

\author{Nicolae Strungaru}
\address{Department of Mathematical Sciences, MacEwan University \\
10700 “ 104 Avenue, Edmonton, AB, T5J 4S2;\\
and \\
Department of Mathematics\\
Trent University \\
Peterborough, ON
and \\
Institute of Mathematics ``Simon Stoilow'' \\
Bucharest, Romania}
\email{strungarun@macewan.ca}
\urladdr{http://academic.macewan.ca/strungarun/}

\author{Venta Terauds}
\address{Fakult\"at f\"ur Mathematik, Universit\"at Bielefeld, Postfach 100131, 33501 Bielefeld, Germany}
\email{terauds@math.uni-bielefeld.de}

\begin{abstract} We introduce and study the notions of translation bounded tempered distributions, and autocorrelation for a tempered distrubution. We further introduce the spaces of weakly, strongly and null weakly almost periodic tempered distributions and show that for weakly almost periodic tempered distributions the Eberlein decomposition holds. For translation bounded measures all these notions coincide with the classical ones. We show that tempered distributions with measure Fourier transform are weakly almost periodic and that for this class, the Eberlein decomposition is exactly the Fourier dual of the Lesbegue decomposition, with the Fourier-Bohr coefficients specifying the pure point part of the Fourier transform. We complete the project by looking at few interesting examples.
\end{abstract}

\maketitle

\section{Introduction}

Diffraction is a key tool in the study of non-periodic long-range order, and the mathematical theory of almost periodic measures and almost periodic functions that underpins this is well-developed \cite{ARMA}; see \cite{MoSt} for a recent overview of the theory. Physical structures that possess a diffraction pattern are generally represented by translation bounded measures. Any translation bounded measure has at least an autocorrelation and hence a diffraction \cite{hof95}; see \cite{book} for a comprehensive introduction and overview of the subject.

Essential to the study of diffraction is the theory of the Fourier transform in the setting of tempered distributions. In many of the established results, the proofs rely on the fact that the original translation bounded measure is tempered as a distribution and that the autocorrelation is positive definite. Therefore, it is natural to ask whether some results about the diffraction of translation bounded measures can be extended to tempered distributions.
In \cite{ter} it was proven that given any diffraction measure supported on an infinite point set, one may construct non-measure tempered distributions with that diffraction. Via this inverse problem technique \cite{lenzmoo} one obtains plenty of non-trivial tempered distributions with a diffraction, but not a general theory in which they fit.

In this paper we consider when and how one may define autocorrelation (and hence diffraction) for a tempered distribution, and whether known results about the diffraction of translation bounded measures can be extended to tempered distributions.
We note that the constructions given in \cite{ter} all have (translation bounded) measure autocorrelation. However, the autocorrelation of a tempered distribution need not, in general, be a measure. As long as a tempered distribution is positive definite, the Bochner Schwartz theorem ensures that its Fourier transform is a positive tempered measure and thus can be understood as a diffraction.

The concept of translation boundedness for tempered distributions is formally defined in the second section. Here we consider some examples of such distributions and show that they have some properties analogous to those of translation bounded measures. In fact a translation bounded measure is always a translation bounded tempered distribution, although of course the converse does not hold.

In the third section, we give a formal definition of autocorrelation for a tempered distribution, and some sufficient conditions for existence of an autocorrelation. The properties of positive definite tempered distributions are considered in Section 4, and in Section 5 we introduce the concept of weak almost periodicity for tempered distributions. We here construct the Eberlein decomposition for tempered distributions: as for measures, a weakly almost periodic tempered distribution may be uniquely decomposed into the sum of a null weakly almost periodic and a strongly almost periodic tempered distribution.

A tempered distribution whose Fourier transform is a measure is necessarily weakly almost periodic. In Section 6, we show that for such a distributions, the Fourier transform carries the Eberlein decomposition of the distribution precisely into the Lebesgue decomposition of the Fourier dual measure. In Section 7, by extending the concept of mean to weakly almost periodic distributions, we further show that for distributions with measure Fourier transform, the Fourier Bohr coefficients of the distribution correspond to the intensity of the atoms in the pure point part of the measure. We thus give a simple characterisation of null weakly almost periodic tempered distributions with measure Fourier trasnform, and subsequently, in Section 8 use this to show that if a weakly almost periodic distribution has a diffraction, it is necessarily pure point. In the final section, we apply the results of the paper in the consideration of several examples, and describe several classes of tempered distributions for which an autocorrelation exists.

Throughout the paper we use $\cS$ to denote the space of Schwartz functions on $\RR^d$ and $\cSp$ for its dual, the space of tempered distributions on $\RR^d$. When we wish to emphasise a particular $\RR^d$, we write $\cS(\RR^d)$. By $\Cc$, we mean the space of all continuous functions of compact support, and $\cD = \cS\cap\Cc$ is the space of infinitely differentiable functions of compact support on $\RR^d$. The space of all bounded, uniformly continuous functions on $\RR^d$ is denoted by $\Cu$.

Recall that for $\psi\in\cSp$, $f\in\cS$, the function $\psi\ast f \in C^{\infty}(\RR^d)$ is defined by
\[ \psi\ast f(t) = \psi(T_t f_{\n}) \,,\]
where $f_{\n}(s) := f(-s)$, $s\in\RR^d$. We use $\lambda$ to denote Lesbegue measure on $\RR^d$.

\section{Translation Boundedness for Tempered Distributions}

In this section we introduce and study the notion of translation boundedness for tempered distributions, based on the existing concept for measures. Tempered distributions, with Schwartz functions as test functions, are the natural objects to work with here, as for many applications we need to work with Fourier transforms. In any case, most of the results in this section can be extended to distributions by working with functions in $\cD$. We plan to study this in a future project.

In the case of measures, there are two equivalent definitions for translation boundedness: a measure $\mu$ on $\RR^d$ is said to be translation bounded if the function $\mu\ast f$ is bounded for each $f\in \Cc$, if and only if the set $\{|\mu|(K+x):x\in\RR^d\}$ is bounded for each compact $K\subseteq\RR^d$. We shall use a version of the first condition to define translation boundedness for tempered distributions. Although the latter condition does not have a direct analogue in this setting, we will show that as for measures, there are several equivalent conditions that may be used to test for translation boundedness.

Let us start by recalling that the topology on the space of test functions $\cS$ is given by the family of norms $\norm{\cdot}_{M,N}$, defined for $M,N \in \ZZ_+$ and $f\in\cS$ by
\[
\| f \|_{M,N}:= \sup_{|\alpha| \leq M, |\beta| \leq N} \sup_{x\in\RR^d}| x^\alpha D^\beta f(x) |\,.
\]
We may sometimes write $\|x^\alpha D^\beta f \|_\infty$ as shorthand for $\sup_{x\in\RR^d}| x^\alpha D^\beta f(x) |$. The topology on $\cSp$ is the weak-$\ast$ topology, that is, the topology induced by the linear functionals $\{ L_f : f\in\cS\}$ on $\cSp$, where for $\psi\in\cSp$, $L_f(\psi):= \psi(f)$. We are now ready to define the notion of translation boundedness for a tempered distribution.

\begin{definition}
A tempered distribution $\psi$ is said to be {\em translation bounded on  $\cS$} if for all $f \in \cS$ the convolution $\psi\ast f$ is a bounded function.
\end{definition}

The next result shows that, as for measures, a tempered distribution is translation bounded if and only if its convolution with every test function is uniformly continuous and bounded.

\begin{proposition}\label{tb Prop 1}
For a tempered distribution $\psi\in \cSp$, the following are equivalent.
\begin{enumerate}
\item[(i)] $\psi$ is translation bounded on $\cS$.
\item[(ii)] For all $f\in\cS$ we have $\psi\ast f \in \Cu$.
\item[(iii)] There exist $M, N \in\ZZ_+$ and $C>0$ such that for all $f \in \cS$
\[
\| \psi *f \|_\infty \leq C \| f \|_{M, N}\,.
\]
\item[(iv)]  The functionals $\{ T_t \psi : t\in\RR^d\}$ are equi-continuous on $\cS$.

\end{enumerate}
\end{proposition}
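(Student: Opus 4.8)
The plan is to prove the equivalences by establishing a cycle of implications, with the equicontinuity condition (iv) serving as the natural hub. The key technical fact underlying everything is the relationship between the convolution $\psi \ast f$ and the action of translated functionals: since $\psi \ast f(t) = \psi(T_t f_{\n})$, boundedness of $\psi \ast f$ is literally the statement that the family $\{(T_{-t}\psi)(f_{\n}) : t \in \RR^d\}$ is bounded, and uniform continuity relates to continuity of $t \mapsto T_t \psi$ in an appropriate sense. I would aim to prove the chain (iii) $\Rightarrow$ (iv) $\Rightarrow$ (ii) $\Rightarrow$ (i) $\Rightarrow$ (iii), closing the loop.

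For (i) $\Rightarrow$ (iii), this is where I expect the main obstacle to lie. We are given that $\psi \ast f$ is bounded for every $f \in \cS$, i.e. that each linear functional $f \mapsto \sup_{t}|\psi(T_t f_{\n})|$ is finite, and we want a single pair $M,N$ and constant $C$ controlling all of them uniformly. This is a uniform boundedness phenomenon, so the natural tool is the Banach–Steinhaus theorem (or the closed graph / Baire category argument adapted to Fréchet spaces). Concretely, I would consider the family of seminorms $p_t(f) := |\psi \ast f(t)| = |\psi(T_t f_{\n})|$ on the Fréchet space $\cS$; each $p_t$ is continuous, and by hypothesis $\sup_t p_t(f) < \infty$ for every fixed $f$. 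Banach–Steinhaus for Fréchet spaces then yields equicontinuity of the family $\{p_t\}$, which unwinds precisely to an estimate of the form $\sup_t |\psi \ast f(t)| \le C\|f\|_{M,N}$ for some fixed $M,N,C$. The care needed here is to ensure the family is genuinely a family of continuous seminorms and that the Fréchet-space version of the principle applies; the map $t \mapsto T_t f_{\n}$ is continuous from $\RR^d$ into $\cS$, which guarantees continuity of each $p_t$ in $f$.

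For the remaining implications the work is more routine. The equivalence (iii) $\Leftrightarrow$ (iv) is essentially a restatement: condition (iv) says that $\{T_t\psi\}$ is equicontinuous on $\cS$, meaning there exist $M,N,C$ with $|(T_t\psi)(g)| \le C\|g\|_{M,N}$ for all $t$ and all $g \in \cS$; substituting $g = f_{\n}$ and using that the seminorms $\|\cdot\|_{M,N}$ are invariant (up to relabelling indices) under the reflection $f \mapsto f_{\n}$ converts this directly into the estimate in (iii). For (iv) $\Rightarrow$ (ii), I would use equicontinuity to upgrade pointwise boundedness to uniform continuity of $\psi \ast f$: the difference $\psi\ast f(t) - \psi\ast f(s) = \psi\bigl(T_t f_{\n} - T_s f_{\n}\bigr)$ is controlled, via the equicontinuity estimate, by $C\|T_t f_{\n} - T_s f_{\n}\|_{M,N}$, and the continuity of translation in the Schwartz topology makes this small when $t-s$ is small, uniformly in $t$; boundedness follows from the same estimate with $s=0$ absorbed. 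Finally (ii) $\Rightarrow$ (i) is trivial since $\Cu \subseteq \Cb$, so a uniformly continuous bounded function is in particular bounded, completing the cycle.
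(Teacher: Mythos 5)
Your overall architecture is sound and genuinely different from the paper's: for (i) $\Rightarrow$ (iii) you invoke Banach--Steinhaus on the pointwise-bounded family of functionals $f \mapsto \psi(T_t f_{\n})$ (legitimate, since $\cS$ is a Fr\'echet space, hence barrelled), whereas the paper gets (iii) from (ii) by a closed-graph argument and gets (ii) from (i) by the derivative identity $D^\alpha(\psi\ast f) = \psi \ast D^\alpha f$. However, your step (iv) $\Rightarrow$ (ii) contains a genuine error. You bound
\[
|\psi\ast f(t) - \psi \ast f(s)| \le C\,\|T_t f_{\n} - T_s f_{\n}\|_{M,N}
\]
and claim the right-hand side is small when $t-s$ is small, \emph{uniformly in $t$}, by continuity of translation in the Schwartz topology. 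That uniformity is false: the norms $\|\cdot\|_{M,N}$ are not translation invariant, so $t \mapsto T_t g$ is continuous but not uniformly continuous from $\RR^d$ into $\cS$. Concretely, writing $t = s+h$, one has $T_t f_{\n} - T_s f_{\n} = T_s\bigl(T_h f_{\n} - f_{\n}\bigr)$, and for fixed $h \neq 0$ the quantity $\|T_s(T_h f_{\n} - f_{\n})\|_{M,N}$ grows like $|s|^M$ as $|s| \to \infty$ (take $f$ a Gaussian and $M \geq 1$: the polynomial weight $x^\alpha$ blows up in the region where $T_h f_{\n} - f_{\n}$ still has mass). So your estimate proves uniform continuity only on compact sets, not on all of $\RR^d$.

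The repair is to use hypothesis (iv) in full, rather than only the single functional $\psi$ (the $t=0$ member of the family). Write
\[
\psi\ast f(t) - \psi\ast f(s) = \psi\bigl(T_s(T_{t-s}f_{\n} - f_{\n})\bigr) = (T_{-s}\psi)\bigl(T_{t-s}f_{\n} - f_{\n}\bigr)
\]
and apply the equicontinuity bound to the family $\{T_{-s}\psi : s \in \RR^d\}$ at the \emph{fixed} test function $T_{t-s}f_{\n} - f_{\n}$; this yields $|\psi\ast f(t) - \psi\ast f(s)| \le C\|T_{t-s}f_{\n}-f_{\n}\|_{M,N}$, which tends to $0$ as $t-s \to 0$ by continuity of translation at the origin, uniformly in $s$. (Alternatively, you could substitute the paper's route (i) $\Rightarrow$ (ii): all derivatives $D^\alpha(\psi\ast f) = \psi\ast D^\alpha f$ are bounded, hence $\psi\ast f$ is Lipschitz and thus uniformly continuous.) With this correction your cycle closes, and the remaining steps of your proposal are correct.
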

\begin{proof}
(i)$\implies$(ii): Let $\psi$ be translation bounded and take $f\in\cS$. Recall that for any multi-index $\alpha$, we have
\[ D^{\alpha}(\psi\ast f) = \psi \ast D^{\alpha}f\,.
\]
As for each $f \in \cS$ and multi-index $\alpha$ we have $D^\alpha f \in \cS$, translation boundedness provides for each $f$ and $\alpha$ a constant $C=C_{f,\alpha}$ such that
\[ \|D^{\alpha}(\psi\ast f)\|_{\infty} \leq C \,.  \]
This implies that all the partial derivatives of $\psi*f$ exist and are bounded, and thus that $\psi*f$ is uniformly continuous.

(ii)$\implies$(i) and (iii)$\implies$(i) are trivial.

(ii)$\implies$(iii): We follow here the argument of \cite[Thm.~1.1]{ARMA1}.
By (ii), the mapping $F(f)=\psi*f$ is a linear transformation from $\cS$ into $\Cu$, which (as is easy to check) has closed graph. Therefore it is continuous. Thus there exist $M,N$ and $C$ such that for all $f \in \cS$,
\[
\| \psi *f \|_\infty \leq C \| f \|_{M, N} \,.
\]

(iii)$\Leftrightarrow$(iv): Using the standard boundedness condition equivalent to equi-continuity for functionals in locally convex topological vector spaces, $\{ T_t \psi : t\in\RR^d \}$ is equi-continuous if and only if there exist some $M,N$ and $C$ such that for all $t \in \RR^d$ and all $f \in \cS$ we have
\[
\left|T_t(f) \right| \leq C \| f \|_{M,N} \,.
\]
Replacing $f$ by $f_-$, this becomes exactly the equivalence (iii)$\Leftrightarrow$(iv).
\end{proof}

Proposition \ref{tb Prop 1} (iii) implies that for each $\psi\in\cSp_\infty$, there exist constants $M,N$ and $C$ such that for all $f\in\cS$, $f\neq 0$,
\[
\frac{ \| \psi * f \|_\infty}{\| f \|_{M,N} } \leq C\,.
\]
This suggests the following definition of a norm for translation bounded tempered distributions.

\begin{definition}\label{def tb norm}
For $M,N \in \ZZ_+$, we define $\norm{\cdot}_{M,N}: \cSp\to\RR_+$ by
\[
\| \psi \|_{M,N} := \sup_{ f \in \cS, f \neq 0} \frac{ \| \psi * f \|_\infty}{\| f \|_{M,N} } \,, \;\; \psi\in\cSp
\]
and
\[
\cSp_{M,N} := \{ \psi \in \cSp : \| \psi \|_{M,N} < \infty \} \,.
\]
When $M=N$ we shall simply write $\norm{\cdot}_{N}$ and $\cSp_{N}$ instead of $\norm{\cdot}_{N,N}$ and $\cSp_{N,N}$.
\end{definition}

The following lemma summarises the properties of $\| \cdot \|_{M,N}$ and $\cSp_{M,N}$.

\begin{lemma}\label{lem cSpM,N}
\begin{enumerate}
\item[(i)] If $M_1 < M_2$ and $N_1 < N_2$ then
\[
 \| \psi \|_{M_2,N_2}  \leq  \| \psi \|_{M_1,N_1} \; \mbox{ for all } \, \psi \in \cSp_{M_1,N_1}\,,
\]
and
\[
\cSp_{M_1,N_1} \subseteq \cSp_{M_2,N_2} \,.
\]
\item[(ii)] If $M,N \in \ZZ_+$ and $P=\min\{M,N\}, Q=\max\{M,N\}$ then
\[
\cSp_{P} \subseteq\cSp_{M,N} \subseteq \cSp_{Q}\,.
\]
\item[(iii)]
\[
\cSp_\infty  =\bigcup_{M,N \in \ZZ_+} \cSp_{M,N}=\bigcup_{N \in \ZZ_+} \cSp_{N} \
\]
\item[(iv)] For any $M,N \in \ZZ_+$, $\| \cdot \|_{M,N}$ is a norm on $\cSp_{M,N}$.
\item[(v)] For any $M,N \in \ZZ_+$, the norm  $ \| \cdot \|_{M,N}$ defines a topology stronger than the weak-$\ast$ topology on $\cSp_{M,N}$.
\item[(vi)] For any $M,N \in \ZZ_+$, if $\{\psi_n\}$ is a Cauchy sequence in $(\cSp_{M,N}, \| \cdot \|_{M,N})$ then there exists some $\psi \in \cSp_{M,N}$ such that for all $f \in \cS$ we have
    \[
    \psi_n*f \to \psi*f \, \mbox{ in } (\Cu, \| \cdot \|_\infty)
    \]
\end{enumerate}
\end{lemma}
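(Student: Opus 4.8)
The plan is to obtain parts (i)--(iii) from the monotonicity of the test-function seminorms, dispatch parts (iv)--(v) as routine normed-space verifications resting on a single evaluation identity, and reserve the real work for the completeness statement (vi). For (i), the starting observation is that for fixed $f$ the quantity $\|f\|_{M,N}$ is nondecreasing in both $M$ and $N$ (already for non-strict inequalities), since enlarging the ranges of $\alpha$ and $\beta$ can only enlarge the supremum. Hence enlarging $M,N$ shrinks every ratio $\|\psi*f\|_\infty/\|f\|_{M,N}$, and taking the supremum over $f$ gives $\|\psi\|_{M_2,N_2}\le\|\psi\|_{M_1,N_1}$; the inclusion of spaces is then immediate. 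Part (ii) applies this same monotonicity with $P\le M,N\le Q$, and part (iii) combines (ii) with Proposition~\ref{tb Prop 1}(iii): a distribution is translation bounded exactly when some bound $\|\psi*f\|_\infty\le C\|f\|_{M,N}$ holds, i.e.\ exactly when $\psi$ lies in some $\cSp_{M,N}$, and by (ii) each such space sits inside some $\cSp_{Q}$.

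For (iv), homogeneity and the triangle inequality follow at once from $(c\psi)*f=c(\psi*f)$ and $(\psi_1+\psi_2)*f=\psi_1*f+\psi_2*f$. The only nontrivial point is definiteness: if $\|\psi\|_{M,N}=0$ then $\psi*f=0$ for every $f\in\cS$, and evaluating at $0$ via $\psi*f(0)=\psi(f_{\n})$ gives $\psi(f_{\n})=0$ for all $f$; since $f\mapsto f_{\n}$ is a bijection of $\cS$, this forces $\psi=0$. The same evaluation drives (v): for any $g\in\cS$, writing $g=(g_{\n})_{\n}$ and using $\|g_{\n}\|_{M,N}=\|g\|_{M,N}$ yields $|\psi(g)|=|\psi*g_{\n}(0)|\le\|\psi\|_{M,N}\,\|g\|_{M,N}$, so each evaluation functional $L_g$ is $\|\cdot\|_{M,N}$-continuous; hence the norm topology makes every $L_g$ continuous and is therefore stronger than the weak-$\ast$ topology.

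The substance is (vi). Given a $\|\cdot\|_{M,N}$-Cauchy sequence $\{\psi_n\}$, I would first fix $f$ and note $\|\psi_n*f-\psi_m*f\|_\infty\le\|\psi_n-\psi_m\|_{M,N}\,\|f\|_{M,N}$, so that $\{\psi_n*f\}$ is Cauchy in the Banach space $(\Cu,\|\cdot\|_\infty)$ and converges to some $g_f\in\Cu$. To build the limiting distribution, I would use the inequality from (v) to see that $\{\psi_n(f)\}$ is Cauchy in $\CC$ for every $f$, and define $\psi(f):=\lim_n\psi_n(f)$; linearity is clear, and since a Cauchy sequence is bounded, $C:=\sup_n\|\psi_n\|_{M,N}<\infty$ gives $|\psi(f)|\le C\|f\|_{M,N}$, so $\psi\in\cSp$. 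The key identification is then $\psi*f=g_f$: for each $t$,
\[
\psi*f(t)=\psi(T_t f_{\n})=\lim_n\psi_n(T_t f_{\n})=\lim_n\psi_n*f(t)=g_f(t)\,.
\]
Finally $\|\psi*f\|_\infty=\lim_n\|\psi_n*f\|_\infty\le C\|f\|_{M,N}$ shows $\psi\in\cSp_{M,N}$, and the convergence $\psi_n*f\to\psi*f$ in $(\Cu,\|\cdot\|_\infty)$ is exactly the convergence to $g_f$.

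The main obstacle I anticipate is precisely this identification step. The two limits produced---the $\Cu$-limits $g_f$ and the pointwise/weak-$\ast$ limit $\psi$---are constructed separately, and one must verify that the convolutions of $\psi$ recover the $g_f$. This requires interchanging the limit in $n$ with evaluation of the functional at the fixed test function $T_t f_{\n}$, which is justified by the pointwise convergence $\psi_n(T_t f_{\n})\to\psi(T_t f_{\n})$ built into the definition of $\psi$; once this interchange is in place, everything else in (vi) is bookkeeping with the Cauchy estimate and the uniform bound $C$.
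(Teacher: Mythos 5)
Your proof is correct, and for parts (i)--(v) it follows essentially the same route as the paper (same monotonicity of the seminorms $\|f\|_{M,N}$ for (i)--(iii), same evaluation identity $|\psi(g)|=|\psi*g_{\n}(0)|\le\|\psi\|_{M,N}\|g\|_{M,N}$ for (v); your explicit check of definiteness in (iv) fills in what the paper dismisses as trivial). The one genuine difference is in (vi), at the point where the limit distribution is produced. The paper argues: by (v) the sequence is weak-$\ast$ Cauchy in $\cSp$, ``and thus converges vaguely to some $\psi\in\cSp$'' --- an appeal to the weak-$\ast$ sequential completeness of $\cSp$, which is a standard but nontrivial fact (it rests on the Banach--Steinhaus theorem for the dual of a Fr\'echet space). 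You instead construct $\psi$ by hand: $\{\psi_n(f)\}$ is Cauchy in $\CC$ for each $f$, you set $\psi(f):=\lim_n\psi_n(f)$, and you get continuity of $\psi$ directly from the uniform bound $|\psi(f)|\le C\|f\|_{M,N}$, since $\|\cdot\|_{M,N}$ is one of the defining seminorms of $\cS$. This makes your argument self-contained where the paper's cites an external completeness theorem; the trade-off is that the paper's version generalizes verbatim to nets or to settings where one cannot write down the limit functional explicitly, whereas yours exploits the concrete uniform estimate available here. After that, both proofs coincide again: the identification $\psi*f(t)=\lim_n\psi_n*f(t)=g_f(t)$ of the pointwise limit with the uniform limit, and the passage of the bound $\|\psi_n*f\|_\infty\le C\|f\|_{M,N}$ to the limit, are exactly the paper's closing steps.
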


\begin{proof}
(i) follows from Definition \ref{def tb norm} and the observation that for $f \in \cS$ we have
\[
 \| f \|_{M_1,N_1}  = \sup_{|\alpha| \leq M_1, |\beta| \leq N_1} \| x^\alpha D^\beta f \|_\infty  \leq \sup_{|\alpha| \leq M_2, |\beta| \leq N_2} \| x^\alpha D^\beta f \|_\infty  =  \| \psi \|_{M_2,N_2} \,.
\]

(ii) is an immediate consequence of (i).

(iii): The equality
\[
\cSp_\infty =\bigcup_{M,N \in \ZZ_+} \cSp_{M,N}\,
\]
follows from Proposition \ref{tb Prop 1}. The second equality follows from (ii).

(iv): It is trivial to check that $\| \cdot \|_{M,N}$ is a norm.

(v): This follows immediately from
\[
\left| \psi (f) \right| \leq \| \psi * f_- \|_\infty \leq \| \psi \|_{M,N} \| f \|_{M,N} \,.
\]

(vi): Let $\{\psi_n\}$ be a Cauchy sequence in $(\cSp_{M,N}, \| \cdot \|_{M,N})$. Then there exists a constant $C$ such that $\| \psi_n \|_{M,N} \leq C$ for all $n$.
By (v), $\{\psi_n\}$ is also a vague Cauchy sequence in $\cSp$ and thus converges vaguely to some $\psi \in \cSp$.

Let $f\in\cS$. We show that the sequence $\{\psi_n*f\}$ converges uniformly to $\psi*f$.
Firstly, as $\psi_n \to \psi$ in the weak-$*$ topology, we see that $\{\psi_n*f\}$ converges pointwise to $\psi*f$.

Moreover, for all $m,n$ we have
\[
\| (\psi_n-\psi_m)\ast f \|_\infty \leq \| \psi_n-\psi_m \|_{M,N} \| f\|_{M,N} \,.
\]
Then $\{\psi_n*f\}$ is a Cauchy sequence in $(\Cu, \| \cdot \|_\infty)$ and must converge uniformly to some $g \in \Cu$.
As $\psi_n*f \to g$ uniformly it follows that  $\psi_n*f \to g$ pointwise.

We have that $\psi_n*f$ converges pointwisely to both $g$ and $\psi*f$. Thus
\[
\psi*f = g \in \Cu \,,
\]
that is, $\psi_n*f$ converges uniformly to $\psi*f \in \Cu$.

Finally, as for all $n$,
\[
\| \psi_n *f\|_\infty \leq C \| f\|_{M,N} \,,
\]
the limit $\psi*f$ must also satisfy this inequality: we have
\[
\| \psi *f \|_\infty \leq C \| f\|_{M,N} \,,
\]
and see that $\psi \in \cSp_{M,N}$.
\end{proof}

From the lemma, we see that it is sufficient to consider the case $M=N$. Accordingly, for the remainder of the paper we shall work with the spaces $\cSp_{N}$ and norms $\norm{\cdot}_{N}$.

It is easy to find examples of translation bounded tempered distibutions. For example, as with measures, all compactly supported tempered distributions are translation bounded. This is a direct consequence of the following lemma.

\begin{lemma}\label{conv compact supp is Schwartz}
Let $f \in\cS$ and $\phi\in\cSp$ be a distribution with compact support. Then $\phi*f \in\cS$.
\end{lemma}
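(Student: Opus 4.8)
The plan is to use that smoothness of $\phi * f$ is already guaranteed by the preliminaries (where it is recorded that $\psi*f\in C^{\infty}(\RR^d)$ for any $\psi\in\cSp$), so that only the rapid decay of $\phi*f$ together with all of its derivatives remains to be established. The crucial extra ingredient, which distinguishes this from a general $\psi\in\cSp$, is that a compactly supported distribution has finite order: I would first recall the standard fact that, since $\phi$ has compact support, there exist a constant $C>0$, an integer $m\in\NN$, and a compact set $K\subseteq\RR^d$ (a neighbourhood of $\supp\phi$) such that
\[
|\phi(\varphi)| \;\le\; C \max_{|\beta|\le m}\ \sup_{x\in K}\bigl| D^\beta \varphi(x)\bigr|
\qquad\text{for every } \varphi\in\cS .
\]
The point of compact support is precisely that the suprema are localised to the fixed compact $K$ and carry no polynomial weights $x^\alpha$; the global Schwartz seminorm estimate $|\phi(\varphi)|\le C\|\varphi\|_{M,N}$ would not suffice, as the weights $x^\alpha$ destroy the decay in $t$ below.

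With this in hand, for each multi-index $\gamma$ and each $t\in\RR^d$ I would write $D^\gamma(\phi*f)(t)=\bigl(\phi*(D^\gamma f)\bigr)(t)=\phi\bigl(T_t (D^\gamma f)_{\n}\bigr)$ and apply the finite-order estimate to the test function $\varphi=T_t (D^\gamma f)_{\n}$. Since $\varphi(x)=(D^\gamma f)(t-x)$, differentiation gives $D^\beta\varphi(x)=(-1)^{|\beta|}(D^{\beta+\gamma} f)(t-x)$, whence
\[
\bigl| D^\gamma(\phi*f)(t)\bigr| \;\le\; C \max_{|\beta|\le m}\ \sup_{x\in K}\bigl|(D^{\beta+\gamma} f)(t-x)\bigr| .
\]
Choosing $R>0$ with $K\subseteq\{x:|x|\le R\}$, one has $1+|t|\le(1+R)(1+|t-x|)$ for all $x\in K$, so for any $k\in\NN$ the Schwartz bound $\sup_{y}(1+|y|)^{k}|D^{\beta+\gamma}f(y)|<\infty$ yields
\[
(1+|t|)^{k}\bigl|(D^{\beta+\gamma} f)(t-x)\bigr| \;\le\; (1+R)^{k}\,\sup_{y}(1+|y|)^{k}\bigl|D^{\beta+\gamma}f(y)\bigr|,
\]
a bound uniform in $t$ and in $x\in K$. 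Hence $(1+|t|)^{k}\,|D^\gamma(\phi*f)(t)|$ is bounded uniformly in $t$. As $\gamma$ and $k$ are arbitrary, $\phi*f$ and all of its derivatives are rapidly decreasing, and together with smoothness this gives $\phi*f\in\cS$.

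The main obstacle is the first step: securing the finite-order continuity estimate for $\phi$ with suprema over a fixed compact $K$ (rather than the weighted global Schwartz seminorm), and then correctly tracking the translation so that the supremum over $x\in K$ becomes a supremum over the translate $t-K$, on which the rapid decay of the Schwartz derivatives of $f$ can be exploited. An alternative route would replace the finite-order estimate by the structure theorem for compactly supported distributions, writing $\phi=\sum_{|\alpha|\le m}D^\alpha g_\alpha$ with each $g_\alpha$ continuous of compact support; then $\phi*f=\sum_{|\alpha|\le m} g_\alpha*(D^\alpha f)$, and the claim reduces to the elementary fact that the convolution of a compactly supported integrable function with a Schwartz function is again a Schwartz function.
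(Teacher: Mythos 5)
Your proof is correct, but it takes a genuinely different route from the paper's. The paper argues entirely on the Fourier side: it cites H\"ormander for $\phi\ast f\in\cSp\cap C^\infty(\RR^d)$ and $\widehat{\phi\ast f}=\widehat{\phi}\,\widehat{f}$, invokes the Paley--Wiener--Schwartz fact that $\widehat{\phi}$ is a smooth function that is slowly increasing (tempered together with all derivatives), concludes that the product $\widehat{\phi}\,\widehat{f}$ is again Schwartz, and then pulls back by the inverse Fourier transform. You instead work on the space side: the finite-order estimate $|\phi(\varphi)|\le C\max_{|\beta|\le m}\sup_{x\in K}|D^\beta\varphi(x)|$ for compactly supported distributions, applied to $\varphi=T_t(D^\gamma f)_{\n}$, localises everything to the translate $t-K$, where the rapid decay of the Schwartz seminorms of $f$ can be transferred to $\phi\ast f$ via the elementary inequality $1+|t|\le(1+R)(1+|t-x|)$. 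Your observation that the global tempered estimate $|\phi(\varphi)|\le C\|\varphi\|_{M,N}$ would not suffice is exactly right --- with the polynomial weights one only recovers polynomial growth of $\phi\ast f$, which is the content of the paper's Lemma \ref{lem conv in S}, not membership in $\cS$; it is precisely the weight-free local seminorms that encode compact support. Each approach has its advantages: the paper's is shorter modulo the citations and stays inside the Fourier-analytic framework used throughout the paper, while yours is more self-contained, makes the decay mechanism explicit, avoids the Fourier transform entirely, and therefore adapts directly to the $\cD'$ setting (no temperedness needed), which is relevant to the extension to general distributions that the authors mention in Section 2. Your alternative sketch via the structure theorem $\phi=\sum_{|\alpha|\le m}D^\alpha g_\alpha$ is also valid and would reduce the claim to convolution of compactly supported continuous functions with Schwartz functions.
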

\begin{proof}
We know that $\phi\ast f \in \cSp \cap C^\infty(\RR^d)$ \cite[Thm.~4.1.1 and Thm.~7.1.15]{Hor}, and that
\[
\widehat{\phi\ast f}=\widehat{\phi}\widehat{f}  \,.
\]

Also, we have $\widehat{f} \in \cS$ and $\widehat{\phi} \in C^\infty(\RR^d)$ \cite[Thm.~7.1.14]{Hor}.

Now, since $\widehat{\phi} \in C^\infty(\RR^d)$ is tempered as a distribution and $\widehat{f} \in \cS$, it follows immediately that $\widehat{\phi}\widehat{f} \in \cS$ (see for example the proof of \cite[Thm.~7.1.14]{Hor}). Therefore, its inverse Fourier Transform, $\phi\ast f$, is also in $\cS$.
\end{proof}

\begin{proposition}\label{lem_compact_supp}
Let $\phi\in\cSp$ have compact support. Then $\phi$ is translation bounded on $\cS$.
\end{proposition}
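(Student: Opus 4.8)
The plan is to obtain the result as an immediate corollary of Lemma \ref{conv compact supp is Schwartz}. By definition, $\phi$ is translation bounded on $\cS$ precisely when $\phi * f$ is a bounded function for every $f \in \cS$, so this is all that needs to be verified.

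First I would fix an arbitrary $f \in \cS$. Since $\phi \in \cSp$ has compact support, Lemma \ref{conv compact supp is Schwartz} applies directly and yields $\phi * f \in \cS$. Every Schwartz function is bounded, so in particular $\| \phi * f \|_\infty < \infty$; that is, $\phi * f$ is a bounded function. As $f \in \cS$ was arbitrary, this shows that $\phi * f$ is bounded for every test function, which is exactly the definition of translation boundedness on $\cS$.

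There is no genuine obstacle at this stage: all of the analytic work has already been carried out in Lemma \ref{conv compact supp is Schwartz}, where the key points are that $\widehat{\phi}$ is a smooth function of at most polynomial growth and that its product with the Schwartz function $\widehat{f}$ remains in $\cS$, so that the inverse Fourier transform $\phi * f$ lands back in $\cS$. Given that lemma, the proposition follows at once, and indeed the text preceding the statement already flags it as a direct consequence.
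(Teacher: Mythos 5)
Your proof is correct and matches the paper's own argument exactly: fix $f\in\cS$, apply Lemma \ref{conv compact supp is Schwartz} to get $\phi\ast f\in\cS$, and conclude boundedness since every Schwartz function is bounded. Nothing is missing.
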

\begin{proof}
Let $f\in\cS$. By Lemma \ref{conv compact supp is Schwartz}, $\phi\ast f\in\cS$ and is therefore bounded.
\end{proof}

Applying Lemma \ref{conv compact supp is Schwartz} again, we see that the class of translation bounded tempered distributions is closed under convolution with compactly supported distributions.

\begin{proposition}\label{conv is also tb}
Let $\psi\in\cSp_{\infty}$ and $\phi$ be a distribution with compact support. Then $\phi*\psi\in\cSp_{\infty}$.
\end{proposition}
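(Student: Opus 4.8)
The plan is to reduce everything to the translation boundedness of $\psi$ by means of the associativity of convolution. Since $\phi$ has compact support, the convolution $\phi\ast\psi$ is a well-defined tempered distribution \cite{Hor}, so it is meaningful to ask whether it lies in $\cSp_\infty$. By Proposition \ref{tb Prop 1}, it suffices to show that $(\phi\ast\psi)\ast f$ is a bounded function for every $f\in\cS$.

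First I would fix $f\in\cS$ and invoke Lemma \ref{conv compact supp is Schwartz} to conclude that $\phi\ast f\in\cS$. The key identity to establish is
\[
(\phi\ast\psi)\ast f = \psi\ast(\phi\ast f)\,.
\]
Granting this, the right-hand side is the convolution of the translation bounded distribution $\psi\in\cSp_\infty$ with the Schwartz function $\phi\ast f$, and hence is bounded by the definition of $\cSp_\infty$. As $f\in\cS$ was arbitrary, this shows that $(\phi\ast\psi)\ast f$ is bounded for all $f\in\cS$, and therefore $\phi\ast\psi\in\cSp_\infty$.

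The hard part will be justifying the associativity identity, since $\psi$ and $f$ both fail to have compact support and the elementary associativity theorem for distributional convolution does not apply directly. I would circumvent this by passing to the Fourier side. Using $\widehat{\phi\ast\psi}=\widehat{\phi}\,\widehat{\psi}$, where $\widehat{\phi}\in C^\infty(\RR^d)$ is a slowly increasing smooth function \cite{Hor}, together with the convolution theorem for a tempered distribution against a Schwartz function, I would compute
\[
\widehat{(\phi\ast\psi)\ast f}= \widehat{\phi}\,\widehat{\psi}\,\widehat{f} = \widehat{\psi}\,\bigl(\widehat{\phi}\,\widehat{f}\bigr)= \widehat{\psi}\,\widehat{\phi\ast f}= \widehat{\psi\ast(\phi\ast f)}\,.
\]
Here the middle equalities merely regroup the multiplication of the distribution $\widehat{\psi}$ by the smooth multipliers $\widehat{\phi}$ and $\widehat{f}$, which is unambiguous and associative; note also that $\widehat{\phi}\,\widehat{f}=\widehat{\phi\ast f}\in\cS$ since $\phi\ast f\in\cS$. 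Applying the inverse Fourier transform then yields the desired identity, completing the argument.
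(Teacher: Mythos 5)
Your proof is correct and follows essentially the same route as the paper: reduce to boundedness of $(\phi\ast\psi)\ast f$ for arbitrary $f\in\cS$, rewrite this function as $\psi\ast(\phi\ast f)$, and conclude via Lemma \ref{conv compact supp is Schwartz} and the translation boundedness of $\psi$. The only difference is that the paper asserts the associativity identity $(\phi\ast\psi)\ast f=\psi\ast(\phi\ast f)$ outright (``as $\phi$ has compact support''), whereas you justify it on the Fourier side using $\widehat{\phi\ast\psi}=\widehat{\phi}\,\widehat{\psi}$ and the smooth, slowly increasing multiplier $\widehat{\phi}$ --- a sound filling-in of a step the paper leaves implicit.
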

\begin{proof}
Let $f\in\cS$ be arbitrary. Then, as $\phi$ has compact support, we have
\[
(\phi*\psi)*f= \psi * (\phi *f) \,,
\]
and by Lemma \ref{conv compact supp is Schwartz} $\phi*f \in \cS$. Therefore, as $\psi$ is translation bounded, the function $(\phi*\psi)*f =\psi * (\phi *f)$ is bounded. This completes the proof.
\end{proof}

The class of translation bounded tempered distributions is also closed under taking derivatives.

\begin{proposition}\label{prop Dphi is also tb}
Let $\psi\in\cSp_{\infty}$. Then for any multi-index $\alpha$, $D^{\alpha}\psi\in\cSp_{\infty}$.
\end{proposition}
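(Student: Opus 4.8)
The plan is to reduce the statement about $D^{\alpha}\psi$ to the already-established translation boundedness of $\psi$, by transferring the derivative from the distribution onto the test function via the standard identity for convolutions of distributions. The key observation is that for any $f\in\cS$ and any multi-index $\alpha$ we have
\[
(D^{\alpha}\psi)\ast f = \psi\ast(D^{\alpha}f)\,.
\]
This is exactly the same identity used in the proof of (i)$\implies$(ii) in Proposition \ref{tb Prop 1}, now applied with the roles arranged so that the derivative lands on $f$ rather than on $\psi\ast f$.

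First I would fix an arbitrary $f\in\cS$ and note that, since $\cS$ is closed under differentiation, we have $D^{\alpha}f\in\cS$. Then I would invoke the identity above to write $(D^{\alpha}\psi)\ast f = \psi\ast(D^{\alpha}f)$. Since $\psi\in\cSp_{\infty}$ is translation bounded, the function $\psi\ast(D^{\alpha}f)$ is bounded (indeed uniformly continuous, by Proposition \ref{tb Prop 1}(ii)). Hence $(D^{\alpha}\psi)\ast f$ is bounded for every $f\in\cS$, which is precisely the definition of translation boundedness for $D^{\alpha}\psi$. One should also note at the outset that $D^{\alpha}\psi\in\cSp$, so that the statement makes sense: differentiation maps tempered distributions to tempered distributions.

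The proof is essentially routine, so there is no serious obstacle; the only point requiring a little care is the justification of the convolution identity $(D^{\alpha}\psi)\ast f = \psi\ast(D^{\alpha}f)$ at the level of distributions. This follows by unwinding the definition $\psi\ast f(t)=\psi(T_t f_{\n})$ together with the defining relation $(D^{\alpha}\psi)(g)=(-1)^{|\alpha|}\psi(D^{\alpha}g)$ and the commutation of $D^{\alpha}$ with translation and with the reflection $f\mapsto f_{\n}$ (the latter producing a sign $(-1)^{|\alpha|}$ that cancels the sign from the derivative of the distribution). Since this identity is standard and was already used implicitly in Proposition \ref{tb Prop 1}, I would state it and apply it directly rather than reprove it in detail.
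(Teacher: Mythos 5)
Your proposal is correct and follows essentially the same route as the paper: both proofs rest on the identity $(D^{\alpha}\psi)\ast f = \psi\ast(D^{\alpha}f)$ (with the two factors of $(-1)^{|\alpha|}$ cancelling) and then invoke translation boundedness of $\psi$ applied to $D^{\alpha}f\in\cS$. Your extra remarks on justifying the convolution identity are fine but add nothing beyond what the paper's one-line argument already contains.
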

\begin{proof}
Recall that for a multi-index $\alpha$, $D^{\alpha}\psi$ is the tempered distribution defined for $f\in\cS$ by
$D^{\alpha}\psi(f):= (-1)^{|\alpha|}\psi(D^\alpha f)$.
Then for any $f\in\cS$, we have
\[ \norm{D^{\alpha}\psi \ast f}_{\infty} = \norm{\psi \ast D^{\alpha}f}_{\infty}\,,\]
which is finite as $D^{\alpha}f\in\cS$ and $\psi$ is translation bounded.
\end{proof}

We now examine the relationship between the two types of translation boundedness available to a tempered measure and see that, somewhat counter-intuitively, it is harder for a measure to be translation bounded as a measure than as a tempered distribution. It is well known that a translation bounded measure is tempered; we further show that such a measure must be translation bounded as a tempered distribution. However, there exist tempered measures that are translation bounded as tempered distributions but not as measures.

\begin{proposition}\label{tb measures}
For a measure $\mu$ on $\RR^d$, the following are equivalent.
\begin{itemize}
\item[(i)] $\mu$ is a translation bounded measure.
\item[(ii)] $|\mu|$ is a translation bounded measure.
\item[(iii)] $|\mu|$ is a tempered distribution, which is translation bounded on $\cS$.
\end{itemize}
\end{proposition}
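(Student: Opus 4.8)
The plan is to prove the two nontrivial directions (ii)$\implies$(iii) and (iii)$\implies$(ii), and to dispose of (i)$\Leftrightarrow$(ii) by unwinding definitions. Indeed, translation boundedness of a measure is, by the characterisation recalled in the introduction, a property of its total variation, namely $\sup_{x\in\RR^d}|\mu|(K+x)<\infty$ for every compact $K$. Since $\bigl|\,|\mu|\,\bigr|=|\mu|$, the condition ``$|\mu|$ is a translation bounded measure'' is literally the same statement as ``$\mu$ is a translation bounded measure''. So (i)$\Leftrightarrow$(ii) needs no work.

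For (ii)$\implies$(iii), I would write $\nu:=|\mu|$, a positive translation bounded measure, and set $c:=\sup_{x\in\RR^d}\nu\bigl([0,1]^d+x\bigr)<\infty$. Given $f\in\cS$, I decompose $\RR^d$ into the unit cells $n+[0,1)^d$ with $n\in\ZZ^d$ and estimate
\[
\bigl|(\nu*f)(x)\bigr|\;\le\;\int|f(x-y)|\dd\nu(y)\;\le\; c\sum_{n\in\ZZ^d}\sup_{y\in n+[0,1)^d}|f(x-y)|\,.
\]
Substituting $u=x-y$, the cells $x-n+(-1,0]^d$ run over a translate of the lattice tiling of $\RR^d$, so the sum becomes $\sum_{m\in\ZZ^d}\sup_{u\in x+m+(-1,0]^d}|f(u)|$. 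Using the rapid decay $|f(u)|\le C_f(1+|u|)^{-(d+1)}$, each supremum is controlled by $(1+|x+m|)^{-(d+1)}$ up to a constant, and $\sum_{m\in\ZZ^d}(1+|x+m|)^{-(d+1)}$ is $\ZZ^d$-periodic in $x$, hence uniformly bounded. This same estimate, applied to $\int|f|\dd\nu$, shows $\nu$ is a tempered distribution, and the uniform-in-$x$ bound on $\nu*f$ then gives $\nu\in\cSp_\infty$. The main obstacle is exactly here: producing the bound \emph{uniformly} in $x$ and identifying that it is controlled by a single fixed Schwartz seminorm (essentially $\|f\|_{0,d+1}$); everything else in this direction is bookkeeping.

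For (iii)$\implies$(ii), I assume $\nu=|\mu|$ is tempered and translation bounded on $\cS$, and recover measure translation boundedness by testing against smooth bumps. Fix a compact $K$ and choose $g\in\cD=\cS\cap\Cc$ with $g\ge 0$ and $g\ge 1$ on $K$. Since $\nu$ is a positive measure, the distributional convolution agrees with the measure convolution on $\cD\subseteq\cS$, so with $g_{\n}(s):=g(-s)$,
\[
\nu(K+x)=\int \mathbf{1}_K(y-x)\dd\nu(y)\;\le\;\int g(y-x)\dd\nu(y)=(\nu* g_{\n})(x)\,.
\]
As $g_{\n}\in\cS$, the right-hand side is a bounded function by hypothesis, whence $\sup_{x}\nu(K+x)\le\|\nu* g_{\n}\|_\infty<\infty$. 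Since $K$ was an arbitrary compact set, $\nu=|\mu|$ is a translation bounded measure, giving (ii) and closing the cycle.
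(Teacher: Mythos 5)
Your proof is correct and takes essentially the same route as the paper: the (ii)$\implies$(iii) direction is the same unit-cell decomposition plus rapid-decay estimate (the paper simply puts the translation parameter into the measure factor, bounding $|\mu|\bigl(t-n-[-\tfrac{1}{2},\tfrac{1}{2}]^d\bigr)\le \|\mu\|_{[-\frac{1}{2},\frac{1}{2}]^d}$, so that the sum over suprema of $|f|$ is independent of $t$ and no periodicity argument is needed), and the (iii)$\implies$(ii) direction is the identical bump-domination argument, with the paper citing \cite{ARMA1} for temperedness where you prove it directly. One trivial correction: in the paper's convention $\|f\|_{M,N}$ (powers first, derivatives second), the seminorm controlling the decay $|f(u)|\le C_f(1+|u|)^{-(d+1)}$ is $\|f\|_{d+1,0}$, not $\|f\|_{0,d+1}$.
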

\begin{proof}

(ii) $\Rightarrow$ (i) is obvious and (i) $\Rightarrow$ (ii) is standard; see for example \cite[Proposition 1.12]{bergforst}.

(iii) $\Rightarrow$ (ii): For each compact $K \subseteq \RR^d$ we pick some $f \in \cS$ with $f_- \geq 1_K$. Then, for all $t \in \RR^d$ we have
\[
\left| \mu \right|(t+K) \leq \left| \mu \right| (T_t f_-) \leq \| |\mu|*f \|_\infty \,.
\]
By assumption, $|\mu|*f $ is bounded, so that $\{ \left| \mu \right|(t+K) : t\in\RR^d\}$ is bounded and thus $\mu$ is a translation bounded measure.

(ii) $\Rightarrow$ (iii): Let $f \in \cS$. Firstly, any translation bounded measure is a tempered distribution \cite[Thm.~7.1]{ARMA1}.
Then for all $t \in \RR^d$ we have
\begin{align*}
\left| \left| \mu \right| * f(t) \right| &= \left|\int_{\RR^d}  f(s) d \left| \mu \right| (t-s) \right| \\
  &\leq \sum_{n \in \ZZ^d} \int_{n+[-\frac{1}{2},\frac{1}{2}]^d} \left|  f(s) \right| d \left| \mu \right| (t-s) \\
  &\leq \sum_{n \in \ZZ^d} \int_{n+[-\frac{1}{2},\frac{1}{2}]^d} \sup_{s \in n+[-\frac{1}{2},\frac{1}{2}]^d} \{ \left|  f(s) \right|\}  d \left| \mu \right| (t-s) \\
  &\leq \sum_{n \in \ZZ^d}  \sup_{s \in n+[-\frac{1}{2},\frac{1}{2}]^d} \{ \left|  f(s) \right|\}   \left| \mu \right| (t-n- [-\frac{1}{2},\frac{1}{2}]^d) \\
  &\leq \left( \sum_{n \in \ZZ^d}  \sup_{s \in n+[-\frac{1}{2},\frac{1}{2}]^d} \{ \left|  f(s) \right|\} \right)   \| \mu \|_{[-\frac{1}{2},\frac{1}{2}]^d} \,. \\
\end{align*}
Now, since $f$ is rapidly decaying,
\[
\sum_{n \in \ZZ^d}  \sup_{s \in n+[-\frac{1}{2},\frac{1}{2}]^d} \{ \left|  f(s) \right|\}  < \infty \,.
\]
Therefore
\[
\| \left| \mu \right| * f \|_\infty \leq \left( \sum_{n \in \ZZ^d}  \sup_{s \in n+[-\frac{1}{2},\frac{1}{2}]^d} \{ \left|  f(s) \right|\} \right) \| \mu \|_{[-\frac{1}{2},\frac{1}{2}]^d}  < \infty \,.
\]
\end{proof}

\begin{corollary}\label{TB measures cor} If $\mu$ is a translation bounded measure then $\mu$ is a tempered distribution which is translation bounded on $\cS$.
\end{corollary}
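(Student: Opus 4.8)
The plan is to read this result straight off Proposition \ref{tb measures}, since that proposition has already done the substantive work. The equivalence (i)$\Leftrightarrow$(iii) established there shows that whenever $\mu$ is a translation bounded measure, $|\mu|$ is a tempered distribution that is translation bounded on $\cS$. What remains is merely to transfer these two properties from $|\mu|$ to $\mu$ itself, which I expect to be routine given the pointwise domination of $\mu$ by $|\mu|$.

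First I would dispose of temperedness: a translation bounded measure is tempered --- this is precisely the fact cited from \cite[Thm.~7.1]{ARMA1} in the proof above --- so $\mu \in \cSp$. (Alternatively, one obtains this from the bound $|\mu(f)| \leq |\mu|(|f|)$ together with the temperedness of $|\mu|$ supplied by Proposition \ref{tb measures}.) Next, for translation boundedness on $\cS$, I would fix $f \in \cS$ and use the pointwise estimate
\[
|\mu * f(t)| = \Bigl| \int_{\RR^d} f(s) \dd\mu(t-s) \Bigr| \leq \int_{\RR^d} |f(s)| \dd|\mu|(t-s) \,.
\]
The right-hand side is $|\mu| * |f|(t)$, and one bounds it uniformly in $t$ by exactly the computation carried out in the proof of (ii)$\Rightarrow$(iii) above, which uses only the quantities $\sup_{s \in n+[-\frac{1}{2},\frac{1}{2}]^d} |f(s)|$ and the rapid decay of $f$. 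Hence $\| \mu * f \|_\infty < \infty$, so $\mu$ is translation bounded on $\cS$.

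The only subtlety --- and the reason the conclusion is not quite immediate from applying translation boundedness of $|\mu|$ directly to $|f|$ --- is that $|f|$ need not be a Schwartz function, so the quantity $|\mu| * |f|$ is not literally covered by the statement ``$|\mu|$ is translation bounded on $\cS$''. I would resolve this in one of two ways: either re-run the explicit rapid-decay estimate as above (which never requires smoothness of the test function), or dominate $|f|$ by a genuine Schwartz function $g$ --- for instance $g(x) = C(1+|x|^2)^{-k}$ with $k$ large and $C$ chosen so that $g \geq |f|$ pointwise, which is possible since $f$ is rapidly decaying and bounded --- and then apply translation boundedness of $|\mu|$ to $g$, giving $|\mu * f(t)| \leq |\mu| * g(t) \leq \| |\mu| * g \|_\infty < \infty$. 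Either route makes the transfer from $|\mu|$ to $\mu$ immediate, and I regard flagging this minor smoothness gap as the main conceptual point of an otherwise short argument.
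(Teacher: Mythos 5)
Your proposal is correct, and its primary route is precisely the paper's proof: temperedness of a translation bounded measure, the domination $|\mu\ast f(t)| \leq |\mu|\ast|f|(t)$, and a re-run of the explicit rapid-decay computation from Proposition~\ref{tb measures}, (ii)$\Rightarrow$(iii), with $f$ replaced by $|f|$ --- which, as you correctly note, never uses smoothness of the test function. One caution about your alternative route: $g(x) = C(1+|x|^2)^{-k}$ is \emph{not} a Schwartz function (it decays only polynomially, whereas Schwartz functions must decay faster than every polynomial), so that particular majorant cannot be fed to the translation boundedness of $|\mu|$ on $\cS$; a genuine Schwartz majorant $g \geq |f|$ does exist (the paper invokes exactly this fact in the proof of Proposition~\ref{tb measures have strong}), but it requires a less explicit construction, so the estimate-re-running route you give first is the cleaner of your two options.
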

\begin{proof} Since $\mu$ is translation bounded, it is tempered as a distribution. Also for all $f \in \cS$ we have
\[
\left|\mu\ast f(t) \right| \leq  \left| \mu \right|\ast |f| (t) \,.
\]
To complete the proof, we repeat the computation of Proposition~\ref{tb measures}, (ii) $\Rightarrow$ (iii) with $f$ replaced by $|f|$, and obtain
\[
\left| \mu \right| * \left|f\right|(t)
	      \leq \left( \sum_{n \in \ZZ^d}  \sup_{s \in n+[-\frac{1}{2},\frac{1}{2}]^d} \{ \left|  f(s) \right|\} \right)   \| \mu \|_{[-\frac{1}{2},\frac{1}{2}]^d} \,. \\
\]

Exactly as in the proof of Proposition~\ref{tb measures}, (ii) $\Rightarrow$ (iii), this implies that
\[
\|\mu\ast f\|_\infty \leq \infty  \,.
\]
\end{proof}

\begin{remark}\label{tb as td not meas}
The converse of Corollary \ref{TB measures cor} is not true. In \cite[Prop.~7.1]{ARMA1}, the authors introduce a positive definite measure $\mu$ on $\RR$ which which is a tempered distribution, but is not translation bounded as measure.
Corollary \ref{cor posdef is uc} below will show that this measure $\mu$ is, however, translation bounded as a tempered distribution.
\end{remark}

Directly from Corollary \ref{TB measures cor} and Proposition \ref{prop Dphi is also tb}, we have the following.

\begin{corollary}\label{cor Dtbmeasure is tb}
If $\mu$ is a translation bounded measure on $\RR^d$, then for any multi-index $\alpha$, $D^\alpha\mu$ is a translation bounded tempered distribution.
\end{corollary}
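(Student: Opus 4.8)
The plan is to chain the two cited results and do essentially nothing else. First I would invoke Corollary \ref{TB measures cor} to move from the measure-theoretic hypothesis into the distributional framework: since $\mu$ is a translation bounded measure, that corollary supplies both that $\mu$ is a tempered distribution and that it is translation bounded on $\cS$. In the notation fixed after Lemma \ref{lem cSpM,N}, this says precisely that $\mu\in\cSp_{\infty}$, the identification of ``translation bounded on $\cS$'' with membership in $\cSp_{\infty}$ being exactly Lemma \ref{lem cSpM,N}(iii).

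Having placed $\mu$ inside $\cSp_{\infty}$, I would then apply Proposition \ref{prop Dphi is also tb} directly with $\psi=\mu$. That proposition is the statement that $\cSp_{\infty}$ is stable under each differentiation operator $D^{\alpha}$, so it yields $D^{\alpha}\mu\in\cSp_{\infty}$ for every multi-index $\alpha$. Translating back, $D^{\alpha}\mu$ is a translation bounded tempered distribution, which is the assertion of the corollary.

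There is no real obstacle here: the result is a formal consequence of the two preceding statements, and the only point requiring a little care is the bookkeeping identification of translation boundedness on $\cS$ with the hypothesis $\psi\in\cSp_{\infty}$ demanded by Proposition \ref{prop Dphi is also tb}. If one preferred a self-contained argument in place of citing that proposition, one would simply note that for any $f\in\cS$ we have $D^{\alpha}\mu\ast f=\mu\ast D^{\alpha}f$ with $D^{\alpha}f\in\cS$, so that boundedness of $D^{\alpha}\mu\ast f$ is immediate from the translation boundedness of $\mu$; but invoking the earlier proposition keeps the chain of implications transparent.
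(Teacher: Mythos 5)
Your proposal is correct and follows exactly the paper's route: the paper derives this corollary directly from Corollary~\ref{TB measures cor} and Proposition~\ref{prop Dphi is also tb}, precisely the chain you describe. Your optional self-contained remark (using $D^{\alpha}\mu\ast f=\mu\ast D^{\alpha}f$) is just the proof of Proposition~\ref{prop Dphi is also tb} unwound, so it adds nothing beyond the cited results.
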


\begin{example}\label{eg DdeltaZ}
One can easily compute directly that $\psi\in\cSp(\RR)$ defined by
\[ \psi(f) := \sum_{k\in\ZZ}f'(k)\,,\;\; f\in\cS(\RR)\,,\]
is a translation bounded tempered distribution. Alternatively, by observing that $\psi = -D\delta_{\ZZ}$, we may simply apply the above result.
\end{example}

We shall present some further examples of translation bounded tempered distributions in the subsequent sections.

\section{Autocorrelation}

As with translation boundedness, our definition of autocorrelation for a tempered distribution will be based on that for measures. Recall that the autocorrelation of a measure $\mu$ on $\RR^d$ is defined as the volume averaged convolution $\mu\econv\widetilde{\mu}$, where for $g\in\Cc$,
\[ (\mu\econv\widetilde{\mu})(g) := \lim_{R\to\infty} \tfrac{1}{\vol(B_R)} (\mu_R \ast \widetilde{\mu}_R )(g)\,.\]
Here, $\mu_R$ is the restriction of $\mu$ to the ball $B_R := B_R(0)$, $\widetilde{\mu}(g) := \overline{\mu(\widetilde{g})}$ and $\widetilde{g}:= \overline{g}_{\n}$. The restriction of a measure to a set makes sense due to regularity; that is, regularity ensures that $\mu_R(g) = \mu(1_{B_R} g)$ is well-defined for any $g\in\Cc$. For an arbitrary tempered distribution $\psi$ and $g\in\cS$, however, $\psi(1_{B_R}g)$ is not defined, so to generalise the definition of autocorrelation, we need a different approach to restriction.

\begin{definition}
We say that $\{ h_R \}_{R\geq 1}$ is a \emph{smooth approximate van Hove family} if for each $R$,
\begin{itemize}
\item[(i)] $h_R \in \cD$ and
\item[(ii)] $1_{B_R} \leq h_R \leq 1_{\overline{B_{R+1}}}$.
\end{itemize}
\end{definition}

Given $\psi\in\cSp$ and such a family $\{h_R\}$, each tempered distribution $h_R\psi$, defined for $f\in\cS$ by $h_R\psi(f):= \psi(h_R f)$, has support contained in $B_{R+1}$ and agrees with $\psi$ on the set $\{f\in\cS : \supp(f)\subseteq B_R\}$. Using this, we can define the autocorrelation of a tempered distribution just as for a measure.

\begin{definition}
Given a tempered distribution $\psi$, we say that $\phi$ is an {\em autocorrelation} of $\psi$ if there exists a smooth approximate van Hove family $\{h_R\}$ and a positive sequence $\{R_n\}\to\infty$ such that
\begin{equation}\label{def autocor}
\phi = \lim_{n\to\infty} \frac{1}{\vol(B_{R_n})} \left( h_{R_n} \psi\right)* ( \widetilde{h_{R_n} \psi})   \,
\end{equation}
in the weak-$\ast$ topology in $\cSp$.
\end{definition}

It is not clear to us if every translation bounded tempered distribution possesses an autocorrelation. We can show that any translation bounded tempered distribution would have an autocorrelation in the weak-* topology of $\cD'$, but then the limit would be a distribution that is not necessarily tempered, and hence we could not use the Fourier transform. We will show later that every translation bounded measure possesses autocorrelations in the tempered distribution sense, and that the autocorrelations as tempered distribution and translation bounded measures coincide. Anyhow, for tempered distributions in general the existence of an autocorrelation is still an open problem.

We shall return to such considerations presently. In general we shall assume the existence of an autocorrelation.

As with measures, a tempered distribution may have many autocorrelations, each being a cluster point when $R \to \infty$ of the set
\[ \left\{ \tfrac{1}{\vol(B_R)} \left( h_R \psi\right)* ( \widetilde{h_R \psi}) : R\geq 1\right\} \]
in $\cSp$, for some smooth van Hove family $\{h_R\}$.

\begin{lemma}\label{ac implies precompactness} Let $\psi\in\cSp$ and let $\{h_R\}$ be a  smooth approximate van Hove family. Then $\psi$ has an autocorrelation with respect to $\{ h_R \}$ if and only if there exists some sequence $\{R_n\} \to \infty$ such that the set
\[
\left\{ \tfrac{1}{\vol(B_{R_n})} \left( h_{R_n} \psi\right)* ( \widetilde{h_{R_n} \psi}) : n\in\NN \right\}
\]
is pre-compact in the weak-$\ast$ topology.
\end{lemma}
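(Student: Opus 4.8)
The plan is to prove the equivalence directly from the definition of autocorrelation, observing that the statement essentially unpacks what it means for a limit to exist in a topological space in terms of convergent subsequences. Let me write $A_R := \tfrac{1}{\vol(B_R)} \left( h_R \psi\right)* ( \widetilde{h_R \psi})$ for brevity, so that the set in question is $\{A_{R_n} : n\in\NN\}$.

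For the forward direction, suppose $\psi$ has an autocorrelation $\phi$ with respect to $\{h_R\}$. By definition, there exists a sequence $\{R_n\}\to\infty$ with $A_{R_n} \to \phi$ in the weak-$\ast$ topology. I would then argue that a convergent sequence, together with its limit, forms a pre-compact set: any subsequence of $\{A_{R_n}\}$ again converges to $\phi$, so every sequence drawn from $\{A_{R_n} : n\in\NN\}$ has a subsequence converging (to $\phi$, or to some $A_{R_m}$ if the subsequence is eventually constant). Hence the set is pre-compact. The one subtlety here is that $\cSp$ with the weak-$\ast$ topology need not be metrisable, so I should phrase pre-compactness via the topological definition (every net/filter has a convergent subnet, or equivalently the closure is compact) rather than relying on sequential compactness. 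The cleanest route is to note that the closure of $\{A_{R_n} : n\in\NN\}$ is contained in the compact set $\{A_{R_n} : n\in\NN\}\cup\{\phi\}$, which is compact because it is a convergent sequence together with its limit.

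For the reverse direction, suppose there is a sequence $\{R_n\}\to\infty$ such that $\{A_{R_n} : n\in\NN\}$ is pre-compact in the weak-$\ast$ topology. Then the closure of this set is compact. I would extract from $\{A_{R_n}\}$ a convergent subnet, or more carefully, use compactness to produce a cluster point $\phi$ of the sequence. Since $\phi$ is a cluster point, I can pass to a subsequence $\{R_{n_k}\}$ along which $A_{R_{n_k}} \to \phi$ (again being careful that in a non-metrisable space a cluster point of a sequence need not be a subsequential limit in general, so I would instead directly select a subsequence $\{R_{n_k}\}\to\infty$ realising the convergence, using that the relevant subspace may be taken to be first-countable on the bounded pieces, or simply invoking that the image lies in a compact set). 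The sequence $\{R_{n_k}\}$ is itself a positive sequence tending to infinity, so $\phi = \lim_{k\to\infty} A_{R_{n_k}}$ is an autocorrelation of $\psi$ with respect to $\{h_R\}$ by definition.

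The main obstacle I anticipate is the gap between pre-compactness and the existence of a convergent subsequence in the weak-$\ast$ topology, which is not metrisable on all of $\cSp$. I would resolve this by restricting attention to a suitable bounded or equicontinuous family: the elements $A_{R_n}$ are built from translation bounded data, so one expects them to lie in a weak-$\ast$ metrisable (indeed, by the separability of $\cS$, the weak-$\ast$ topology restricted to any equicontinuous—hence relatively weak-$\ast$ compact—subset of $\cSp$ is metrisable) subset, on which pre-compactness and sequential pre-compactness coincide. Verifying that $\{A_{R_n}\}$ is equicontinuous, or at least lands in a weak-$\ast$ metrisable piece, is the step that requires genuine care; once it is in place, the equivalence follows from the elementary topology of convergent sequences sketched above.
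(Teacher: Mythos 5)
The paper offers no actual argument here (its ``proof'' is the single sentence that the claim is straightforward and will be skipped), so your proposal is being measured against the standard argument the authors presumably had in mind --- and yours is essentially that argument. Writing $A_R := \tfrac{1}{\vol(B_R)}\left(h_R\psi\right)\ast(\widetilde{h_R\psi})$, your forward direction is complete and correct: in the Hausdorff space $\cSp$ with the weak-$\ast$ topology, the set $\{A_{R_n} : n\in\NN\}\cup\{\phi\}$ is compact (a convergent sequence together with its limit), hence closed, hence contains the closure of $\{A_{R_n} : n\in\NN\}$, which is therefore compact.

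In the reverse direction, the step you flag --- passing from a cluster point to a genuine convergent subsequence in a non-metrisable topology --- is indeed the crux, but your proposed way of filling it needs one correction. You suggest obtaining equicontinuity (and hence metrisability of the relevant piece) from the fact that the $A_{R_n}$ are ``built from translation bounded data''; however, the lemma is stated for arbitrary $\psi\in\cSp$, with no translation boundedness hypothesis, so no uniform bound is available from that source. The correct source of equicontinuity is the pre-compactness hypothesis itself: the weak-$\ast$ closure $K$ of $\{A_{R_n} : n\in\NN\}$ is compact, hence pointwise bounded (for each $f\in\cS$ the evaluation map $\varphi\mapsto\varphi(f)$ is weak-$\ast$ continuous, so it maps $K$ onto a compact, hence bounded, subset of $\CC$), and since $\cS$ is a Fr\'{e}chet space, hence barrelled, the Banach--Steinhaus theorem upgrades pointwise boundedness of $K$ to equicontinuity. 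Then, exactly as you say, separability of $\cS$ makes the weak-$\ast$ topology on the equicontinuous set $K$ metrisable, so $K$ is sequentially compact and you may extract a subsequence with $A_{R_{n_k}}\to\phi$ and $R_{n_k}\to\infty$, which is by definition an autocorrelation of $\psi$ with respect to $\{h_R\}$. With that substitution your proof is complete; note also that the paper implicitly relies on the same sequential-extraction fact in its proof of Lemma \ref{conv on cD implies cS}, where a convergent subsequence is pulled out of a weak-$\ast$ compact subset of $\cSp$, so this level of rigour is consistent with the paper's own usage.
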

\begin{proof}
The proof is straightforward, and we will skip it.
\end{proof}

This lemma suggests the following definition.

\begin{definition}
Given a tempered distribution $\psi$ and a smooth approximate van Hove family $\{ h_R \}$, we say that $\{ h_R \}$ is a {\em strong smooth approximate van Hove family} for $\psi$ if the set
\[
\left\{ \tfrac{1}{\vol(B_R)} \left( h_R \psi\right)* ( \widetilde{h_R \psi})  : R \geq 1 \right\}
\]
is weak-$\ast$ pre-compact in $\cSp$.
\end{definition}

If a tempered distribution admits a strong smooth approximate van Hove family, then it has an autocorrelation.
We will show that the autocorrelation(s) of a tempered distribution are independent of the choice of strong smooth van Hove family.
The following lemma allows us to work with functions of the form $f\ast g$, $f,g\in \cD$.

\begin{lemma}\label{conv on cD implies cS}
Let $X \subseteq \cSp$ be weak-$\ast$ compact, $\{\psi_n\} \subseteq X$ and $\psi \in \cSp$ be such that
\[
\psi_n (f) \to \psi(f)
\]
for all $f$ in  a dense set $D \subseteq \cD$. Then $\psi \in X$ and for all $f \in \cS$, we have
\[
\psi_n (f) \to \psi(f) \,.
\]
\end{lemma}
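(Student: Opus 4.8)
The plan is to deduce both assertions from a single uniform estimate. Since $X$ is weak-$\ast$ compact it is, in particular, weak-$\ast$ bounded (each evaluation $\phi\mapsto\phi(f)$ is weak-$\ast$ continuous, so maps the compact $X$ to a bounded subset of $\CC$). Because $\cS$ is a Fréchet space, hence barreled, the Banach--Steinhaus theorem upgrades this pointwise boundedness to \emph{equi-continuity} of $X$: there are $M,N\in\ZZ_+$ and $C>0$ with $|\phi(f)|\leq C\|f\|_{M,N}$ for every $\phi\in X$ and every $f\in\cS$. This is an equi-continuity condition of the same type as in Proposition \ref{tb Prop 1}(iv), and once it is in hand the result follows from a routine three-$\varepsilon$ argument exploiting the density of $D$.

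First I would record that the bound passes to $\psi$. For $g\in D$ the hypothesis gives $\psi(g)=\lim_n\psi_n(g)$, whence $|\psi(g)|\leq C\|g\|_{M,N}$ for all $g\in D$. Since $D$ is dense in $\cD$ and $\cD$ is dense in $\cS$, the set $D$ is dense in $\cS$; as $\psi$ is continuous and $\|\cdot\|_{M,N}$ is one of the defining seminorms of $\cS$, the inequality extends by continuity to $|\psi(f)|\leq C\|f\|_{M,N}$ for all $f\in\cS$.

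Next, to prove $\psi_n(f)\to\psi(f)$ for an arbitrary $f\in\cS$, fix $\varepsilon>0$ and use density of $D$ to pick $g\in D$ with $\|f-g\|_{M,N}<\varepsilon$. Splitting
\[
|\psi_n(f)-\psi(f)|\;\leq\;|\psi_n(f-g)|+|\psi_n(g)-\psi(g)|+|\psi(g-f)|
\]
and applying the uniform bound to the first and third terms gives $|\psi_n(f)-\psi(f)|\leq 2C\varepsilon+|\psi_n(g)-\psi(g)|$; letting $n\to\infty$ (the middle term vanishes since $g\in D$) and then $\varepsilon\to 0$ yields the claim. Finally $\psi\in X$ is immediate: we have just shown $\psi_n\to\psi$ in the weak-$\ast$ topology, and $X$, being compact in the Hausdorff space $(\cSp,w^\ast)$, is weak-$\ast$ closed. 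The main obstacle is the very first step, the passage from weak-$\ast$ compactness to equi-continuity: this is where the structural input (barreledness of $\cS$ via its Fréchet property, together with Banach--Steinhaus) is essential, for without the resulting uniform seminorm control neither the tail estimate on $|\psi(g-f)|$ nor the bound on $|\psi_n(f-g)|$ could be made uniform in $n$.
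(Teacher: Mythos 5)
Your proof is correct, but it follows a genuinely different route from the paper's. The paper argues by contradiction and pure compactness: assuming $\psi_n(g)\nrightarrow\psi(g)$ for some $g\in\cS$, it extracts (using weak-$\ast$ compactness of $X$) a subsequence converging weak-$\ast$ to some $\psi'\in X$ whose value at $g$ stays outside a neighbourhood of $\psi(g)$; since $\psi'$ and $\psi$ agree on the dense set $D$, continuity forces $\psi'=\psi$, a contradiction. You instead establish a uniform estimate up front: weak-$\ast$ compactness gives pointwise boundedness of $X$, and barreledness of the Fr\'echet space $\cS$ together with Banach--Steinhaus upgrades this to equicontinuity, $|\phi(f)|\leq C\|f\|_{M,N}$ uniformly over $\phi\in X$, after which an $\varepsilon/3$ approximation argument finishes the job. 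Each approach buys something. The paper's argument is ``soft'' (no uniform boundedness principle), but its subsequence extraction from a weak-$\ast$ compact set tacitly needs either a subnet formulation or metrizability of $X$ in the weak-$\ast$ topology --- and that metrizability is usually justified by precisely the equicontinuity you derive explicitly (equicontinuous subsets of $\cSp$ are weak-$\ast$ metrizable because $\cS$ is separable). Your quantitative route is therefore more self-contained on this point, at the cost of invoking Banach--Steinhaus; it also meshes naturally with the norms $\|\cdot\|_{M,N}$ and the equicontinuity condition of Proposition~\ref{tb Prop 1}\,(iv) that the paper has already set up, and it yields as a by-product the explicit bound $|\psi(f)|\leq C\|f\|_{M,N}$ for the limit distribution.
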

\begin{proof} We first prove that $\psi_n (g) \to \psi(g)$ for all $g \in \cS$.

Indeed, assume by contradiction that for some $g\in\cS$, $\{\psi_n(g)\}$ does not converge to $\psi(g)$. Then we can find an open set $U \ni \psi(g)$, and some increasing sequence $\{k_n\}$ such that $\{\psi_{k_n}(g)\} \nsubseteq U$. Since $\{\psi_{k_n}\}\subseteq X$, which is weak-$\ast$ compact, we can find a subsequence $\{\psi_{l_n}\}$ that converges weak-$\ast$ to some $\psi'\in X$. In particular, $\{\psi_{l_n}(g)\}$ converges to $\psi'(g)$.
As $\{\psi_{l_n}(g)\} \nsubseteq U$ and $U$ is open, it follows that $\psi'(g) \notin U$, so that $\psi'(g) \neq \psi(g)$.

Now for all $f \in D$ we have
\[
\psi(f) = \lim_n \psi_n (f) =\lim_n \psi_{l_n} (f)=\psi'(f) \,.
\]
But $D$ is dense in $\cD$ and hence also in $\cS$, so as $\psi, \psi' \in \cSp$, by continuity we get $\psi=\psi'$, which contradicts $\psi'(g) \neq \psi(g)$.

This shows that $\psi_n \to \psi$ in the weak-* topology of $\cSp$. By compactness of $X$ we also get $\psi \in X$.
\end{proof}

\begin{lemma}\label{L4.5}
Let $\psi\in\cSp$ and $\{h_R\}, \{h'_R\}$ be smooth approximate van Hove families. Then
\[
\lim_{R \to \infty} \left( \tfrac{1}{\vol(B_R)} (h_R\psi)*\widetilde{(h_R\psi)}- \tfrac{1}{\vol(B_R)} (h'_R\psi)*\widetilde{(h'_R\psi)} \right) (f*g) =0\,.
\]
for all $f,g \in \cD$.
\end{lemma}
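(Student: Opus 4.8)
The plan is to isolate the contribution of the ``boundary annulus'' on which the two families differ and to show that it is of lower order than $\vol(B_R)$. Write $\psi_R := h_R\psi$, $\psi'_R := h'_R\psi$ and $u_R := h_R - h'_R$. These are all compactly supported in $\overline{B_{R+1}}$, and since both $h_R$ and $h'_R$ lie between $1_{B_R}$ and $1_{\overline{B_{R+1}}}$, the cutoff $u_R$ vanishes on $B_R$ and outside $\overline{B_{R+1}}$; hence $\supp u_R$ lies in the annulus $A_R := \overline{B_{R+1}}\setminus B_R$ and $|u_R|\le 1$. Using that the tilde operation is linear on distributions, a bilinear expansion of $\psi_R = \psi'_R + u_R\psi$ gives
\[
\psi_R * \widetilde{\psi_R} - \psi'_R*\widetilde{\psi'_R} = \psi_R * \widetilde{(u_R\psi)} + (u_R\psi)*\widetilde{\psi'_R}\,,
\]
so it suffices to show that each term, divided by $\vol(B_R)$ and evaluated at $f*g$, tends to $0$.

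Next I would record the convolution identity: for compactly supported $\phi_1,\phi_2\in\cSp$ and $f,g\in\cD$,
\[
(\phi_1*\phi_2)(f*g) = \int_{\RR^d}(\phi_1 * f_{-})(s)\,(\phi_2 * g_{-})(-s)\dd s\,.
\]
This follows from the elementary relation $\mu(h) = (\mu * h_{-})(0)$ applied with $h = f*g$, the factorisation $(f*g)_{-} = f_{-}*g_{-}$, and the associativity of convolution, together with Lemma \ref{conv compact supp is Schwartz}, which guarantees $\phi_i * f_{-},\ \phi_i * g_{-}\in\cS$ so that the integral converges absolutely. Applying this to the two terms reduces the claim to estimating $\int (\psi_R * f_{-})(s)\,(\widetilde{(u_R\psi)}*g_{-})(-s)\dd s$ and $\int ((u_R\psi)*f_{-})(s)\,(\widetilde{\psi'_R}*g_{-})(-s)\dd s$.

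The crucial point is localisation, and this is exactly why the test functions are taken of the form $f*g$ with $f,g\in\cD$. Say $\supp f,\supp g\subseteq B_\rho$. Then $((u_R\psi)*f_{-})(s) = \psi(u_R\,T_s f)$, and $u_R\,T_s f$ is supported in $A_R\cap B_\rho(s)$, which is empty unless $s$ lies in the $\rho$-neighbourhood $N_\rho(A_R)$; the latter is a shell of bounded width, of Lebesgue measure $O(R^{d-1})$. On $N_\rho(A_R)$ the function $u_R T_s f$ is supported in a ball of radius $\rho$, and here I would use translation boundedness of $\psi$: by the equicontinuity of $\{T_t\psi\}$ (Proposition \ref{tb Prop 1}(iv)) there are $N$ and $C$, independent of the location of the ball, with $|\psi(\phi)|\le C\max_{|\beta|\le N}\norm{D^\beta\phi}_\infty$ for every $\phi$ supported in a ball of radius $\rho$. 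Combined with uniform bounds on the derivatives of the cutoffs (as for the standard family $h_R = \eta(|\cdot|-R)$) and the Schwartz norms of $f$, this gives $|((u_R\psi)*f_{-})(s)|\le C_1$ uniformly in $R$ and $s$, supported in $N_\rho(A_R)$; the same estimate bounds $\widetilde{(u_R\psi)}*g_{-}$ and localises it to $N_\rho(A_R)$. The companion factors $(\psi_R * f_{-})(s)$ and $(\widetilde{\psi'_R}*g_{-})(-s)$ are likewise bounded uniformly in $R$ for $s\in N_\rho(A_R)$ by the same equicontinuity estimate, now applied to $h_R T_s f$, respectively $h'_R T_s\widetilde{g}$, which are again supported in radius-$\rho$ balls.

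Putting these together, each of the two integrals is bounded by $C\cdot\vol\bigl(N_\rho(A_R)\bigr) = O(R^{d-1})$, so after dividing by $\vol(B_R)=\omega_d R^d$ the difference is $O(1/R)$ and tends to $0$ as $R\to\infty$. The main obstacle is precisely this uniform estimate: one must bound the pairing of $\psi$ against the windowed translates $u_R T_s f$ uniformly in both the scale $R$ and the translation $s$, which is what forces the use of the equicontinuity of the translates of $\psi$ (translation boundedness) together with control on the derivatives of the van Hove cutoffs; everything else is the bookkeeping that converts the $(d-1)$-dimensional size of the boundary annulus into the decay factor $1/R$.
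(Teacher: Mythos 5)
Your proof takes essentially the same route as the paper's: the same bilinear splitting (the paper writes the difference applied to $f\ast g$ as $\left((h_R\psi\ast f)\ast((\widetilde{h_R\psi}-\widetilde{h'_R\psi})\ast g)\right)+\left(((h_R\psi-h'_R\psi)\ast f)\ast(\widetilde{h'_R\psi}\ast g)\right)$, which is exactly your decomposition via $u_R=h_R-h'_R$), the same localisation of the cross terms to an annulus of bounded width, and the same comparison of the annulus volume $O(R^{d-1})$ against $\vol(B_R)\asymp R^d$, which is what the paper calls the van Hove property of $B_R$.

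The substantive difference lies in how the bounds on the localised factors are handled, and here your extra care is both a strength and, formally, an overreach. The paper merely remarks that the relevant functions are ``continuous and hence bounded on the compact set $\overline{B_{R+R_0+1}}\setminus B_{R-R_0}$'' --- a bound that a priori depends on $R$ --- and then concludes by the van Hove property; strictly speaking this leaves a gap, since an $R$-dependent bound could swamp the factor $R^{d-1}/R^{d}$. You confront precisely this uniformity issue, and your use of the equicontinuity of Proposition \ref{tb Prop 1}(iv) to get a location-independent estimate for $\psi$ against test functions supported in balls of fixed radius is correct. However, to close the estimate you assume (a) that $\psi$ is translation bounded and (b) uniform bounds on the derivatives of the cutoffs $h_R, h'_R$; neither hypothesis appears in Lemma \ref{L4.5}, which allows arbitrary $\psi\in\cSp$ and arbitrary smooth approximate van Hove families (whose definition imposes no derivative control). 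This is not cosmetic: without some such hypotheses the statement can genuinely fail. For instance, on $\RR$ take $\psi=\sum_{n\in\ZZ}|n|\,\delta_n$ (tempered, not translation bounded) and $f,g\in\cD$ with $f\ast g$ supported in $(-1,1)$ and $(f\ast g)(0)>0$; along $R_j=j+\tfrac{1}{2}$ one may choose the two families so that $h_{R_j}$ equals $1$ and $h'_{R_j}$ equals $0$ at the integers $\pm(j+1)$ in the annulus, making the difference of order $R_j^2$, which diverges after division by $\vol(B_{R_j})\asymp R_j$. So your added hypotheses (or something like them) are exactly what is needed; they are consistent with how the paper actually uses the lemma (it is only applied to $\psi\in\cSp_\infty$, in Lemma \ref{strong implies non-strong}), and your condition (b) is the same kind of derivative control the paper later imposes explicitly in Proposition \ref{derivative tb measures}. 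In short: same strategy as the paper, but your version is the one that actually makes the van Hove step rigorous, at the price of hypotheses the lemma as stated does not supply.
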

\begin{proof}

Let $f,g \in \cD$ and choose $R_0>1$ large enough such that the supports of both $f$ and $g$ lie within $B_{R_0}$.
For $R>R_0$, we have $\supp(T_t f) \cap B_{R+1} = \emptyset$ for all $t\not\in B_{R_0+R+1}$ and $\supp(T_t f) \subseteq B_{R}$ for all
$t\in B_{R-R_0}$. Hence
\[
\left(h_R\psi -h'_R\psi \right)*f(t) =0 \quad \mbox{ for } \quad t \notin B_{R+R_0+1} \backslash B_{R-R_0} \,,
\]
and, similarly,
\[
\left(\widetilde{h_R\psi} -\widetilde{h'_R\psi} \right)*g(t) =0 \quad \mbox{ for } \quad t \notin B_{R+R_0+1} \backslash B_{R-R_0} \,.
\]
Now, these functions are continuous and hence bounded on the compact set \\ $\overline{B_{R+R_0+1}} \backslash B_{R-R_0}$, and we have
\begin{eqnarray*}
\begin{split}
&\left( (h_R \psi \ast \widetilde{h_R\psi})-(h'_R\psi\ast\widetilde{h'_R\psi}) \right) \ast (f \ast g)\\
 &= \left((h_R\psi\ast f)\ast (\widetilde{h_R\psi}-\widetilde{h'_R\psi})\ast g\right)+ \left((h_R\psi-h'_R\psi)\ast f \ast (\widetilde{h'_R\psi}\ast g)\right)\,.
\end{split}
\end{eqnarray*}
Therefore, by the van Hove property of $B_R$, we have for all $t\in\RR^d$ that
\[
\lim_R \tfrac{1}{\vol(B_R)} (h_R\psi)*\widetilde{(h_R\psi)}\ast f*g(t)- \tfrac{1}{\vol(B_R)} (h'_R\psi)*\widetilde{(h'_R\psi)}*f*g(t) =0\,.
\]
In particular (taking $t=0$), we have
\[
\lim_R \left(\tfrac{1}{\vol(B_R)} (h_R\psi)*\widetilde{(h_R\psi)}-\tfrac{1}{\vol(B_R)} (h'_R\psi)*\widetilde{(h'_R\psi)}\right)(f*g)=0 \,.
\]
\end{proof}

\begin{lemma}\label{strong implies non-strong} Let $\{h_R\}, \{h'_R\}$ be smooth approximate van Hove families such that $h_R$ is strong for $ \psi \in \cSp_\infty$. Then any autocorrelation $\phi$ of $\psi$ that is calculated with respect to $h'_R$ is also an autocorrelation with respect to $h_R$, and is given by the same choice of sequence $R_n \to \infty$.
\end{lemma}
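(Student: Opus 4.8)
The plan is to show that the sequence $\gamma_n := \tfrac{1}{\vol(B_{R_n})}(h_{R_n}\psi)*\widetilde{(h_{R_n}\psi)}$ converges weak-$\ast$ to $\phi$ along the very same sequence $R_n\to\infty$ that realises $\phi$ for the family $\{h'_R\}$. Writing also $\gamma'_n := \tfrac{1}{\vol(B_{R_n})}(h'_{R_n}\psi)*\widetilde{(h'_{R_n}\psi)}$, the hypothesis is exactly that $\gamma'_n\to\phi$ weak-$\ast$, and the task is to transfer this convergence to $\gamma_n$.

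First I would exploit the strongness hypothesis: since $\{h_R\}$ is strong for $\psi$, the set $\{\gamma_R : R\geq 1\}$ is weak-$\ast$ pre-compact, so its weak-$\ast$ closure $X$ is weak-$\ast$ compact, and in particular every $\gamma_n$ lies in $X$. This compactness is the entire content of the strongness assumption, and it is what will let me upgrade a partial convergence into genuine weak-$\ast$ convergence.

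Next I would compare the two families on product test functions by means of Lemma \ref{L4.5}. For fixed $f,g\in\cD$ that lemma gives $(\gamma_n-\gamma'_n)(f*g)\to 0$, while by hypothesis $\gamma'_n(f*g)\to\phi(f*g)$; adding these yields $\gamma_n(f*g)\to\phi(f*g)$ for all $f,g\in\cD$. Hence $\gamma_n\to\phi$ pointwise on the set $D := \{f*g : f,g\in\cD\}$, and a standard approximate-identity argument (convolving an arbitrary element of $\cD$ with a mollifier supported in shrinking balls) shows that $D$ is dense in $\cD$.

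Finally I would invoke Lemma \ref{conv on cD implies cS} with the compact set $X$, the sequence $\{\gamma_n\}\subseteq X$, the limit $\phi\in\cSp$, and the dense set $D\subseteq\cD$: the lemma delivers both $\phi\in X$ and $\gamma_n(f)\to\phi(f)$ for every $f\in\cS$, i.e.\ $\gamma_n\to\phi$ in the weak-$\ast$ topology of $\cSp$. This is precisely the statement that $\phi$ is an autocorrelation of $\psi$ with respect to $\{h_R\}$, realised by the same sequence $R_n$. The one genuinely delicate point is this last upgrade: weak-$\ast$ convergence in $\cSp$ is not metrizable, and pointwise convergence on a dense set of test functions does \emph{not} in general force full weak-$\ast$ convergence. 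It is exactly the compactness of $X$ furnished by strongness — already packaged into Lemma \ref{conv on cD implies cS} — that bridges this gap; the density of the convolution products in $\cD$ is routine by comparison.
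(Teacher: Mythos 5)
Your proof is correct and follows essentially the same route as the paper's: the paper likewise takes the weak-$\ast$ compact set $X$ furnished by strongness of $\{h_R\}$, uses Lemma \ref{L4.5} to transfer convergence on the products $f*g$, and then invokes Lemma \ref{conv on cD implies cS} to upgrade this to full weak-$\ast$ convergence in $\cSp$. The only difference is that the paper states this combination in one line, whereas you spell out the intermediate steps (including the density of $\{f*g : f,g\in\cD\}$), which the paper leaves implicit.
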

\begin{proof}
We need to show that
\[
 \lim_n \tfrac{1}{\vol(B_{R_n})} \left( h'_{R_n} \psi\right)* ( \widetilde{h'_{R_n} \psi}) =\phi \;\; \mbox{ in } \cSp
\]
implies
\[
 \lim_n \tfrac{1}{\vol(B_{R_n})} \left( h_{R_n} \psi\right)* ( \widetilde{h_{R_n} \psi}) = \phi \;\; \mbox{ in } \cSp\,.
\]
As
\[
X:= \overline{ \left\{ \tfrac{1}{\vol(B_R)} \left( h_R \psi\right)* ( \widetilde{h_R \psi})  : R \geq 1 \right\} }
\]
is weak-$\ast$ compact in $\cSp$, and by the existence of the limit the set
\[
\{ \tfrac{1}{\vol(B_{R_n})} \left( h'_{R_n} \psi\right)* ( \widetilde{h'_{R_n} \psi}) | n \}
\]
is also pre-compact, the claim follows immediately by combining Lemmas \ref{conv on cD implies cS} and \ref{L4.5}.
\end{proof}

\begin{proposition}\label{smooth van Hove is irrelevant} Let $\{h_R\}, \{h'_R\}$ be strong smooth approximate van Hove families for $\psi$.
If $\{R_n\} \to \infty$, then
\[
 \lim_n \frac{1}{\vol(B_{R_n})} \left( h_{R_n} \psi\right)* ( \widetilde{h_{R_n} \psi})  \mbox{ exists }
\]
if and only if
\[
 \lim_n \frac{1}{\vol(B_{R_n})} \left( h'_{R_n} \psi\right)* ( \widetilde{h'_{R_n} \psi})  \mbox{ exists } \,,
\]
and in this case they are the same.
\end{proposition}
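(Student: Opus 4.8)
The plan is to obtain this almost immediately from Lemma~\ref{strong implies non-strong}, using crucially that \emph{both} families are now assumed strong. First I would note that the asserted equivalence is symmetric under interchanging $\{h_R\}$ and $\{h'_R\}$, so it suffices to prove a single implication: that if $\lim_n \tfrac{1}{\vol(B_{R_n})}(h_{R_n}\psi)*(\widetilde{h_{R_n}\psi})$ exists and equals some $\phi$, then the analogous limit for $\{h'_R\}$ exists and equals the same $\phi$.

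Assuming that limit exists, $\phi$ is by definition an autocorrelation of $\psi$ computed with respect to the family $\{h_R\}$ along the sequence $\{R_n\}$. I would then invoke Lemma~\ref{strong implies non-strong}, taking the \emph{strong} family in that lemma to be $\{h'_R\}$ (which is strong by hypothesis) and the arbitrary family to be $\{h_R\}$. The lemma then delivers that this same $\phi$ is an autocorrelation with respect to $\{h'_R\}$, and, importantly, along the identical sequence $\{R_n\}$. Unwinding the definition of autocorrelation, this says precisely that $\lim_n \tfrac{1}{\vol(B_{R_n})}(h'_{R_n}\psi)*(\widetilde{h'_{R_n}\psi}) = \phi$, which is the desired conclusion. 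Exchanging the roles of the two families then yields the reverse implication, and since both directions produce the common value $\phi$, the two limits agree whenever they exist.

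The one point demanding care is the role-assignment in Lemma~\ref{strong implies non-strong}: that lemma transports an autocorrelation from an \emph{arbitrary} family onto a \emph{strong} one, so to pass from $\{h_R\}$ to $\{h'_R\}$ I must apply it with $\{h'_R\}$ as the strong family. It is exactly the hypothesis that \emph{both} families are strong that makes the lemma available in both directions and lets the symmetry argument close. I do not expect a substantive obstacle, since all the analytic content, namely the van Hove cancellation of Lemma~\ref{L4.5} and the $\cD$-to-$\cS$ upgrade of Lemma~\ref{conv on cD implies cS}, has already been packaged into Lemma~\ref{strong implies non-strong}.
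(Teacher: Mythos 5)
Your proposal is correct and is essentially the paper's own argument: the paper's proof consists precisely of applying Lemma~\ref{strong implies non-strong} twice, once in each direction, exploiting that both families are strong. Your role-assignment in the lemma (strong family $=\{h'_R\}$, arbitrary family $=\{h_R\}$, same sequence $\{R_n\}$) and the symmetry step match what the paper intends.
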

\begin{proof}
The claim follows immediately by applying Lemma \ref{strong implies non-strong} twice.
\end{proof}

The previous result shows us that whenever when we deal with strong smooth approximate van Hove families, the autocorrelation is independent of this choice.

If a tempered distribution does not admit an autocorrelation, it cannot have a strong smooth approximate van Hove sequence. It is presently unclear to us if the converse is true, and we do not have a nice characterisation of the distributions that admit strong smooth approximate van Hove families. However, we will show that for translation bounded measures, all smooth approximate van Hove families are strong, and the autocorrelations that they give rise to correspond to those arising from the standard definition for measures.

For the rest of the paper all the smooth approximate van Hove families will be assumed to be strong. This allows us to use the following convention: whenever $\psi \in \cSp_\infty$ and $\{h_R\}$ is a smooth van Hove family, we define
\[
\psi_R:=h_R \psi
\]
for simplicity.

\begin{proposition}\label{tb measures have strong}
Let $\mu \in \cM^\infty(\RR^d)$ and $\{h_R\}$ be a smooth approximate van Hove family. Then $\{h_R\}$ is strong for $\mu$.
\end{proposition}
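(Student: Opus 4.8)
The plan is to show that the family
\[
\gamma_R := \tfrac{1}{\vol(B_R)}\,\mu_R * \widetilde{\mu_R}\,, \qquad \mu_R := h_R\mu\,,
\]
is \emph{equicontinuous} as a family of functionals on $\cS$. By the Banach--Alaoglu(--Bourbaki) theorem an equicontinuous subset of $\cSp$ lies in the weak-$\ast$ compact polar of a $0$-neighbourhood of $\cS$, hence is weak-$\ast$ pre-compact, which is exactly the assertion that $\{h_R\}$ is strong for $\mu$. In view of Lemma~\ref{lem cSpM,N}(v), which gives $|\psi(f)| \le \norm{\psi}_N\,\norm{f}_N$, it suffices to produce an $N\in\ZZ_+$ and a constant $C$ with $\sup_{R\geq 1}\norm{\gamma_R}_N \leq C$, for then $|\gamma_R(f)| \le C\,\norm{f}_N$ uniformly in $R$ and $f$.

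The key point, and the step I expect to be the main obstacle, is that each $\gamma_R$ is a translation bounded measure with a bound \emph{uniform} in $R$. Since $\mu_R = h_R\mu$ is a finite, compactly supported measure, $\gamma_R$ is a finite compactly supported measure and $|\gamma_R| \le \tfrac{1}{\vol(B_R)}\,\nu * \widetilde{\nu}$, where $\nu := |\mu_R| = h_R|\mu| \le 1_{\overline{B_{R+1}}}\,|\mu|$ is a positive measure. Writing $Q := [-\tfrac12,\tfrac12]^d$ and $\norm{\mu}_Q := \sup_t |\mu|(t+Q)$, one computes for the positive measure $\nu$, for every $t\in\RR^d$,
\[
(\nu * \widetilde{\nu})(t+Q) = \int_{\RR^d} \nu(t+z+Q)\dd\nu(z) \le \norm{\mu}_Q\;\nu(\RR^d) \le \norm{\mu}_Q\,|\mu|(\overline{B_{R+1}})\,,
\]
using $\nu \le |\mu|$ throughout. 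Covering $\overline{B_{R+1}}$ by unit cubes gives $|\mu|(\overline{B_{R+1}}) \le c_d\,(R+1)^d\,\norm{\mu}_Q$ for a dimensional constant $c_d$, whence
\[
\norm{\gamma_R}_Q = \sup_t |\gamma_R|(t+Q) \le \frac{c_d\,(R+1)^d}{\vol(B_R)}\,\norm{\mu}_Q^{\,2} \le \frac{c_d\,2^d}{\vol(B_1)}\,\norm{\mu}_Q^{\,2}
\]
for all $R\ge 1$, since $\vol(B_R)=\vol(B_1)R^d$ and $(R+1)^d\le(2R)^d$. This is the uniform translation bound; it is precisely here that the normalisation $\tfrac{1}{\vol(B_R)}$ absorbs the linear-in-volume growth of the mass of $\mu_R$, and getting this cancellation right is the crux.

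Finally I would convert the uniform translation bound into the required norm bound by repeating verbatim the estimate in the proof of Proposition~\ref{tb measures}, (ii)$\Rightarrow$(iii), with $\mu$ replaced by the measure $\gamma_R$: for every $f\in\cS$,
\[
\norm{\gamma_R * f}_\infty \le \Big(\sum_{n\in\ZZ^d}\sup_{s\in n+Q}|f(s)|\Big)\,\norm{\gamma_R}_Q\,.
\]
As $f$ is rapidly decreasing, the sum is dominated by a constant times $\norm{f}_{N,0}\le\norm{f}_N$ for, say, $N=d+1$, so combining with the uniform bound on $\norm{\gamma_R}_Q$ yields $\norm{\gamma_R}_N \le C$ for all $R\ge 1$. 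By the first paragraph this gives equicontinuity and hence weak-$\ast$ pre-compactness, completing the proof. Only the uniform translation estimate is genuinely non-routine; the reduction to equicontinuity and the norm conversion are immediate from results already in hand.
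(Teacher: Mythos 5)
Your proof is correct, and it ends at the same place as the paper's --- a uniform bound $\norm{\gamma_R \ast f}_\infty \leq C\ts\norm{f}_N$ for all $R\geq 1$, followed by Bourbaki--Alaoglu --- but the intermediate route is genuinely different. The paper fixes $f\in\cS$ (WLOG $f\geq 0$), writes $\gamma_R\ast f$ as an integral of $(h_R\mu)\ast f$ against $\widetilde{h_R\mu}$, bounds the inner function pointwise by $|\mu|\ast f\leq C\norm{f}_{M,N}$ using translation boundedness of $|\mu|$ \emph{as a tempered distribution} (Proposition~\ref{tb measures}, whose constants $M,N,C$ ultimately come from the closed-graph argument in Proposition~\ref{tb Prop 1}), and then lets the normalisation $\tfrac{1}{\vol(B_R)}$ absorb the remaining total mass $|\mu|(B_{R+1})$. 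You instead stay at the level of measures throughout: you dominate $|\gamma_R|$ by $\tfrac{1}{\vol(B_R)}\ts\nu\ast\widetilde{\nu}$ with $\nu=h_R|\mu|$, prove that the approximants $\gamma_R$ are \emph{uniformly translation bounded measures}, $\sup_{R\geq 1}\norm{\gamma_R}_Q<\infty$, and only at the end convert this once into a Schwartz-norm bound via the summation estimate from the proof of Proposition~\ref{tb measures}, (ii)$\Rightarrow$(iii). The crux is identical in both arguments --- translation boundedness gives a uniform local bound on one convolution factor, while the other factor contributes mass $|\mu|(B_{R+1})=O(\vol(B_R))$, killed by the normalisation --- but your organisation buys explicit constants and an explicit seminorm index ($N=d+1$), avoids any reliance on the non-constructive closed-graph constants, and isolates an intermediate statement of independent interest (uniform translation boundedness of the autocorrelation approximants of a translation bounded measure), whereas the paper's version is shorter once Proposition~\ref{tb measures} is in hand.
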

\begin{proof}
Let $f \in \cS$. Without loss of generality we can assume that $f \geq 0$, otherwise we can replace $f$ by $g\in\cS$ such that $g\geq|f|$.

As $\mu$ is translation bounded, $|\mu|$ is translation bounded as a tempered distribution by Proposition \ref{tb measures} and thus there exist some $M,N$ and $C$ such that
\[
\| | \mu | \ast f \|_\infty \leq C \| f \|_{M,N} \,.
\]
Then we have
\begin{eqnarray*}
\begin{split}
\| \tfrac{1}{\vol(B_{R})} \left( h_{R} \mu \right)* ( \widetilde{h_{R} \mu})*f  \|_\infty \\
&= \tfrac{1}{\vol(B_{R})} \| (\widetilde{h_{R} \mu})* \bigl( ( h_{R} \mu )*f\bigr)  \|_\infty \\
&= \tfrac{1}{\vol(B_{R})} \sup_{x \in \RR^d}\left|\int_{\RR^d} \bigl(( h_{R}\mu) \ast f \bigr) (x-t) d \widetilde{h_{R} \mu} (t)  \right| \\
&\leq \tfrac{1}{\vol(B_{R})}\sup_{x \in \RR^d}\int_{\RR^d}\left| \left( h_{R} \mu\right) \ast f\right|(x-t)\, d \widetilde{\left| h_{R} \mu \right|}(t) \\
&\leq  \tfrac{1}{\vol(B_{R})} \sup_{x \in \RR^d} \int_{\RR^d}\left| h_{R} \mu \right|\ast f (x-t) d \widetilde{\left| h_{R} \mu \right|} (t)  \\
&\leq  \tfrac{1}{\vol(B_{R})} \sup_{x \in \RR^d} \int_{\RR^d}\left|\mu \right|\ast f (x-t)\, d \widetilde{\left| h_{R} \mu \right|} (t)  \\
&\leq  \tfrac{1}{\vol(B_{R})} \sup_{x \in \RR^d} \int_{\RR^d}  C \| f \|_{M,N}\, d \widetilde{\left| h_{R} \mu \right|} (t)  \\
&=\tfrac{1}{\vol(B_{R})}  C \| f \|_{M,N} \widetilde{\left| h_{R} \mu \right|} (\RR^d)  \\
&\leq  \tfrac{1}{\vol(B_{R})}  C \| f \|_{M,N} \widetilde{ \left|  \mu \right|} (B_{R+1})\,.  \\
\end{split}
\end{eqnarray*}

Now, a simple computation shows that for any translation bounded measure there exists a constant $C_0$ such that for all $R>1$ we have
\[
\frac{1}{\vol(B_{R})}   \left|  \mu \right| (B_{R+1}) \leq C_0 \,.
\]

Therefore, for each $f \in \cS$ we have for all $R>1$ that
\[
\left\| \tfrac{1}{\vol(B_{R})} \left( h_{R} \mu \right)* ( \widetilde{h_{R} \mu})*f  \right\|_\infty \leq C_0  C \| f \|_{M,N}
\]
and thus the set
\[
\left\{ \tfrac{1}{\vol(B_{R})} \left( h_{R} \mu \right)* ( \widetilde{h_{R} \mu}) : R \geq 1 \right\}
\]
is weak-$\ast$ precompact, as required.
\end{proof}

Next, we show that derivatives of translation bounded measures also admit strong smooth van Hove sequences.

\begin{proposition}\label{derivative tb measures}
Let $\{h_R\}$ be a smooth approximate van Hove family and let $\kappa \in \ZZ_+^d$. If there exists a constant $C$ such that for all $\alpha \leq \kappa$ and all $R \geq 1$ we have
\[
\|D^\alpha h_R \|_\infty < C \,,
\]
then for each $\mu\in\cM^{\infty}(\RR^d)$, the family $\{h_R\}$ is strong for the tempered distribution $D^\kappa \mu$.
\end{proposition}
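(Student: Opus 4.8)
The plan is to reduce everything to the estimate already carried out in Proposition~\ref{tb measures have strong} by decomposing $h_R D^\kappa\mu$ into a finite combination of derivatives of ordinary measures. Writing $\nu_{R,\alpha} := (D^\alpha h_R)\,\mu$ for $\alpha\leq\kappa$, each $\nu_{R,\alpha}$ is a measure supported in $\overline{B_{R+1}}$, and the hypothesis $\|D^\alpha h_R\|_\infty < C$ gives the crucial uniform bound $|\nu_{R,\alpha}| = |D^\alpha h_R|\,|\mu| \leq C\,|\mu|$ on $\RR^d$. First I would establish, by the Leibniz rule and testing against $f\in\cS$ (using $(D^\gamma S)*f = S*(D^\gamma f)$ and the reflection identity $D^\gamma(f_{\n}) = (-1)^{|\gamma|}(D^\gamma f)_{\n}$), the distributional identity
\[
h_R D^\kappa\mu \;=\; \sum_{\alpha\leq\kappa}\binom{\kappa}{\alpha}(-1)^{|\alpha|}\,D^{\kappa-\alpha}\nu_{R,\alpha}\,,
\]
so that $\psi_R := h_R D^\kappa\mu$ is a fixed finite linear combination of derivatives of the measures $\nu_{R,\alpha}$.

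Next I would expand the quadratic expression. Using the commutation rule $\widetilde{D^\gamma\nu} = (-1)^{|\gamma|}D^\gamma\widetilde{\nu}$ together with $(D^\gamma S)*(D^\delta T) = D^{\gamma+\delta}(S*T)$, the identity from the first step yields
\[
\psi_R * \widetilde{\psi_R} \;=\; \sum_{\alpha,\beta\leq\kappa} b_{\alpha\beta}\,D^{2\kappa-\alpha-\beta}\bigl(\nu_{R,\alpha}*\widetilde{\nu_{R,\beta}}\bigr)\,,
\]
with fixed real coefficients $b_{\alpha\beta}$. Each $\nu_{R,\alpha}*\widetilde{\nu_{R,\beta}}$ is now an ordinary (compactly supported) measure. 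For $f\in\cS$ I would move every derivative onto the test function,
\[
D^{2\kappa-\alpha-\beta}\bigl(\nu_{R,\alpha}*\widetilde{\nu_{R,\beta}}\bigr)*f \;=\; \bigl(\nu_{R,\alpha}*\widetilde{\nu_{R,\beta}}\bigr)*\bigl(D^{2\kappa-\alpha-\beta}f\bigr)\,,
\]
reducing each summand to exactly the shape treated in Proposition~\ref{tb measures have strong}, but with $\nu_{R,\alpha},\nu_{R,\beta}$ in place of $h_R\mu$ and the fixed Schwartz function $D^{2\kappa-\alpha-\beta}f$ in place of $f$.

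Finally I would repeat the chain of estimates of Proposition~\ref{tb measures have strong} term by term. Since $|\mu|$ is translation bounded as a tempered distribution (Proposition~\ref{tb measures}), the inner function $\nu_{R,\alpha}*(D^{2\kappa-\alpha-\beta}f)$ is bounded in $\|\cdot\|_\infty$ uniformly in $R$ via $|\nu_{R,\alpha}|\leq C|\mu|$; integrating against $\widetilde{\nu_{R,\beta}}$ costs a factor $|\nu_{R,\beta}|(\RR^d)\leq C\,|\mu|(\overline{B_{R+1}})$, and the normalisation is absorbed by the standard bound $\tfrac{1}{\vol(B_R)}|\mu|(B_{R+1})\leq C_0$. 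Summing the finitely many pairs $(\alpha,\beta)$ then gives, for each $f\in\cS$, a bound on $\bigl\|\tfrac{1}{\vol(B_R)}\psi_R*\widetilde{\psi_R}*f\bigr\|_\infty$ that is uniform in $R\geq 1$, whence the set $\{\tfrac{1}{\vol(B_R)}\psi_R*\widetilde{\psi_R} : R\geq 1\}$ is weak-$\ast$ precompact. I expect the main obstacle to be purely bookkeeping: getting the Leibniz decomposition and the tilde/derivative commutations (and their signs) correct so that the quadratic term collapses into a finite sum of measure autocorrelations. Once that algebra is in place, the hypothesis $\|D^\alpha h_R\|_\infty<C$ is precisely what keeps each $\nu_{R,\alpha}$ dominated by $C|\mu|$, and the remaining analysis is identical to the measure case.
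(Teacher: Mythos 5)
Your proof is correct, and it takes a genuinely different route from the paper's. The paper proves the uniform bound directly: it pairs $\tfrac{1}{\vol(B_R)}(h_R D^\kappa\mu)*g$ against $\mu$, applies the Leibniz rule inside the integral, splits the result into a bulk term over $B_{R+1}$ and corona terms supported in $\overline{B_{R+1}}\setminus B_R$, and then separately establishes the uniform estimate $\|D^\gamma g\|_\infty \leq C_4\|f\|_{2d,2|\kappa|}$ for the $R$-dependent family $g=(\widetilde{h_R D^\kappa\mu})*f$ via a second Leibniz expansion and the lattice-sum estimate of Proposition \ref{tb measures}. You instead do all the algebra up front: the inverse-Leibniz identity $h_R D^\kappa\mu=\sum_{\alpha\leq\kappa}\binom{\kappa}{\alpha}(-1)^{|\alpha|}D^{\kappa-\alpha}\nu_{R,\alpha}$ is correct (it follows from $\binom{\kappa}{\alpha}\binom{\kappa-\alpha}{\gamma-\alpha}=\binom{\kappa}{\gamma}\binom{\gamma}{\alpha}$ together with the alternating-sum identity $\sum_{\alpha\leq\gamma}(-1)^{|\alpha|}\binom{\gamma}{\alpha}=0$ for $\gamma\neq 0$), the sign rule $\widetilde{D^\gamma\nu}=(-1)^{|\gamma|}D^\gamma\widetilde{\nu}$ checks out, and the quadratic form then collapses into finitely many terms $D^{2\kappa-\alpha-\beta}\bigl(\nu_{R,\alpha}*\widetilde{\nu_{R,\beta}}\bigr)$ whose derivatives can be shifted onto the test function. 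After that, each term is handled exactly as in Proposition \ref{tb measures have strong}, with the hypothesis entering only through $|\nu_{R,\alpha}|\leq C|\mu|$; both arguments consume the same ingredients (Leibniz rule, $\|D^\alpha h_R\|_\infty<C$, translation boundedness of $|\mu|$, and $\tfrac{1}{\vol(B_R)}|\mu|(B_{R+1})\leq C_0$) and end with the same inference from a bound of the form $\|\cdot *f\|_\infty\leq C'\|f\|_{2d,2|\kappa|}$, uniform in $R$, to weak-$\ast$ precompactness. What your route buys is modularity and cleaner bookkeeping: the analytic work is done once, in the measure case, and reused term by term, while all signs and binomial coefficients sit in a purely algebraic identity that can be verified independently. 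What the paper's route buys is that the corona localization is explicit in the estimate, which directly supports the remark following the proposition that the derivative bounds are needed only on $\overline{B_{R+1}}\setminus B_R$ --- although your decomposition supports the same remark, since $D^\alpha h_R$ vanishes off the corona for every $\alpha\neq 0$ in any case.
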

\begin{proof}
In this proof we will use the notation
\[
{\bf 0}:= (0,0,0,...,0) \; \mbox{ and } {\bf 2}:= (2,2,2,...,2)\,  \in \ZZ^d
\]
in order to differentiate between these vectors and the numbers $0,2\in\ZZ$. We begin by noting that since $h_R \equiv 1$ on $B_R$ and $h_R \equiv 0$ outside $\overline{B_{R+1}}$ we have for all non-zero $\alpha \in \ZZ^d_+$
\[
D^\alpha h_R(x) = 0 \, \mbox{ for all } \, x \notin  \overline{B_{R+1}} \backslash B_R \,.
\]
Moreover, for $\alpha = \bf{0} \in \ZZ^d_+$ we have $\|D^\alpha h_R \|_\infty =1$ for all $R$.

Now let $\mu \in \cM^{\infty}(\RR^d)$ and $f \in \cS$.
Since $\widetilde{h_R D^M \mu}$ is a distribution with compact support and $f \in \cS$, by Lemma \ref{conv compact supp is Schwartz} we have
\[
g:= \bigl( \widetilde{h_R D^\kappa \mu} \bigr) *f \in \cS \,.
\]
Then for all $x \in \RR^d$,
\begin{eqnarray*}
\begin{split}
\left| \tfrac{1}{\vol(B_{R})} \left( h_{R} D^\kappa \mu \right)* g (x) \right|&= \tfrac{1}{\vol(B_{R})} \left| \langle h_{R} D^\kappa \mu , T_x g_- \rangle  \right| \\
&= \tfrac{1}{\vol(B_{R})} \left| \langle D^\kappa \mu ,  h_{R} T_x g_- \rangle  \right| \\
&= \tfrac{1}{\vol(B_{R})} \left| \langle \mu , D^\kappa \bigl( h_{R} T_x g_- \bigr) \rangle  \right| \\
&= \tfrac{1}{\vol(B_{R})}  \left|\int_{\RR^d}  D^\kappa \bigl( h_{R} T_x g_- \bigr)(t) d \mu(t)   \right|  \,.
\end{split}
\end{eqnarray*}
Now, by the multivariate Leibnitz product rule we have
\[
D^\kappa \bigl( h_{R} T_x g_- \bigr)(t) = \sum_{ \beta \leq \kappa} \binom{\kappa}{\beta}  \bigl( D^\beta h_{R}  \bigr) \bigl( D^{\kappa- \beta} T_x g_- \bigr)(t) \,.
\]
Since for all $\beta \neq {\bf 0}$ we have $ D^\beta h_{R}  \equiv 0$ outside  $\overline{B_{R+1}} \backslash B_R$, we also have
\begin{eqnarray*}
\begin{split}
\int_{\RR^d}  D^\kappa \bigl( h_{R} T_x g_- \bigr)(t) d \mu(t)  &= \sum_{ {\bf 0} < \beta \leq \kappa} \binom{\kappa}{\beta} \int_{\overline{B_{R+1}} \backslash B_R}  D^\beta \bigl( h_{R} \ T_x g_- \bigr)(t) d \mu(t) \\
&+\int_{\RR^d}   h_{R} \bigl( D^\kappa T_x g_- \bigr)(t) d \mu(t) \,.
\end{split}
\end{eqnarray*}
Therefore we get
\begin{equation}\label{very ugly}
\begin{split}
&\left| \tfrac{1}{\vol(B_{R})} \left( h_{R} D^\kappa \mu \right)* g (x) \right|\leq \tfrac{1}{\vol(B_{R})}\sum_{ {\bf 0} < \beta \leq \kappa}  \binom{\kappa}{\beta} \left| \int_{\overline{B_{R+1}} \backslash B_R}  D^\beta \bigl( h_{R} \ T_x g_- \bigr)(t) d \mu(t) \right|\\
&+ \tfrac{1}{\vol(B_{R})}\left| \int_{\RR^d}   h_{R} \bigl( D^\kappa T_x g_- \bigr)(t) d \mu(t)  \right| \\
&\leq \tfrac{1}{\vol(B_{R})}\sum_{ {\bf 0} < \beta \leq \kappa}  \binom{\kappa}{\beta} \sum_{ \alpha \leq  \beta}  \binom{\beta}{\alpha} \left| \int_{\overline{B_{R+1}} \backslash B_R}  ( D^\alpha h_{R} ) \bigl(D^{\beta-\alpha} T_x g_- \bigr)(t) d \mu(t) \right|\\
&+ \tfrac{1}{\vol(B_{R})}\int_{B_{R+1}}  \left|  D^\kappa T_x g_- (t) \right| d |\mu|(t)  \\
&\leq \tfrac{1}{\vol(B_{R})}\sum_{ {\bf 0} < \beta \leq \kappa} \sum_{ \alpha \leq  \beta} \binom{\kappa}{\beta}   \binom{\beta}{\alpha} C \int_{\overline{B_{R+1}} \backslash B_R} \left| D^{\beta-\alpha} T_x g_- \right|(t) d |\mu|(t) \\
&+ \tfrac{1}{\vol(B_{R})}\int_{B_{R+1}}  \left|  D^\kappa T_x g_- (t) \right| d |\mu|(t)  \\
&\leq \tfrac{C_1}{\vol(B_{R})} \sup_{\gamma\leq\kappa} \int_{\overline{B_{R+1}} \backslash B_R} \left| D^{\gamma} T_x g_-(t) \right| d |\mu|(t) + \tfrac{1}{\vol(B_{R})}\int_{B_{R+1}}  \left|  D^\kappa T_x g_- (t) \right| d |\mu|(t)  \,.
\end{split}
\end{equation}
where
\[
C_1:= C \cdot \bigl( \sum_{ {\bf 0} < \beta \leq \kappa} \sum_{ \alpha \leq  \beta} \binom{\kappa}{\beta}   \binom{\beta}{\alpha} \bigr)
\]
is a constant which only depends on $C$ and $\kappa$.

To complete the proof we will show that there exists a constant $C_4$ which depends only on $\kappa$ and $\mu$, but it is independent of $R$ and $f$, such that for all $\gamma \leq \kappa$ we have
\[
\| D^\gamma  g \|_\infty < C_4  \| f \|_{2d, 2|\kappa|}  \,.
\]
Let us emphasize here that while $g \in \cS$, $g$ depends on $R$ so should be seen as a family of functions, not a single function.

Now, for all $\gamma \leq \kappa$ we have
\begin{eqnarray*}
\begin{split}
\left|  D^\gamma  g (t) \right|&=\left| D^\gamma \bigl( \widetilde{h_R D^\kappa \mu} \bigr) *f \right|(t)  \\
&=\left| \bigl( \widetilde{h_R D^\kappa \mu} \bigr) *D^\gamma f \right|(t)  \\
&=\left| \langle \widetilde{h_R D^\kappa \mu} , T_t (D^\gamma f)_- \rangle \right| \\
&=\left| \langle h_R D^\kappa \mu , T_{-t} \overline{(D^\gamma f)}  \rangle \right| \\
&=\left| \langle  D^\kappa \mu , h_R T_{-t} \overline{(D^\gamma f)} \rangle \right| \\
&=\left| \langle  \mu , D^\kappa h_R T_{-t} \overline{(D^\gamma f)}  \rangle \right| \\
&=\left| \langle  \mu ,  \sum_{ \beta \leq \kappa} \binom{\kappa}{\beta} (D^\beta h_R) T_{-t} \overline{(D^{\kappa-\beta+\gamma} f)}  \rangle \right| \\
&\leq \sum_{ \beta \leq \kappa} \binom{\kappa}{\beta}  \left| \langle  \mu ,  (D^\beta h_R) T_{-t} \overline{(D^{\kappa-\beta+\gamma} f)}  \rangle \right| \\
&=\left| \langle  \mu ,  h_R T_{-t} \overline{(D^{\kappa+\gamma} f)}  \rangle \right| +\sum_{{\bf 0} < \beta \leq \kappa} \binom{\kappa}{\beta}  \left| \langle  \mu ,  (D^\beta h_R) T_{-t} \overline{(D^{\kappa-\beta+\gamma} f)}  \rangle \right| \\
&=\left| \langle  h_R \mu , T_{-t} \overline{(D^{\kappa+\gamma} f)}  \rangle \right| +\sum_{{\bf 0} < \beta \leq \kappa} \binom{\kappa}{\beta}  \left| \langle (D^\beta h_R)  \mu ,  T_{-t} \overline{(D^{\kappa-\beta+\gamma} f)}  \rangle \right| \,.
\end{split}
\end{eqnarray*}

Next, exactly as in the proof of Proposition \ref{tb measures} we get
\begin{eqnarray*}
\begin{split}
\left| \langle  h_R \mu , T_{-t} \overline{(D^{\kappa+\gamma} f)}  \rangle \right| &\leq  \left( \sum_{n \in \ZZ^d}  \sup_{s \in n+[-\frac{1}{2},\frac{1}{2}]^d} \{ \left|  D^{\kappa+\gamma} f (s) \right|\} \right) \| h_R \mu \|_{[-\frac{1}{2},\frac{1}{2}]^d} \\
&\leq \left( \sum_{n \in \ZZ^d}  \sup_{s \in n+[-\frac{1}{2},\frac{1}{2}]^d} \{ \left|  D^{\kappa+\gamma} f (s) \right|\} \right) \| \mu \|_{[-\frac{1}{2},\frac{1}{2}]^d} \,,
\end{split}
\end{eqnarray*}
and similarly, using also that $ | (D^\beta h_R) \mu | \leq C | \mu |$, we get
\begin{eqnarray*}
\begin{split}
\left| \langle (D^\beta h_R)  \mu ,  T_{-t} \overline{(D^{\kappa-\beta+\gamma} f)}  \rangle \right| &\leq \left( \sum_{n \in \ZZ^d}  \sup_{s \in n+[-\frac{1}{2},\frac{1}{2}]^d} \{ \left|  (D^{\kappa-\beta+\gamma} f)  (s) \right|\} \right) \| (D^\beta h_R) \mu \|_{[-\frac{1}{2},\frac{1}{2}]^d}  \\
&\leq \left( \sum_{n \in \ZZ^d}  \sup_{s \in n+[-\frac{1}{2},\frac{1}{2}]^d} \{ \left|  (D^{\kappa-\beta+\gamma} f)  (s) \right|\} \right) C \cdot\| \mu \|_{[-\frac{1}{2},\frac{1}{2}]^d} \,.
\end{split}
\end{eqnarray*}

Finally, using the convergence of the series
\[
\sum_{n \in \ZZ^d \backslash {\bf 0}} \sup_{s \in n+[-\frac{1}{2},\frac{1}{2}]^d} \{ \frac{1}{s^{\bf 2}} \} \,,
\]
we get
\begin{eqnarray*}
\begin{split}
&\sum_{n \in \ZZ^d} \sup_{s \in n+[-\frac{1}{2},\frac{1}{2}]^d} \{ \left|  (D^{\kappa-\beta+\gamma} f)  (s) \right|\} = \sup_{s \in [-\frac{1}{2},\frac{1}{2}]^d} \{ \left|  (D^{\kappa-\beta+\gamma} f)  (s) \right|\} \\
&+ \sum_{n \in \ZZ^d \backslash {\bf 0}} \sup_{s \in n+[-\frac{1}{2},\frac{1}{2}]^d} \{ \left| \frac{1}{s^{\bf 2}} \bigl( s^{\bf 2} (D^{\kappa-\beta+\gamma} f \bigr)  (s) \right|\} \\
 &\leq  \| f \|_{0, |\kappa-\beta+\gamma|} +  \sum_{n \in \ZZ^d \backslash {\bf 0}} \sup_{s \in n+[-\frac{1}{2},\frac{1}{2}]^d} \{ \left| \frac{1}{s^{\bf 2}} \right|   \| f \|_{2d, |\kappa-\beta+\gamma|}  \} \leq C_3  \| f \|_{2d, 2|\kappa|} \,,
\end{split}
\end{eqnarray*}
where
\[
C_3=1+ \sum_{n \in \ZZ^d \backslash {\bf 0}} \sup_{s \in n+[-\frac{1}{2},\frac{1}{2}]^d} \{ \frac{1}{s^{\bf 2}} \} \,.
\]
Combining, we get
\begin{eqnarray*}
\begin{split}
\left|  D^\gamma  g (t) \right|&=\left| D^\gamma \bigl( \widetilde{h_R D^\kappa \mu} \bigr) *f \right|(t)  \\
&\leq\left| \langle  h_R \mu , T_{-t} \overline{(D^{\kappa+\gamma} f)}  \rangle \right| +\sum_{{\bf 0} < \beta \leq \kappa} \binom{\kappa}{\beta}  \left| \langle (D^\beta h_R)  \mu ,  T_{-t} \overline{(D^{\kappa-\beta+\gamma} f)}  \rangle \right| \\
&\leq C_3 \| f \|_{2d, 2|\kappa|}  \| \mu \|_{[-\frac{1}{2},\frac{1}{2}]^d} +\sum_{{\bf 0} < \beta \leq \kappa} \binom{\kappa}{\beta}   C_3 \| f \|_{2d, 2|\kappa|} C  \| \mu \|_{[-\frac{1}{2},\frac{1}{2}]^d} \\
&=C_4  \| f \|_{2d, 2|\kappa|}\,.
\end{split}
\end{eqnarray*}

Thus  by (\ref{very ugly}), we have
\begin{eqnarray*}
\begin{split}
&\left| \tfrac{1}{\vol(B_{R})} \left( h_{R} D^\kappa \mu \right)* \widetilde{h_{R} D^\kappa \mu} * f (x) \right|\\
&\leq \tfrac{C_1}{\vol(B_{R})} C_4  \| f \|_{2d, 2|\kappa|} \left| \mu \right| (\overline{B_{R+1}} \backslash B_R) + \tfrac{1}{\vol(B_{R})} C_4  \| f \|_{2d, 2|\kappa|} \left| \mu \right| (B_{R+1})    \,.
\end{split}
\end{eqnarray*}

By translation boundedness of $\mu$ the sets
\[
\{ \tfrac{1}{\vol(B_{R})}\left| \mu \right| (\overline{B_{R+1}} \backslash B_R) \, | \, R \geq 1 \}
\]
and
\[
\{  \tfrac{1}{\vol(B_{R})}  \left| \mu \right| (B_{R+1}) \,  |  \, R \geq 1 \}
\]
are bounded. Therefore, there exists a constant $C_5$, which depends only on $\kappa$ and $\mu$, such that, for all $R \geq 1$ and all $f \in \cS$ we have
\[
\| \tfrac{1}{\vol(B_{R})} \left( h_{R} D^\kappa \mu \right)* \widetilde{h_{R} D^\kappa \mu} * f \| \leq C_5 \| f \|_{2d, 2|\kappa|} \,.
\]

This completes the proof.
\end{proof}

By the observations at the start of the proof, we see that the boundedness condition in the statement  Proposition  \ref{derivative tb measures} is needed only for the coronas $\overline{B_{R+1}} \backslash B_R$. The existence of such van Hove families is easy to prove. Therefore, all derivatives of translation bounded measures admit strong smooth approximate van Hove sequences.

We complete the section by showing that for a translation bounded measure $\mu$, the autocorrelations calculated for the tempered distribution $\mu$ coincide with those for the measure $\mu$.

\begin{proposition} Let $\mu \in \cM^\infty(\RR^d)$, let $\{h_R\}$ be a smooth approximate van Hove family, and let $\phi$ be an autocorrelation of $\mu$ calculated for $\mu$ as a tempered distribution with respect to the sequence $\{R_n\}$. Then the measure autocorrelation
\[
\gamma= \lim_n  \tfrac{1}{\vol(B_{R_n})} \left( \mu_{R_n} \right)* ( \widetilde{\mu_{R_n}})
\]
exists and
\[
\gamma = \phi \,.
\]
In particular, $\phi$ is a translation bounded measure.
\end{proposition}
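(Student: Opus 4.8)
The plan is to compare the smoothly truncated autocorrelation $\tfrac{1}{\vol(B_{R_n})}(h_{R_n}\mu)*\widetilde{(h_{R_n}\mu)}$ with the sharply truncated one $\tfrac{1}{\vol(B_{R_n})}\mu_{R_n}*\widetilde{\mu_{R_n}}$, to show that their difference vanishes when tested against Schwartz functions, and then to upgrade the resulting weak-$\ast$ convergence of measures to genuine vague convergence by means of a uniform translation boundedness estimate. Note that $\phi$ exists by hypothesis (and $\{h_R\}$ is strong for $\mu$ by Proposition \ref{tb measures have strong}).

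First I would write $h_R\mu = \mu_R + \rho_R$, where $\rho_R := (h_R - 1_{B_R})\mu$ is a measure supported on the corona with $|\rho_R| \leq 1_{\overline{B_{R+1}}\backslash B_R}|\mu|$. Expanding the product gives
\[
(h_R\mu)*\widetilde{(h_R\mu)} - \mu_R*\widetilde{\mu_R} = \mu_R*\widetilde{\rho_R} + \rho_R*\widetilde{\mu_R} + \rho_R*\widetilde{\rho_R}\,.
\]
Testing any of these terms against $\varphi\in\cS$ and choosing $\Phi\in\cS$ with $\Phi\geq|\varphi|$, the inner integral is dominated by $(|\mu|*\Phi_-)(y)\leq\norm{|\mu|*\Phi_-}_\infty$, which is finite since $|\mu|$ is translation bounded on $\cS$ by Proposition \ref{tb measures}. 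Hence each term is bounded in absolute value by a constant (depending only on $\varphi$) times $|\mu|(\overline{B_{R+1}}\backslash B_R)$. Translation boundedness of $\mu$ bounds the mass on the corona by $O(R^{d-1})$ while $\vol(B_R)$ grows like $R^d$, so the van Hove property of the balls yields
\[
\frac{1}{\vol(B_{R_n})}\left((h_{R_n}\mu)*\widetilde{(h_{R_n}\mu)} - \mu_{R_n}*\widetilde{\mu_{R_n}}\right)(\varphi) \xrightarrow{n\to\infty} 0
\]
for every $\varphi\in\cS$. Thus $\gamma_{R_n} := \tfrac{1}{\vol(B_{R_n})}\mu_{R_n}*\widetilde{\mu_{R_n}}$ converges to $\phi$ in the weak-$\ast$ topology of $\cSp$.

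It remains to see that this weak-$\ast$ limit is actually a measure. For this I would show that the family $\{\gamma_R\}_{R\geq 1}$ is \emph{uniformly} translation bounded as measures. Indeed, since $|\gamma_R| \leq \tfrac{1}{\vol(B_R)}|\mu_R|*\widetilde{|\mu_R|}$, reflecting one factor gives for any compact $K$ and any $t$
\[
(|\mu_R|*\widetilde{|\mu_R|})(t+K) = \int\int 1_{t+K}(x-y)\,d|\mu_R|(x)\,d|\mu_R|(y) \leq \sup_{z}|\mu|(z+K)\cdot|\mu|(B_R)\,,
\]
so that $|\gamma_R|(t+K) \leq \sup_z|\mu|(z+K)\cdot\tfrac{|\mu|(B_R)}{\vol(B_R)}$, which is bounded uniformly in $R$ and $t$ because $\tfrac{|\mu|(B_R)}{\vol(B_R)}$ is bounded (as already used in the proof of Proposition \ref{tb measures have strong}). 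A uniformly translation bounded family of measures is vaguely precompact, and every vague cluster point is again a translation bounded measure.

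Finally I would glue the two pieces together. Let $\gamma$ be any vague cluster point of $\{\gamma_{R_n}\}$; it is a translation bounded measure. Along a subsequence realising this cluster point, vague convergence gives $\gamma_{R_{n_k}}(f)\to\gamma(f)$ for all $f\in\cD\subseteq\Cc$, while the weak-$\ast$ convergence established above gives $\gamma_{R_{n_k}}(f)\to\phi(f)$. Hence $\gamma=\phi$ on $\cD$, and since $\cD$ is dense in $\cS$ and both are tempered distributions, $\gamma=\phi$ in $\cSp$. As every vague cluster point therefore coincides with the single measure $\phi$, vague precompactness forces the vague limit $\gamma=\lim_n\gamma_{R_n}$ to exist and equal $\phi$; in particular $\phi$ is a translation bounded measure. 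I expect the only delicate point to be precisely this last upgrade from weak-$\ast$ convergence in $\cSp$ to vague convergence of measures, which is exactly what the uniform translation boundedness estimate of the third paragraph is designed to supply.
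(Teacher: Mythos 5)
Your proof is correct, and it shares the overall architecture of the paper's proof: compare the smoothly and sharply truncated autocorrelations, establish vague precompactness of the measure autocorrelations, show that every vague cluster point must coincide with $\phi$, and conclude by a density argument. The technical heart, however, is carried out differently. The paper proves the vanishing of the difference only against test functions of the form $f\ast g$ with $f,g \in \cD$, by invoking the support/van Hove argument of Lemma \ref{L4.5} (which genuinely needs the convolution splitting of the test function), and consequently it must use density of $\{f\ast g : f,g \in \cD\}$ both in $\Cc$ (to see that any two vague cluster points agree) and in $\cS$ (to identify the limit with $\phi$). You instead exploit the fact that $\mu$ is a measure: the corona decomposition $h_R\mu = \mu_R + \rho_R$ with $|\rho_R| \leq 1_{\overline{B_{R+1}}\backslash B_R}\ts|\mu|$, together with total-variation estimates and the bound $\| \ts |\mu| \ast \Phi_- \|_\infty < \infty$ from Proposition \ref{tb measures}, gives the vanishing against \emph{every} $\varphi \in \cS$ --- a strictly stronger intermediate statement, which then lets you identify each cluster point with $\phi$ directly on $\cD$. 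You also prove the vague precompactness explicitly, via the uniform translation boundedness estimate $|\gamma_R|(t+K) \leq \sup_z |\mu|(z+K)\cdot |\mu|(B_R)/\vol(B_R)$, whereas the paper simply cites precompactness as a known fact for translation bounded measures. The trade-off is clear: your estimates are specific to measures (they use $|\mu|$ and mass bounds on coronas, which have no meaning for a general tempered distribution), while the paper's route through Lemma \ref{L4.5} reuses machinery valid for arbitrary $\psi \in \cSp$ and so keeps the section uniform; in exchange, your argument is more self-contained and delivers a stronger convergence statement with a slightly cleaner cluster-point identification.
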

\begin{proof}
Exactly as in the proof of Lemma \ref{L4.5}, we can prove that
\begin{equation}\label{lim zero}
\lim_{n \to \infty} \frac{1}{\vol(B_{R_n})} \bigl( \left( h_{R_n} \mu \right)* ( \widetilde{h_{R_n}} \mu )
					    - \left( \mu_{R_n} \right)*(\widetilde{\mu_{R_n}}) \bigr) (f*g) =0\,.
\end{equation}
for all $f,g \in \cD$.
As $\mu$ is a translation bounded measure, the sequence
\[
\left\{ \tfrac{1}{\vol(B_{R_n})} \left( \mu_{R_n} \right)* (\widetilde{\mu_{R_n}}) \right\}
\]
is precompact in the vague topology and thus has a cluster point. We show that the sequence is convergent by showing that the cluster point is unique.

Let $\gamma_1, \gamma_2$ be two cluster points of $\left\{\frac{1}{\vol(B_{R_n})} \left( \mu_{R_n} \right)* (\widetilde{\mu_{R_n}})\right\}$. By using (\ref{lim zero}) on the two subsequences which give $\gamma_1$ and $\gamma_2$, we get
\[
\gamma_1 (f*g)= \phi (f*g) =\gamma_2 (f*g) \,,
\]
for all $f,g \in \cD$. As the set $\{ f*g : f,g \in \cD \}$ is dense in $\Cc$, this shows that $\gamma_1=\gamma_2$.
Thus there exists a measure $\gamma \in \cM^\infty(\RR^d)$ such that
\[
\gamma= \lim_n \tfrac{1}{\vol(B_{R_n})} \left( \mu_{R_n} \right)* (\widetilde{\mu_{R_n}}) \,,
\]
in the vague topology of measures.
Using again (\ref{lim zero}), we get for all $f,g \in \cD$ that
\[
\phi (f*g) =\gamma (f*g) \, .
\]
As $\gamma$ is also a tempered distribution, and $\{ f*g : f,g \in \cD \}$ is dense in $\cS$, we see that $\gamma=\phi$ as tempered distributions. This completes the proof.
\end{proof}

\section{Positive definite tempered distributions}\label{section pos def}

In this section we prove some basic results about positive definite tempered distributions. As any autocorrelation of a tempered distribution is positive definite by construction, the results of this section are important to the study of diffraction of tempered distributions.
We begin with a pretty standard result (see, for example, \cite[Theorem 7.19]{rudinFA}).

\begin{lemma}\label{lem conv in S}
Let $\psi\in\cSp$ and $f\in\cS$. Then the function $\psi\ast f$ has (at most) polynomial growth, that is, $\psi\ast f\in\cSp$, and
$\widehat{\psi\ast f} = \widehat{\psi}\widehat{f}$.
\end{lemma}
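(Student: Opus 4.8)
The plan is to establish the two claims — polynomial growth and the Fourier identity — by reducing to the case of pure convolution with a Schwartz function and exploiting the structure theorem for tempered distributions. First I would recall that every $\psi\in\cSp$ admits a representation as a finite sum of derivatives of continuous functions of polynomial growth; that is, there exist a multi-index bound and continuous functions $g_\beta$ with $|g_\beta(x)| \leq C(1+|x|)^k$ such that $\psi = \sum_{|\beta|\leq m} D^\beta g_\beta$. This is the standard structure theorem (see \cite[Thm.~7.1.18]{Hor} or the cited \cite[Thm.~7.19]{rudinFA}). The advantage is that convolution interacts transparently with derivatives: $(\psi\ast f)(t) = \sum_{|\beta|\leq m} (D^\beta g_\beta \ast f)(t) = \sum_{|\beta|\leq m} (g_\beta \ast D^\beta f)(t)$, where each $D^\beta f\in\cS$. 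The term $g_\beta \ast D^\beta f$ is a genuine convolution integral $\int_{\RR^d} g_\beta(s)\, (D^\beta f)(t-s)\dd s$, whose growth I can control directly.

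Next I would bound the growth of each $g_\beta \ast D^\beta f$. Since $g_\beta$ grows at most polynomially and $D^\beta f$ is Schwartz (hence decays faster than any polynomial), a routine estimate gives
\[
| (g_\beta \ast D^\beta f)(t)| \leq C \int_{\RR^d} (1+|s|)^k\, |(D^\beta f)(t-s)|\dd s \leq C' (1+|t|)^k \,,
\]
using $(1+|s|)^k \leq (1+|t|)^k (1+|t-s|)^k$ together with the absolute integrability of $(1+|u|)^k |(D^\beta f)(u)|$. Summing over the finitely many $\beta$ shows $\psi\ast f$ has polynomial growth, hence defines a tempered distribution. This establishes the first assertion.

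For the Fourier identity $\widehat{\psi\ast f} = \widehat\psi\,\widehat f$, the clean route is to verify it by testing against $g\in\cS$ and unwinding definitions. One computes $\langle \widehat{\psi\ast f}, g\rangle = \langle \psi\ast f, \widehat g\rangle$, and using the associativity relation $(\psi\ast f)\ast g = \psi\ast(f\ast g)$ evaluated at $0$, together with the convolution theorem for Schwartz functions $\widehat{f\ast \widehat g} = \widehat f\, g$ (up to the usual reflection conventions encoded in the paper's $f_{\n}$ notation), one reduces the left side to $\langle \widehat\psi, \widehat f\, g\rangle = \langle \widehat\psi\,\widehat f, g\rangle$. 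The multiplication $\widehat\psi\,\widehat f$ makes sense because $\widehat f\in\cS$ and the product of a tempered distribution with a Schwartz function is again tempered. Alternatively, the identity is already standard and I could simply invoke the cited reference, since the first part (polynomial growth) is the genuinely new content to be verified in context.

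The main obstacle, and the place I would take most care, is the bookkeeping in the growth estimate: ensuring the polynomial degree $k$ produced by the structure theorem is handled uniformly across all $\beta$ and that the splitting $(1+|s|)^k \leq (1+|t|)^k(1+|t-s|)^k$ is applied cleanly so that the $t$-dependence factors out and the remaining integral converges by rapid decay of $D^\beta f$. The Fourier identity itself is then essentially formal, resting only on standard convolution–multiplication duality for tempered distributions and Schwartz functions, so I expect no real difficulty there.
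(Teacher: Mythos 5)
Your argument is correct, but note that the paper does not actually prove this lemma: it is presented as a standard fact with a pointer to \cite[Theorem 7.19]{rudinFA}, which is precisely this statement (polynomial growth together with the convolution--multiplication identity). Your proof is therefore a self-contained substitute rather than a variant of an argument in the paper. The structure-theorem route is sound: writing $\psi=\sum_{|\beta|\leq m}D^\beta g_\beta$ with $g_\beta$ continuous of polynomial growth, moving derivatives onto the test function via $D^\beta g_\beta \ast f= g_\beta\ast D^\beta f$, and applying $(1+|s|)^k\leq (1+|t|)^k(1+|t-s|)^k$ together with the rapid decay of $D^\beta f$ does give the polynomial bound, and your duality computation of $\widehat{\psi\ast f}=\widehat\psi\,\widehat f$ unwinds correctly once $\psi\ast f\in\cSp$ is known. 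Two remarks on economy. First, the standard proof (the one in Rudin) is more elementary: continuity of $\psi$ on $\cS$ yields $M,N,C$ with $|\psi(h)|\leq C\|h\|_{M,N}$, and since the seminorms of translates satisfy $\|T_t f_-\|_{M,N}\leq C_f(1+|t|)^{M}$, the definition $(\psi\ast f)(t)=\psi(T_t f_-)$ gives polynomial growth at once; your route instead invokes the structure theorem, a deeper result that is usually itself derived from exactly such seminorm estimates, so you are paying more than the statement costs. Second, your Fourier-transform step uses the associativity $(\psi\ast f)\ast g=\psi\ast(f\ast g)$, which is the paper's Lemma \ref{triple conv} and is quoted there from the same theorem of Rudin; this is legitimate---you establish temperedness of $\psi\ast f$ first, so the expression makes sense---but it should be cited explicitly as an ingredient, since otherwise the argument quietly leans on another clause of the very theorem being reproved.
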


\begin{proposition}\label{convolution is CU}
Let $\psi \in \cSp$ such that $\widehat{\psi}$ is a measure and let $f \in\cS$. Then $\psi\ast f$ is bounded and uniformly continuous.
\end{proposition}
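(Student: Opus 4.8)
The plan is to move to the Fourier side and identify $\psi\ast f$ as the inverse Fourier transform of a \emph{finite} measure. Write $\mu:=\widehat{\psi}$, which by hypothesis is a measure; since it is simultaneously a tempered distribution, it is a tempered measure, so there is some $k\in\ZZ_+$ with $\int_{\RR^d}(1+|\xi|)^{-k}\dd|\mu|(\xi)<\infty$. By Lemma \ref{lem conv in S} we have $\widehat{\psi\ast f}=\widehat{\psi}\,\widehat{f}=\mu\,\widehat{f}$.

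First I would show that $\nu:=\mu\,\widehat{f}$ is a finite measure. Since $f\in\cS$, its transform $\widehat{f}$ lies in $\cS$ and decays faster than any polynomial, so for the $k$ above there is a constant $C$ with $|\widehat{f}(\xi)|\leq C(1+|\xi|)^{-k}$. As $\widehat{f}$ is bounded and continuous, $\nu=\widehat{f}\,\mu$ is a well-defined measure and
\[
|\nu|(\RR^d)=\int_{\RR^d}|\widehat{f}(\xi)|\dd|\mu|(\xi)\leq C\int_{\RR^d}(1+|\xi|)^{-k}\dd|\mu|(\xi)<\infty\,.
\]
Thus $\nu$ is a finite (complex) measure with $\widehat{\psi\ast f}=\nu$.

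Next, since $\nu$ is finite, its inverse Fourier transform $g(x):=\int_{\RR^d}\ee^{2\pi\ii\langle x,\xi\rangle}\dd\nu(\xi)$ is a well-defined bounded function satisfying $\widehat{g}=\nu$. As $g$ is tempered and the Fourier transform is injective on $\cSp$, the equality $\widehat{\psi\ast f}=\nu=\widehat{g}$ forces $\psi\ast f=g$. Boundedness is then immediate from $|g(x)|\leq|\nu|(\RR^d)$. For uniform continuity I would estimate
\[
|g(x)-g(y)|\leq\int_{\RR^d}\bigl|\ee^{2\pi\ii\langle x-y,\xi\rangle}-1\bigr|\dd|\nu|(\xi)\leq\int_{\RR^d}\min\{2,\,2\pi|x-y|\,|\xi|\}\dd|\nu|(\xi)\,,
\]
a bound depending only on $|x-y|$. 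Splitting the integral over $\{|\xi|\leq R\}$ and $\{|\xi|>R\}$, using finiteness of $|\nu|$ to make the tail small for large $R$ and then the remaining piece small once $|x-y|$ is small, yields uniform continuity, so $\psi\ast f\in\Cu$.

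The only nonroutine point is the finiteness of $\nu=\mu\,\widehat{f}$, which rests on the standard fact that a measure defining a tempered distribution integrates some inverse polynomial weight; once that is in hand, the conclusion is the familiar observation that the inverse Fourier transform of a finite measure lies in $\Cu$.
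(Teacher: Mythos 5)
Your proof follows the same route as the paper's: write $\widehat{\psi\ast f}=\widehat{f}\,\widehat{\psi}$, show that $\widehat{f}\,\widehat{\psi}$ is a finite measure, and conclude that $\psi\ast f$, being the inverse Fourier transform of a finite measure, is bounded and uniformly continuous. Everything after your first step is fine: the domination $|\widehat{f}(\xi)|\leq C(1+|\xi|)^{-k}$, the identification of $\psi\ast f$ with the inverse transform of $\nu$ via injectivity of the Fourier transform on $\cSp$, and the splitting estimate for uniform continuity are all correct and just flesh out what the paper leaves implicit.

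The gap is exactly at the point you flag as ``the only nonroutine point''. The fact that a measure defining a tempered distribution integrates some inverse polynomial weight is standard -- and true -- only for \emph{positive} measures. Here $\widehat{\psi}$ is a priori a complex measure, and for complex measures the claim is false: cancellations can make $\mu$ tempered as a distribution while $|\mu|$ grows faster than any polynomial. This is not a hypothetical worry in this paper: Remark \ref{Tempered non TB}, citing \cite[Prop.~7.1]{ARMA}, exhibits a tempered measure $\mu$ whose variation $|\mu|$ is \emph{not} tempered, and for such a $\mu$ one necessarily has $\int_{\RR^d}(1+|\xi|)^{-k}\dd|\mu|(\xi)=\infty$ for every $k$ (if this integral were finite for some $k$, then $|\mu|$ would be tempered). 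Applying your step~1 to that measure produces a contradiction, so the step cannot be justified as stated, and with it falls your proof of finiteness of $\nu$. The paper never needs a polynomial bound on $|\widehat{\psi}|$: it reads the hypothesis as saying that the pairing $\widehat{\psi}(\widehat{f})$ is given by a convergent integral of the Schwartz function $\widehat{f}$ against the measure $\widehat{\psi}$, whence the total variation of $\widehat{f}\,\widehat{\psi}$ is $|\widehat{\psi}|(|\widehat{f}|)<\infty$ directly. If you replace your first step by this observation, the rest of your argument goes through unchanged.
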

\begin{proof}
As $\widehat{f} \in\cS$ and $\widehat{\psi}$ is a tempered measure, $\widehat{\psi}\widehat{f}$ is also a tempered measure. But
\[ \left|\widehat{\psi}\widehat{f}(\RR^d)\right| = \left|\int_{\RR^d} \widehat{f}(x)\dd\widehat{\psi}(x)\right| = \left|\widehat{\psi}(\widehat{f})\right|\,,\]
which is finite, so $\widehat{\psi}\widehat{f}$ is a finite measure and thus, as the Fourier transform of a finite measure, $\psi\ast f$ is uniformly continuous and bounded. In particular, for all $x \in \RR^d$,
\begin{align*}
\left|\psi\ast f(x)\right|  &= \left|(\widehat{\psi}\cdot\widehat{f})^\vee (x) \right|\\
			    &= \left| \int_{\widehat{\RR^d}} \chi(x) \widehat{f}(\chi) \dd\widehat{\psi}(\chi) \right| \\
			    &\leq \int_{\widehat{\RR^d}} \left|\widehat{f}(\chi)\right|\dd|\widehat{\psi}|(\chi)\,,
\end{align*}
so that
\[
\| \psi\ast f \|_\infty \leq \int_{\widehat{\RR^d}} \left| \widehat{f}(\chi) \right|\dd|\widehat{\psi}|(\chi) = |\widehat{\psi}|(|\widehat{f}|) \,.
\]
\end{proof}

\begin{corollary}\label{cor measure FT is uc}
Let $\psi\in\cSp$ such that $\widehat{\psi}$ is a measure. Then $\psi$ is translation bounded on $\cS$.
\end{corollary}

Another way to state this is that the Fourier transform of any tempered measure is a tempered distribution that's translation bounded on $\cS$.
This is another fruitful source of examples of translation bounded tempered distributions. The following example first appeared in \cite{teba}.

\begin{example}\label{eg aperiodic}
Let
\[\omega := \delta_{2\ZZ} + \sum_{n\geq 1} \delta_{2.4^n \ZZ} \ast (\delta_{4^n - 1} + \delta_{1-4^n}) \,.\]
One may easily verify that $\omega$ is a translation bounded measure, and thus tempered. The Fourier transform of $\omega$ has formal expression
\[ \widehat{\omega} = \tfrac{1}{2}\delta_{\frac{\ZZ}{2}} + \sum_{n\geq 1} \frac{\cos(2\pi(4^n-1)(\cdot))}{4^n}\delta_{\frac{\ZZ}{2.4^n}}\,,\]
and is a non-measure tempered distribution. By Corollary \ref{cor measure FT is uc}, it is translation bounded.
\end{example}

\begin{example}
The tempered distribution $\psi = -D\delta_{\ZZ}$ of Example \ref{eg DdeltaZ} has Fourier transform
\[ \widehat{-D\delta_{\ZZ}} = 2\pi \ii\, \id\,\delta_{\ZZ} \,,\]
where $\id$ is the identity function on $\RR$, $\id(x) = x$. That is, $\psi$ is the Fourier transform of a tempered (not translation bounded) measure.
\end{example}

Recall that, from the Bochner Schwartz theorem, the Fourier transform of a positive definite tempered distribution is a tempered measure. Then the following is immediate from Corollary \ref{cor measure FT is uc}.

\begin{corollary}\label{cor posdef is uc}
Let $\psi\in\cSp$ be positive definite. Then $\psi$ is translation bounded on $\cS$.
\end{corollary}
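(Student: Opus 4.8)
The plan is to combine the Bochner--Schwartz theorem with Corollary \ref{cor measure FT is uc}, exactly as the preceding discussion anticipates. The entire content of the statement is that positive definiteness forces the Fourier transform to be a measure; once that is established, translation boundedness is already available from the earlier work.

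First I would recall the Bochner--Schwartz theorem, which characterises positive definite tempered distributions as precisely those $\psi\in\cSp$ whose Fourier transform $\widehat{\psi}$ is a positive tempered measure. Since $\psi$ is assumed positive definite, this immediately yields that $\widehat{\psi}$ is a (positive) tempered measure. This is the only genuinely external ingredient needed, and there is no real obstacle beyond invoking it correctly.

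With $\widehat{\psi}$ now known to be a measure, I would simply apply Corollary \ref{cor measure FT is uc}, which states that any tempered distribution whose Fourier transform is a measure is translation bounded on $\cS$. This gives the conclusion directly. The substantive estimate underlying the argument has in fact already been carried out in Proposition \ref{convolution is CU}, where, for any $f\in\cS$, the boundedness (indeed uniform continuity) of $\psi\ast f$ was obtained from the finiteness of the measure $\widehat{\psi}\widehat{f}$ via the bound $\|\psi\ast f\|_\infty\leq|\widehat{\psi}|(|\widehat{f}|)$. Thus the present corollary is a one-line deduction: Bochner--Schwartz supplies the hypothesis of Corollary \ref{cor measure FT is uc}, and that corollary supplies the conclusion.
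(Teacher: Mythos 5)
Your proof is correct and matches the paper's own argument exactly: the corollary is stated immediately after the paper recalls that, by the Bochner--Schwartz theorem, the Fourier transform of a positive definite tempered distribution is a tempered measure, and then invokes Corollary \ref{cor measure FT is uc}. Your identification of Proposition \ref{convolution is CU} as the underlying estimate is also accurate.
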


To conclude this section, we further examine the connection between positive definite tempered distributions and those whose Fourier transform is a measure.

\begin{proposition} \label{PD and measure FT} Let $\psi \in \cSp$.
\begin{itemize}
\item[i)] If $\psi$ is a linear combination of 4 positive definite tempered distributions, then $\widehat{\psi}$ is a tempered measure.
\item[ii)] If $\widehat{\psi}$ is a translation bounded measure, then $\psi$  is a linear combination of 4 positive definite tempered distributions.
\end{itemize}
\end{proposition}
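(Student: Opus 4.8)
The plan is to use the Bochner--Schwartz theorem as the dictionary between positive definiteness of $\psi$ and positivity of $\widehat{\psi}$: a tempered distribution is positive definite if and only if its Fourier transform is a positive tempered measure. Both parts then reduce to elementary manipulations of measures on the Fourier side, combined with the translation-boundedness results of Section~2. Throughout I will pass freely between $\psi$ and $\widehat{\psi}$ using the fact that the Fourier transform is a (linear) bijection of $\cSp$ onto itself.

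For part (i), suppose $\psi = \sum_{j=1}^{4} c_j\,\psi_j$ with each $\psi_j$ positive definite. By Bochner--Schwartz, each $\widehat{\psi_j}$ is a positive tempered measure. Since the Fourier transform is linear and a finite linear combination of tempered measures is again a tempered measure (temperedness is preserved by sums and scalar multiples, as $|\nu_1+\nu_2|\leq|\nu_1|+|\nu_2|$), the distribution $\widehat{\psi} = \sum_{j=1}^4 c_j\,\widehat{\psi_j}$ is a tempered measure. This direction requires no further work.

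For part (ii), set $\mu := \widehat{\psi}$, a translation bounded measure, and decompose it into positive pieces. Writing $\mu = \Real(\mu) + \ii\,\Imag(\mu)$ and applying the Jordan decomposition to the two (locally finite, signed) components, I obtain four positive measures $\mu_1,\dots,\mu_4$, each dominated by $|\mu|$, with
\[ \mu = \mu_1 - \mu_2 + \ii\,(\mu_3 - \mu_4)\,. \]
The crucial point is to verify that each $\mu_j$ is tempered. By Proposition~\ref{tb measures}, $|\mu|$ is translation bounded, and since $0 \leq \mu_j \leq |\mu|$, each $\mu_j$ is a positive translation bounded measure, hence tempered by Corollary~\ref{TB measures cor}. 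Finally I set $\phi_j := (\mu_j)^\vee$; by Bochner--Schwartz each $\phi_j$ is a positive definite tempered distribution, and applying Fourier inversion to the displayed identity gives $\psi = \phi_1 - \phi_2 + \ii\,(\phi_3 - \phi_4)$, a linear combination of $4$ positive definite tempered distributions, as required.

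The main obstacle is precisely the temperedness of the Jordan components in part (ii): for a \emph{general} tempered measure the total variation need not be tempered (cf.\ the example recalled in Remark~\ref{tb as td not meas}), so the positive parts of its real and imaginary parts could a priori fail to be tempered distributions, and Bochner--Schwartz could not be applied to them. The translation-boundedness hypothesis circumvents this exactly via Proposition~\ref{tb measures}, which guarantees that $|\mu|$ --- and therefore every positive measure it dominates --- is translation bounded and hence tempered. This is why the assumption in (ii) is that $\widehat{\psi}$ is translation bounded rather than merely tempered.
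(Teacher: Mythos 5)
Your proposal is correct and takes essentially the same route as the paper: part (i) by linearity of the Fourier transform together with Bochner--Schwartz, and part (ii) by the canonical (real/imaginary plus Jordan) decomposition of the translation bounded measure $\widehat{\psi}$ into four positive translation bounded --- hence tempered --- measures, whose inverse Fourier transforms are positive definite. The paper is merely terser (it declares (i) clear and does not spell out the domination argument $0 \leq \mu_j \leq |\mu|$), but the substance is identical.
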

\begin{proof}
(i) is clear.

(ii)
Let $f \in \cS$. As $\widehat{\psi}$ is a translation bounded measure, we can take its canonical decomposition
\[
\widehat{\psi}= \left(\widehat{\psi} \right)_{+} - \left(\widehat{\psi} \right)_{-} +i\left( \widehat{\psi} \right)_{i+} -i \left(\widehat{\psi} \right)_{i-} \,.
\]
Then each of the four measures is a positive translation bounded measure, hence a positive tempered measure. Taking the inverse Fourier transform of both sides, we have the result.
\end{proof}

\begin{remark}\label{Tempered non TB} In \cite[Prop.~7.1]{ARMA}, the authors introduce a tempered measure $\mu$ such that the measure $\left| \mu \right|$ is not tempered.

Note that in this case, if we take $\psi = \check{\mu}$, the distribution $\psi$ is tempered and has measure Fourier transform, but it cannot be a linear combination of finitely many positive definite tempered distributions.

Indeed if we would have
\[
\psi = \sum_{i=1}^N c_i \psi_i\,,
\]
with $\psi_i$ positive definite, then
\[
\mu=\widehat{\psi}=\sum_{i=1}^N c_i \widehat{\psi_i}\,,
\]
where each $\widehat{\psi_i}$ is a positive tempered measure. Then
\[
\left| \mu \right| \leq \sum_{i=1}^N |c_i| \widehat{\psi_i}
\]
and as the RHS is tempered so must be the LHS too.
\end{remark}

In summary, we have
\begin{align*}
\{ \psi\in\cSp : \widehat{\psi} \mbox{ is a translation bounded }& \mbox{measure} \} \\
	\subsetneq \mbox{span} \{ \psi\in\cSp :  \psi &\mbox{ is positive definite} \} \\
	&\quad \subsetneq \{ \psi\in\cSp : \widehat{\psi} \mbox{ is a tempered measure} \}\,.
\end{align*}

\section{Almost periodic Distributions}

It is well known \cite{EBE} that the Fourier transform of a finite measure $\mu$ is a weakly almost periodic function $f$ and that, under the Fourier transform, the Lebesgue decomposition of $\mu$ into its pure point and continuous parts corresponds to the Eberlein decomposition of the weakly almost periodic function $f$ \cite{EBE2}.

Based on the above results, Argabright and deLamadrid extended the notion of almost periodicity from functions to measures via convolution, and showed that the correspondence of the respective decompositions under the Fourier transform can be extended to translation bounded Fourier transformable measures \cite{ARMA}.

We shall show that the same approach also works for tempered distributions. In this section we define the Eberlein decomposition for tempered distributions and in the next section, we prove the equivalence of the decompositions under the Fourier transform.
As usual, we denote the spaces of weakly almost periodic, strongly almost periodic and null weakly almost periodic functions by $WAP(\RR^d)$, $SAP(\RR^d)$ and  $WAP_0(\RR^d)$ respectively. For the definitions of these spaces and a review of the existing theory in the setting of measures, we refer the reader to \cite{MoSt}.

\begin{definition}
A tempered distribution $\psi$ is called respectively {\em weakly, strongly} or {\em null weakly almost periodic} if and only if for all $f \in \cS(\RR^d)$ the function $\psi *f$ is  respectively weakly, strongly or null weakly almost periodic.
We shall denote the corresponding spaces of tempered distributions by $\fWAP(\RR^d), \fSAP(\RR^d)$ and $\fWAP_0(\RR^d)$ respectively.
\end{definition}

\begin{remark}
As almost periodic functions must be uniformly continuous and bounded, it is implicitly assumed in the definition that $\psi *f \in \Cu$ for all $f \in \cS(\RR^d)$. Therefore $\fWAP(\RR^d), \fSAP(\RR^d)$ and $\fWAP_0(\RR^d)$ are subspaces of $\cSp_\infty(\RR^d)$.
\end{remark}

\begin{remark}
In \cite{LAG}, Lagarias introduced a notion of uniformly almost periodic distributions which is similar to our definition, but uses $\cD$ as the space of test functions instead of $\cS$. All of the results we prove about translation boundedness and Eberlein convolution can be shown with either $\cD$ or $\cS$ as test functions. We will show next in Lemma \ref{Lemma Lag} that for translation bounded tempered distributions the two definitions are equivalent.
\end{remark}

Let us recall that if $f,g \in \cS$ then $f*g \in \cS$. This allows us look at convolutions of the form
\[
\psi*(f*g) \,;\, \psi \in \cSp , f,g \in \cS\,.
\]

Recall from the previous section that for $\psi\in\cSp, f\in\cS$, we have $\psi\ast f\in\cSp$. Further, we have the following (see for example \cite[Theorem 7.19]{rudinFA}).

\begin{lemma}\label{triple conv}
Let $\psi \in \cSp$ and $f,g \in \cS$. Then $(\psi*f)*g$ and  $\psi*(f*g)$ are well defined and
\[
(\psi*f)*g = \psi*(f*g)\,.
\]
\end{lemma}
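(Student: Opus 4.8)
The plan is to prove the two claims separately: first that each of the convolutions $(\psi*f)*g$ and $\psi*(f*g)$ is well-defined, and then that they coincide. For the well-definedness, I would invoke Lemma~\ref{lem conv in S} (cited here as the standard result): since $\psi\in\cSp$ and $f\in\cS$, the convolution $\psi*f$ lies in $\cSp$, and since $f,g\in\cS$ gives $f*g\in\cS$, both $(\psi*f)*g$ and $\psi*(f*g)$ are again of the form (tempered distribution) $*$ (Schwartz function), hence well-defined by the definition of convolution recalled in the introduction, namely $\varphi*h(t)=\varphi(T_t h_{\n})$.

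For the equality itself, I would work on the Fourier side, which is where the associativity becomes transparent. By Lemma~\ref{lem conv in S} we have $\widehat{\psi*f}=\widehat{\psi}\,\widehat{f}$, and iterating, $\widehat{(\psi*f)*g}=\widehat{\psi*f}\,\widehat{g}=\widehat{\psi}\,\widehat{f}\,\widehat{g}$. On the other side, since $f*g\in\cS$ we have $\widehat{f*g}=\widehat{f}\,\widehat{g}$ (the classical convolution theorem for Schwartz functions), so $\widehat{\psi*(f*g)}=\widehat{\psi}\,\widehat{f*g}=\widehat{\psi}\,\widehat{f}\,\widehat{g}$. Thus the Fourier transforms of the two sides agree. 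Since the Fourier transform is a bijection on $\cSp$, it follows that $(\psi*f)*g=\psi*(f*g)$ as tempered distributions.

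The one subtlety worth flagging is that all the products appearing, such as $\widehat{\psi}\,\widehat{f}$, are products of a tempered distribution with a Schwartz function, which is a well-defined tempered distribution; the associativity $(\widehat{\psi}\,\widehat{f})\,\widehat{g}=\widehat{\psi}\,(\widehat{f}\,\widehat{g})$ then holds because for the pointwise product of a distribution with smooth (here Schwartz) functions, multiplication is associative, reducing to the pointwise identity $\widehat{f}\,\widehat{g}$ computed in $\cS$. I expect the main obstacle, such as it is, to be purely bookkeeping: confirming that each intermediate object stays in the class for which the product and the relation $\widehat{\varphi*h}=\widehat{\varphi}\,\widehat{h}$ have been established, rather than any genuine analytic difficulty. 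An alternative, more hands-on route avoids the Fourier transform entirely by unwinding the definition $\varphi*h(t)=\varphi(T_t h_{\n})$ and using Fubini-type interchange for the Schwartz functions $f,g$; but the Fourier-side argument is cleaner and rests directly on Lemma~\ref{lem conv in S}, so that is the approach I would record.
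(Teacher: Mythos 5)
Your proof is correct and is essentially the canonical one: the paper gives no argument of its own here, simply citing \cite[Thm.~7.19]{rudinFA}, and your route (well-definedness via Lemma \ref{lem conv in S}, then $\widehat{(\psi*f)*g}=\widehat{\psi}\,\widehat{f}\,\widehat{g}=\widehat{\psi*(f*g)}$ together with injectivity of the Fourier transform on $\cSp$) is exactly the standard proof of that cited theorem. The associativity of the pointwise products is, as you note, immediate from the definition $(\widehat{\psi}\,\widehat{f})(h)=\widehat{\psi}(\widehat{f}h)$, so there is no gap.
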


\begin{lemma}\label{Lemma Lag} Let $\psi \in \cSp$. Then
\begin{itemize}
\item[i)] $\psi \in \fWAP(\RR^d)$ if and only if $\psi \in \cSp_\infty(\RR^d)$ and for all $f \in \cD$ we have $\psi*f \in WAP(\RR^d)$.
\item[ii)] $\psi \in \fSAP$ if and only if $\psi \in \cSp_\infty(\RR^d)$ and for all $f \in \cD$ we have $\psi*f \in SAP(\RR^d)$.
\item[iii)] $\psi \in \fWAP_0$ if and only if $\psi \in \cSp_\infty(\RR^d)$ and for all $f \in \cD$ we have $\psi*f \in WAP_0(\RR^d)$.
\end{itemize}
\end{lemma}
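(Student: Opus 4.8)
The plan is to treat all three equivalences at once, since the only feature of the target spaces we shall exploit is that each of $WAP(\RR^d)$, $SAP(\RR^d)$ and $WAP_0(\RR^d)$ is a closed subspace of $(\Cu, \norm{\cdot}_\infty)$.

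The forward implications are immediate. If $\psi \in \fWAP(\RR^d)$, then $\psi * f \in WAP(\RR^d)$ for every $f \in \cS$; as $\cD \subseteq \cS$ this holds in particular for every $f \in \cD$, and membership $\psi \in \cSp_\infty(\RR^d)$ is furnished by the remark following the definition. Identical reasoning settles the forward direction of (ii) and (iii).

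For the converse of (i), assume $\psi \in \cSp_\infty(\RR^d)$ and $\psi * f \in WAP(\RR^d)$ for all $f \in \cD$; I must upgrade this to arbitrary $f \in \cS$. By Proposition \ref{tb Prop 1}(iii), translation boundedness provides $M, N \in \ZZ_+$ and $C > 0$ with $\norm{\psi * h}_\infty \leq C \norm{h}_{M,N}$ for every $h \in \cS$. Given $f \in \cS$, I would approximate it in the seminorm $\norm{\cdot}_{M,N}$ by compactly supported functions: fixing $\chi \in \cD$ with $1_{B_1} \leq \chi \leq 1_{\overline{B_2}}$ and setting $g_n(x) := \chi(x/n) f(x)$, each $g_n$ lies in $\cD$, and I claim $\norm{f - g_n}_{M,N} \to 0$. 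Granting this, the estimate yields $\norm{\psi * f - \psi * g_n}_\infty \leq C \norm{f - g_n}_{M,N} \to 0$, so $\psi * g_n \to \psi * f$ uniformly; since each $\psi * g_n \in WAP(\RR^d)$ and $WAP(\RR^d)$ is uniformly closed, the limit $\psi * f$ also lies in $WAP(\RR^d)$. Replacing $WAP$ by $SAP$ or $WAP_0$ throughout gives the converses of (ii) and (iii).

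The heart of the matter is the approximation claim $\norm{f - g_n}_{M,N} \to 0$. Writing $f - g_n = (1 - \chi(\cdot/n)) f$, this difference is supported in $\{|x| \geq n\}$, and upon applying $D^\beta$ and expanding by the Leibniz rule the derivatives landing on $\chi(\cdot/n)$ produce factors $n^{-|\gamma|}$ supported in the corona $\{n \leq |x| \leq 2n\}$. On each such region the rapid decay of $x^\alpha D^{\beta - \gamma} f$ forces every contribution to $x^\alpha D^\beta(f - g_n)$ to tend to $0$ uniformly as $n \to \infty$; it is precisely the Schwartz decay of $f$, and not mere boundedness, that is needed here. The remaining ingredient, that $WAP$, $SAP$ and $WAP_0$ are closed under uniform limits, is classical and can be quoted from the cited literature on almost periodic functions.
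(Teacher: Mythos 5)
Your proof is correct, but it takes a genuinely different route from the paper. The paper handles the converse by mollifying on the function side: it takes an approximate identity $\{g_\alpha\}\subseteq\cD$ for $(\Cu,*)$, writes $(\psi*f)*g_\alpha=(\psi*g_\alpha)*f$ via the associativity lemma, invokes the Argabright--de Lamadrid result that convolving a $WAP(\RR^d)$ function with an $L^1(\RR^d)$ function stays in $WAP(\RR^d)$ (and likewise for $SAP$ and $WAP_0$), and then passes to the uniform limit $(\psi*f)*g_\alpha\to\psi*f$ using closedness of $WAP(\RR^d)$ in $\Cu$. You instead truncate on the test-function side: approximate $f\in\cS$ by the cutoffs $g_n=\chi(\cdot/n)f\in\cD$ in the seminorm $\norm{\cdot}_{M,N}$, and use the quantitative bound of Proposition \ref{tb Prop 1}(iii) to convert this into uniform convergence $\psi*g_n\to\psi*f$, again finishing with uniform closedness. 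Your Leibniz-rule verification that $\norm{f-g_n}_{M,N}\to 0$ is sound (the $\gamma=0$ term dies by rapid decay of $x^\alpha D^\beta f$ on $\{|x|\geq n\}$, the $\gamma\neq 0$ terms carry factors $n^{-|\gamma|}$ and are confined to the corona), so there is no gap. What the two approaches buy: the paper's argument imports the nontrivial $L^1$-convolution stability of the three almost-periodicity classes but needs no quantitative estimates, and the same mollification template is reused later in the paper for translation bounded measures, where Schwartz-seminorm bounds are unavailable; your argument is more self-contained, resting only on results already proved in the paper (translation boundedness in the form of Proposition \ref{tb Prop 1}(iii)), the density of $\cD$ in $\cS$, and the classical fact that $WAP$, $SAP$ and $WAP_0$ are uniformly closed --- and it genuinely treats all three statements at once, since uniform closedness is the only property of the target space it uses.
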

\begin{proof} We prove $(i)$, the other two are identical.

The forward implication is obvious as $\fWAP \subset \cSp_\infty(\RR^d)$ and $\cD \subset \cS$.

For the reverse implication, let $f \in \cS$. By translation boundedness we know that $\psi*f \in \Cu$.
Now, let $g_\alpha \in \cD$ be an approximate identity for $(\Cu, *)$. Then we have $\psi*g_\alpha \in WAP(\RR^d)$ and hence, as $f \in L^1(\RR^d)$ we also have \cite{ARMA} $(\psi*g_\alpha)*f \in WAP(\RR^d)$. By Lemma \ref{triple conv} we thus have
\[
(\psi*f)*g_\alpha=(\psi*g_\alpha)*f \in WAP(\RR^d) \,.
\]
As $(\psi*f)*g_\alpha$ converges uniformly to $\psi*f$ and $WAP(\RR^d)$ is closed in $\Cu$, we get $\psi*f \in WAP(\RR^d)$
\end{proof}

We now show that every tempered distribution with measure Fourier transform is weakly almost periodic. This class will thus always contain the subclass of positive definite tempered distributions, and hence all of our autocorrelations.

\begin{theorem}\label{FT are wap} Let $\psi \in \cSp$. If $\widehat{\psi}$ is a measure, then $\psi \in \fWAP(\RR^d)$.
\end{theorem}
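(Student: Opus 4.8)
The plan is to reduce the statement to the classical theorem of Eberlein \cite{EBE} that the Fourier transform of a finite measure is a weakly almost periodic function. By the definition of $\fWAP(\RR^d)$, it suffices to show that $\psi * f \in WAP(\RR^d)$ for every $f \in \cS$, so the whole argument is of a ``soft'' nature: no estimate beyond those already established is required.

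First I would record that the membership $\psi \in \cSp_\infty(\RR^d)$ (implicit in the definition of $\fWAP$, since almost periodic functions are bounded and uniformly continuous) is automatic here. Indeed, as $\widehat{\psi}$ is a measure, Corollary \ref{cor measure FT is uc} gives that $\psi$ is translation bounded on $\cS$, so that $\psi * f \in \Cu$ for every $f \in \cS$; this guarantees that the condition $\psi * f \in WAP(\RR^d)$ is meaningful.

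The heart of the argument is the identification of $\psi * f$ as the inverse Fourier transform of a \emph{finite} measure. Fix $f \in \cS$. By Proposition \ref{convolution is CU} we have the representation $\psi * f = (\widehat{\psi}\,\widehat{f})^\vee$, and the computation given there shows moreover that $\widehat{\psi}\,\widehat{f}$ is a finite measure, since its total mass equals $\widehat{\psi}(\widehat{f})$, which is finite. Thus $\psi * f$ is precisely the (inverse) Fourier transform of a finite measure, and the classical Eberlein result \cite{EBE} — that the Fourier transform, equivalently the inverse Fourier transform, of a finite measure lies in $WAP(\RR^d)$ — yields $\psi * f \in WAP(\RR^d)$. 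As $f \in \cS$ was arbitrary, we conclude $\psi \in \fWAP(\RR^d)$.

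I expect no serious obstacle: the only point that genuinely needs care is verifying that $\widehat{\psi}\,\widehat{f}$ is finite rather than merely tempered, which is exactly what the bound in Proposition \ref{convolution is CU} supplies, together with quoting the Fourier--Stieltjes-transform-is-$WAP$ theorem in the correct form. One could alternatively first invoke Lemma \ref{Lemma Lag} to reduce the test functions to $f \in \cD$, but since Proposition \ref{convolution is CU} already applies to all $f \in \cS$, this reduction is unnecessary.
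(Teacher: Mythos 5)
Your proposal is correct and follows essentially the same route as the paper's own proof: fix $f \in \cS$, use Proposition \ref{convolution is CU} to identify $\psi * f$ as the inverse Fourier transform of the finite measure $\widehat{f}\,\widehat{\psi}$, and then invoke Eberlein's theorem that the Fourier transform of a finite measure is weakly almost periodic. The extra remarks (translation boundedness via Corollary \ref{cor measure FT is uc}, and the observation that the reduction to $\cD$ is unnecessary) are sound but not needed.
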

\begin{proof}  Let $f \in \cS$. Then, by Proposition \ref{convolution is CU}, the function $\psi *f$ is the inverse Fourier transform of the finite measure $\widehat{f}\widehat{\psi}$.
The claim now follows from the fact that the Fourier transform of a finite measure is a weakly almost periodic function (see \cite{EBE2}).
\end{proof}

\begin{corollary}\label{PD are wap}
Let $\psi \in \cSp(\RR^d)$ be positive definite. Then $\psi \in  \fWAP(\RR^d)$.
\end{corollary}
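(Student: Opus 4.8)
The final statement to prove is Corollary \ref{PD are wap}: every positive definite tempered distribution $\psi \in \cSp(\RR^d)$ lies in $\fWAP(\RR^d)$.

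The plan is to reduce this immediately to the preceding Theorem \ref{FT are wap}, which already establishes that any tempered distribution whose Fourier transform is a measure is weakly almost periodic. The only thing I need to supply is the observation that a positive definite tempered distribution does indeed have measure Fourier transform. This is precisely the content of the Bochner--Schwartz theorem, which was recalled in the excerpt just before Corollary \ref{cor posdef is uc}: the Fourier transform of a positive definite tempered distribution is a (positive) tempered measure. Thus the structure of the proof is a two-line chain of implications.

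Concretely, first I would invoke the Bochner--Schwartz theorem to conclude that, since $\psi$ is positive definite, $\widehat{\psi}$ is a positive tempered measure; in particular $\widehat{\psi}$ is a measure. Then I would apply Theorem \ref{FT are wap} directly to $\psi$: having a measure Fourier transform is exactly the hypothesis of that theorem, and its conclusion is $\psi \in \fWAP(\RR^d)$, which is what we want. No intermediate estimates or van Hove arguments are needed here, since all the analytic work (convolving with $f \in \cS$, identifying $\psi \ast f$ as the inverse Fourier transform of the finite measure $\widehat{f}\,\widehat{\psi}$, and using that the transform of a finite measure is weakly almost periodic) has already been carried out inside Theorem \ref{FT are wap} and Proposition \ref{convolution is CU}.

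There is essentially no obstacle to overcome, as the statement is a genuine corollary: the substantive hurdle was surmounted in Theorem \ref{FT are wap}, and the positive-definiteness hypothesis feeds into it through the standard structural result on positive definite distributions. If one wanted to be scrupulous, the single point worth flagging is that ``measure'' in the hypothesis of Theorem \ref{FT are wap} must be read to include the positive tempered measures produced by Bochner--Schwartz; this is automatic since a positive tempered measure is in particular a measure. Hence the proof is complete once these two citations are chained together.
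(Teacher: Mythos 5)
Your proposal is correct and matches the paper's intended argument exactly: the corollary follows by chaining the Bochner--Schwartz theorem (so $\widehat{\psi}$ is a positive tempered measure) with Theorem \ref{FT are wap}, which is precisely why the paper states it as an immediate corollary without a separate proof. Nothing is missing.
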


We are now ready to construct the Eberlein decomposition for the space of weakly almost periodic distributions.

\begin{theorem}\label{WAP decomposition}
\[
\fWAP(\RR^d)=  \fSAP(\RR^d)\bigoplus \fWAP_0(\RR^d) \,.
\]
For $\psi\in\fWAP(\RR^d)$, we will denote this decomposition by
\[
\psi=\psi_{\mathsf{s}}+\psi_0 \,.
\]
Moreover, for all $f \in \cS$ we have
\[
\left(\psi*f \right)_{\mathsf{s}}= \psi_{\mathsf{s}}*f \;;\;\; \left(\psi*f \right)_{0}= \psi_{0}*f \,.
\]
\end{theorem}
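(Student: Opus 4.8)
The plan is to lift the classical Eberlein decomposition of $WAP$ functions to the distributional level through the identity $\psi(f) = (\psi * f_{\n})(0)$, which holds for every $\psi\in\cSp$ and $f\in\cS$ because $\psi * g(0) = \psi(g_{\n})$. Since $\fWAP(\RR^d)\subseteq\cSp_\infty(\RR^d)$, I fix $N$ with $\psi\in\cSp_N$. For each $f\in\cS$ the function $\psi*f_{\n}$ lies in $WAP(\RR^d)$ and hence admits its function-level decomposition $\psi*f_{\n} = (\psi*f_{\n})_{\mathsf{s}} + (\psi*f_{\n})_0$ with $(\psi*f_{\n})_{\mathsf{s}}\in SAP(\RR^d)$ and $(\psi*f_{\n})_0\in WAP_0(\RR^d)$. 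Guided by the identity above, I define
\[
\psi_{\mathsf{s}}(f):=(\psi*f_{\n})_{\mathsf{s}}(0)\,,\qquad \psi_0(f):=(\psi*f_{\n})_0(0)\,.
\]
Linearity of both functionals is immediate, since $f\mapsto\psi*f_{\n}$ is linear, the function-level projection $h\mapsto h_{\mathsf{s}}$ is linear on $WAP(\RR^d)$, and evaluation at $0$ is linear.

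Next I would verify that $\psi_{\mathsf{s}}$ and $\psi_0$ are genuinely tempered distributions. The key input is that the function-level projection onto $SAP(\RR^d)$ is a contraction for $\norm{\cdot}_\infty$: being the Jacobs--de Leeuw--Glicksberg projection, a weak-operator limit of the isometries $T_t$ (equivalently, $h_{\mathsf{s}}$ lies in the closed convex hull of the orbit $\{T_t h\}$), it satisfies $\norm{h_{\mathsf{s}}}_\infty\le\norm{h}_\infty$; see \cite{MoSt}. Combined with translation boundedness this gives
\[
|\psi_{\mathsf{s}}(f)|\le\norm{(\psi*f_{\n})_{\mathsf{s}}}_\infty\le\norm{\psi*f_{\n}}_\infty\le\norm{\psi}_N\,\norm{f_{\n}}_N=\norm{\psi}_N\,\norm{f}_N\,,
\]
so $\psi_{\mathsf{s}}\in\cSp_N$, and the same bound yields $\psi_0\in\cSp_N$.

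The heart of the argument is the convolution identity $\psi_{\mathsf{s}}*f=(\psi*f)_{\mathsf{s}}$. I would compute $\psi_{\mathsf{s}}*f(t)=\psi_{\mathsf{s}}(T_t f_{\n})$ and use $(T_t f_{\n})_{\n}=T_{-t}f$ together with $\psi*(T_{-t}f)=T_{-t}(\psi*f)$ to rewrite this as $\bigl(T_{-t}(\psi*f)\bigr)_{\mathsf{s}}(0)$. The decisive point is translation-equivariance of the function-level decomposition: since $SAP(\RR^d)$ and $WAP_0(\RR^d)$ are translation invariant, uniqueness forces $(T_{-t}h)_{\mathsf{s}}=T_{-t}(h_{\mathsf{s}})$, so $\bigl(T_{-t}(\psi*f)\bigr)_{\mathsf{s}}(0)=(\psi*f)_{\mathsf{s}}(t)$. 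Hence $\psi_{\mathsf{s}}*f=(\psi*f)_{\mathsf{s}}$ and, likewise, $\psi_0*f=(\psi*f)_0$. These identities at once deliver $\psi_{\mathsf{s}}\in\fSAP(\RR^d)$, $\psi_0\in\fWAP_0(\RR^d)$, and the asserted compatibility with convolution; summing them gives $(\psi_{\mathsf{s}}+\psi_0)*f=\psi*f$ for all $f\in\cS$, whence $\psi=\psi_{\mathsf{s}}+\psi_0$ because a tempered distribution is determined by its convolutions with all test functions.

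Finally, for directness of the sum I would observe that $\fSAP(\RR^d)\cap\fWAP_0(\RR^d)=\{0\}$: any $\phi$ in the intersection satisfies $\phi*f\in SAP(\RR^d)\cap WAP_0(\RR^d)=\{0\}$ for every $f$, so $\phi=0$. Uniqueness of the decomposition then follows from uniqueness of the function-level Eberlein decomposition applied to each $\psi*f$. I expect the only genuine obstacle to be the continuity step, namely having the sup-norm contractivity of the function-level projection available; the remainder is a transfer of the classical decomposition through the map $f\mapsto(\psi*f_{\n})(0)$ and the translation-equivariance that makes convolution intertwine the distributional and function-level decompositions.
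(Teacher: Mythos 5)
Your proposal is correct and follows essentially the same route as the paper: both define $\psi_{\mathsf{s}}(f):=(\psi*f_{\n})_{\mathsf{s}}(0)$ and $\psi_0(f):=(\psi*f_{\n})_0(0)$, derive the convolution identities from translation-equivariance and uniqueness of the function-level Eberlein decomposition, and obtain directness of the sum from $SAP(\RR^d)\cap WAP_0(\RR^d)=\{0\}$. The only difference is one of detail: where the paper simply asserts continuity of the composite maps $\cS\to WAP(\RR^d)\to SAP(\RR^d)\to\CC$, you justify it explicitly via the sup-norm contractivity of the Eberlein projection together with translation boundedness ($\fWAP(\RR^d)\subseteq\cSp_\infty$), which is a welcome sharpening of the same argument.
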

\begin{proof}
For $\psi \in \fWAP(\RR^d)$ and $f \in \cS$, the function $\psi * (f_{\n}) \in WAP(\RR^d)$ and we may decompose $\psi * (f_{\n})$ into the functions
$\left( \psi * (f_{\n})\right)_{\mathsf{s}} \in SAP(\RR^d)$ and $\left( \psi * (f_{\n})\right)_{\mathsf{0}} \in WAP_0(\RR^d)$.

Therefore, for fixed $\psi\in\fWAP(\RR^d)$, the following functions are continuous:
\[ \cS(\RR^d) \to WAP(\RR^d) \to SAP(\RR^d) \to \CC \]
\[ f \mapsto \psi * (f_{\n}) \mapsto \left( \psi * (f_{\n})\right)_{\mathsf{s}} \mapsto \left( \psi * (f_{\n})\right)_{\mathsf{s}}(0) \]
and
\[ \cS(\RR^d) \to WAP(\RR^d) \to WAP_0(\RR^d) \to \CC \]
\[ f \mapsto \psi * (f_{\n}) \mapsto \left( \psi * (f_{\n})\right)_{\mathsf{0}} \mapsto \left( \psi * (f_{\n})\right)_{\mathsf{0}}(0)\,. \]
Therefore, we can define continuous mappings $S, T : \cS(\RR^d) \to \CC$ via
\[
T(f)=\left( \psi * (f_{\n})\right)_{\mathsf{s}}(0)
\]
and
\[
S(f)=\left( \psi * (f_{\n})\right)_{0}(0)\,.
\]
These mappings define two tempered distributions $\psi_{\mathsf{s}}$ and $\psi_0$.
It follows immediately from the construction that
\[
\psi=\psi_{\mathsf{s}}+\psi_0
\]
and that for all $f\in\cS$,
\[
\left(\psi*f \right)_{\mathsf{s}}= \left(\psi\right)_{\mathsf{s}}*f \,;\, \left(\psi*f \right)_{0}= \left(\psi\right)_{0}*f \,.
\]
To complete the proof, we need to show the uniqueness of the decomposition. For this it suffices to prove that
\[\fSAP(\RR^d) \cap \fWAP_0(\RR^d)=\{ 0 \} \,.\]
This follows immediately from the fact that $SAP(\RR^d) \cap WAP_0(\RR^d)=\{ 0 \}$ (see \cite{EBE3}).

If $\psi \in \fSAP(\RR^d) \cap \fWAP_0(\RR^d)$, then for all $f \in \cS$ we have $\psi *f \in  SAP(\RR^d) \cap WAP_0(\RR^d)$ and therefore $\psi *f =0$. In particular $\psi(f_{\n})=0$ for all $f \in \cS$.
\end{proof}

\begin{proposition} Let $\psi, \varphi \in \cSp$ be such that $\varphi$ has compact support.
\begin{itemize}
\item[(i)] If $\psi \in \fWAP(\RR^d)$ then $\psi * \varphi \in \fWAP(\RR^d)$.
\item[(ii)] If $\psi \in \fSAP(\RR^d)$ then $\psi * \varphi \in \fSAP(\RR^d)$.
\item[(iii)]If $\psi \in \fWAP_0(\RR^d)$ then $\psi * \varphi \in \fWAP_0(\RR^d)$.
\end{itemize}
\end{proposition}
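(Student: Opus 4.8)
The plan is to reduce all three parts to a single computation and then read off the conclusion from the definitions, since the three statements have literally the same proof. I would handle them uniformly by letting $\cV$ denote any one of $WAP(\RR^d)$, $SAP(\RR^d)$, $WAP_0(\RR^d)$; in each case the content of the claim is that if $\psi*f\in\cV$ for every $f\in\cS$, then $(\psi*\varphi)*f\in\cV$ for every $f\in\cS$.

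First I would record that $\psi*\varphi$ is genuinely a translation bounded tempered distribution. As $\psi\in\fWAP(\RR^d)\subseteq\cSp_\infty(\RR^d)$ and $\varphi$ has compact support, Proposition \ref{conv is also tb} gives $\psi*\varphi\in\cSp_\infty(\RR^d)$. This is what lets us even speak of $(\psi*\varphi)*f$ as a bounded, uniformly continuous function, which is a prerequisite for membership in any of the spaces $\cV$.

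The central step is the associativity identity $(\psi*\varphi)*f=\psi*(\varphi*f)$ for $f\in\cS$, which holds exactly as in the proof of Proposition \ref{conv is also tb}. The compact support of $\varphi$ is then what makes everything work: by Lemma \ref{conv compact supp is Schwartz}, $\varphi*f\in\cS$, so $\varphi*f$ is itself an admissible test function. Applying the hypothesis $\psi\in\fWAP(\RR^d)$ (respectively $\fSAP(\RR^d)$, $\fWAP_0(\RR^d)$) to this particular test function yields $\psi*(\varphi*f)\in\cV$. Combining with the associativity identity gives $(\psi*\varphi)*f\in\cV$ for all $f\in\cS$, which is precisely the assertion that $\psi*\varphi\in\fWAP(\RR^d)$ (respectively $\fSAP(\RR^d)$, $\fWAP_0(\RR^d)$).

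I do not expect any serious obstacle here: the argument is a direct application of the stability of $\cS$ under convolution with a compactly supported distribution (Lemma \ref{conv compact supp is Schwartz}) together with associativity of convolution. The only point requiring minor care is justifying $(\psi*\varphi)*f=\psi*(\varphi*f)$ at the level of tempered distributions, but this is already carried out in the proof of Proposition \ref{conv is also tb} and requires no new input.
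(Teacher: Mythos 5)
Your proof is correct and follows essentially the same route as the paper: both rest on Lemma \ref{conv compact supp is Schwartz} (so that $\varphi*f\in\cS$) together with the associativity $(\psi*\varphi)*f=\psi*(\varphi*f)$, after which the hypothesis applied to the test function $\varphi*f$ gives the conclusion. Your additional remark that $\psi*\varphi\in\cSp_\infty(\RR^d)$ via Proposition \ref{conv is also tb} is a harmless (and slightly more careful) extra, but not a different argument.
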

\begin{proof}
The claim follows immediately from the observation that for $f \in \cS$, $\varphi\in\cSp$ with compact support, we have by Lemma \ref{conv compact supp is Schwartz} that $\varphi*f \in \cS$ and
\[
(\psi*\varphi)*f = \psi*( \varphi*f) \,.
\]
\end{proof}

\begin{example}
Let $\alpha$ be a multi-index on $\RR^d$ and consider the tempered distribution
\[
\psi :=\sum_{x \in \ZZ^d} D^{\alpha}\delta_z = D^{\alpha}\delta_{\ZZ^d} = D^{\alpha}\delta_0 \ast \delta_{\ZZ^d} \,.
\]
As $D^{\alpha}\delta_0$ has compact support and $\delta_{\ZZ^d}\in \fSAP(\RR^d)$, we see that $\psi \in \fSAP(\RR^d)$.
\end{example}

\begin{corollary}
Let $\alpha$ be a multi-index on $\RR^d$.
\begin{itemize}
\item[(i)] If $\psi \in \fWAP(\RR^d)$ then $D^\alpha\psi  \in \fWAP(\RR^d)$.
\item[(ii)] If $\psi \in \fSAP(\RR^d)$ then $D^\alpha\psi \in \fSAP(\RR^d)$.
\item[(iii)]If $\psi \in \fWAP_0(\RR^d)$ then $D^\alpha\psi  \in \fWAP_0(\RR^d)$.
\end{itemize}
\end{corollary}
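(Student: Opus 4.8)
The plan is to deduce all three statements at once from the immediately preceding proposition on convolution with compactly supported distributions, by writing the derivative as exactly such a convolution. First I would observe that $D^\alpha\delta_0$ is a tempered distribution supported at the origin, hence compactly supported, and that $D^\alpha\psi = \psi \ast D^\alpha\delta_0$. This identity follows from $\psi \ast \delta_0 = \psi$ together with the standard distributional rule $D^\alpha(\psi\ast\varphi)=\psi\ast D^\alpha\varphi$, applied with $\varphi=\delta_0$. Once this is in hand, parts (i), (ii) and (iii) are precisely the corresponding parts of the preceding proposition applied with $\varphi = D^\alpha\delta_0$, and nothing further is needed.

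Alternatively, I would argue directly from the definitions of $\fWAP$, $\fSAP$ and $\fWAP_0$. Fix $f \in \cS(\RR^d)$. Using the commutation identity $(D^\alpha\psi)\ast f = D^\alpha(\psi\ast f) = \psi\ast(D^\alpha f)$ and the fact that $D^\alpha f \in \cS(\RR^d)$ whenever $f \in \cS(\RR^d)$, the function $(D^\alpha\psi)\ast f$ coincides with $\psi \ast (D^\alpha f)$, which lies in $WAP(\RR^d)$ (respectively $SAP(\RR^d)$, $WAP_0(\RR^d)$) precisely because $\psi$ is weakly (respectively strongly, null weakly) almost periodic. Since $f$ was arbitrary, this is exactly the defining condition for $D^\alpha\psi$ to be weakly (respectively strongly, null weakly) almost periodic, and the three cases are handled by the same line of reasoning.

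The hard part will be essentially nonexistent: the only points requiring any care are the verification of the commutation identity $(D^\alpha\psi)\ast f = \psi\ast(D^\alpha f)$ for tempered distributions and the elementary observation that the Schwartz class is closed under differentiation. Both are standard and have already been invoked in the proof of Proposition \ref{prop Dphi is also tb}, so I would simply cite that identity, note that $\psi\ast(D^\alpha f)$ inherits the relevant almost periodicity from $\psi$, and conclude. The compactly-supported-convolution route via the preceding proposition is the most economical, since it packages all three parts simultaneously.
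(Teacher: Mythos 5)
Your primary route is exactly the paper's intended proof: the corollary is stated without proof immediately after the proposition on convolution with compactly supported distributions, precisely because $D^\alpha\psi = \psi \ast D^\alpha\delta_0$ with $D^\alpha\delta_0$ supported at the origin, which is what you invoke. Your alternative direct argument via $(D^\alpha\psi)\ast f = \psi\ast(D^\alpha f)$ and closure of $\cS$ under differentiation is also correct (and mirrors the paper's proof of Proposition~\ref{prop Dphi is also tb}), so nothing is missing.
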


We complete the section by proving that for translation bounded measures the notions of almost periodicity as measures and tempered distributions coincide.

\begin{theorem} Let $\mu \in \cM(\RR^d)$. Then
\begin{itemize}
\item[(i)] $\mu \in \fWAP (\RR^d) \Leftrightarrow \mu \in \mathcal{WAP}(\RR^d)$.
\item[(ii)] $\mu \in \fSAP (\RR^d) \Leftrightarrow \mu \in \mathcal{SAP}(\RR^d)$.
\item[(iii)]$\mu \in \fWAP_0 (\RR^d) \Leftrightarrow \mu \in \mathcal{WAP}_0(\RR^d)$.
\end{itemize}
\end{theorem}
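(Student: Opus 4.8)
The plan is to funnel both sides of each equivalence into a single condition: that $\mu * f$ belongs to the relevant class ($WAP(\RR^d)$, $SAP(\RR^d)$ or $WAP_0(\RR^d)$) for every $f \in \cD$, and then to read off the equivalence. I will carry out (i) and only indicate the (verbatim) changes needed for (ii) and (iii). As in the framing of this section, $\mu$ is translation bounded as a measure, so by Corollary \ref{TB measures cor} we have $\mu \in \cSp_\infty(\RR^d)$; this is exactly the hypothesis needed to apply Lemma \ref{Lemma Lag}.

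On the distribution side, Lemma \ref{Lemma Lag}(i) gives directly
\[
\mu \in \fWAP(\RR^d) \iff \mu * f \in WAP(\RR^d) \text{ for every } f \in \cD \,.
\]
On the measure side, $\mu \in \mathcal{WAP}(\RR^d)$ means $\mu * \varphi \in WAP(\RR^d)$ for every $\varphi \in \Cc$. Since $\cD \subseteq \Cc$, this immediately implies the right-hand condition above, giving one implication for free. For the converse I would reuse the approximate-identity argument from the proof of Lemma \ref{Lemma Lag}: fix $\varphi \in \Cc$ and an approximate identity $\{g_\alpha\} \subseteq \cD$ for $(\Cu, *)$. Each $\varphi * g_\alpha$ lies in $\cD$, so $\mu * (\varphi * g_\alpha) \in WAP(\RR^d)$ by hypothesis; since $\mu$ is translation bounded as a measure, $\mu * \varphi \in \Cu$, whence
\[
\mu * (\varphi * g_\alpha) = (\mu * \varphi) * g_\alpha \longrightarrow \mu * \varphi
\]
uniformly, and closedness of $WAP(\RR^d)$ in $(\Cu, \norm{\cdot}_\infty)$ forces $\mu * \varphi \in WAP(\RR^d)$. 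Thus the measure-side membership is also equivalent to ``$\mu * f \in WAP(\RR^d)$ for every $f \in \cD$'', and chaining the two equivalences proves (i).

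For (ii) and (iii) the same two reductions apply verbatim, using parts (ii) and (iii) of Lemma \ref{Lemma Lag} and the fact that $SAP(\RR^d)$ and $WAP_0(\RR^d)$ are likewise closed in $(\Cu, \norm{\cdot}_\infty)$. The step I expect to be the main obstacle is the bookkeeping between the two notions of translation boundedness: the distribution classes live inside $\cSp_\infty(\RR^d)$, whereas the measure classes presuppose translation boundedness as a measure, and these two conditions differ in general (cf. Remark \ref{tb as td not meas}). This is harmless here precisely because we restrict to measures $\mu$ that are translation bounded as measures, and Corollary \ref{TB measures cor} then supplies the $\cSp_\infty(\RR^d)$ membership that Lemma \ref{Lemma Lag} requires; the only genuine content beyond this is the density bridge $\cD = \cS \cap \Cc$ between the test-function spaces $\cS$ and $\Cc$, handled by the uniform-approximation step above.
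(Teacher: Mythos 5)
Your proposal is correct, and while it rests on the same two mechanisms as the paper's proof --- an approximate identity for $(\Cu,\ast)$ and the closedness of $WAP(\RR^d)$, $SAP(\RR^d)$, $WAP_0(\RR^d)$ in $(\Cu,\norm{\cdot}_\infty)$ --- it is organized along a genuinely different route. The paper proves the two implications of (i) directly: for $\mathcal{WAP}(\RR^d)\Rightarrow\fWAP(\RR^d)$ it convolves with $f\in\cS\subseteq L^1(\RR^d)$ and invokes \cite[Cor.~5.1, Cor.~5.2]{ARMA} (stability of weakly almost periodic \emph{measures} under convolution with $L^1$ functions, and $\mathcal{WAP}(\RR^d)\cap\Cu = WAP(\RR^d)$); for the converse it mollifies the \emph{measure}, writing $(g_\alpha\ast\mu)\ast f$ with $f\in\Cc$, which requires the additional fact that convolving a $WAP$ \emph{function} with a compactly supported function stays $WAP$. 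You instead funnel both memberships into the single condition ``$\mu\ast f\in WAP(\RR^d)$ for all $f\in\cD$'': on the distribution side via Lemma \ref{Lemma Lag} (whose proof in the paper contains the same function-level ARMA ingredient), and on the measure side by mollifying the \emph{test function}, $\varphi\ast g_\alpha\in\cD$, so the hypothesis applies verbatim. Note that the triple convolutions in the two arguments are the same object grouped differently, $(g_\alpha\ast\mu)\ast\varphi = \mu\ast(\varphi\ast g_\alpha) = (\mu\ast\varphi)\ast g_\alpha$; your grouping is the more economical one, since beyond Lemma \ref{Lemma Lag} it needs no convolution-stability results at all, only closedness of the three function classes. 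Finally, both you and the paper work under the same implicit strengthening of the stated hypothesis, namely that $\mu$ is translation bounded \emph{as a measure} (this is the framing of the section, and it is what gives $\mu\in\cSp_\infty(\RR^d)$ via Corollary \ref{TB measures cor} and $\mu\ast\varphi\in\Cu$ for $\varphi\in\Cc$); you are right to flag that this is exactly the point where the two inequivalent notions of translation boundedness must not be conflated.
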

\begin{proof}

The proof follows immediately from the density of $\cD$ in both $\Cc$ and $\cS$.
Since the measure $\mu$ is translation bounded, it is also translation bounded as a tempered distribution and we have
\[
\mu*f \in \Cu \, \mbox{ for all }\, f \in \Cc \cup \cS\,.
\]
We now prove (i). The other two results can be shown identically.
We start by picking some $\{g_\alpha\} \subseteq \cD$ which is an approximate identity for $(\Cu, * )$.

($\Rightarrow$) Let $f \in \cS$.
As $f \in L^1(\RR^d)$ we have $\mu * f \in  \mathcal{WAP}(\RR^d)$ \cite[Cor.~5.1]{ARMA}.
Moreover, as $ \mu*f \in \Cu$ we get $\mu * f \in  \mathcal{WAP}(\RR^d) \cap \Cu = WAP(\RR^d)$ \cite[Cor.~5.2]{ARMA}. This proves the claim.

($\Leftarrow$) Let $f \in \Cc$. As $g_\alpha*\mu \in WAP(\RR)$ and $f \in \Cc$ we have, for all $\alpha$, that
\[
(g_\alpha*\mu)* f \in WAP(\RR^d)
\]
Therefore $g_\alpha*(\mu* f) \in WAP(\RR^d)$.
We know that $\mu*f \in \Cu$. Therefore  $\{g_\alpha*(\mu* f)\}$ converges uniformly to $\mu*f$. As $g_\alpha*(\mu* f) \in WAP(\RR^d)$, and $WAP(\RR^d)$ is closed in $\Cu$, it follows that $\mu*f \in WAP(\RR^d)$.
\end{proof}

\section{Tempered distributions with measure Fourier transform}

To consider the diffraction of a tempered distribution, we take, as usual, the Fourier transform of the autocorrelation. As we have stated, if the autocorrelation exists, it is a positive definite tempered distribution, and thus its Fourier transform is a positive measure (the diffraction measure).

Our primary goal is to relate the Lebesgue decomposition of the diffraction measure to the Eberlein decomposition of the autocorrelation, and we do this in this section. In fact, the positivity of the diffraction measure plays no role in our proofs. Thus we here consider the larger class of tempered distributions whose Fourier transform is a measure.

\begin{definition} We define $\cMS$ to be the space of tempered distributions whose Fourier Transform is a measure, that is:
\[
\cMS := \{ \psi \in \cSp : \widehat{\psi} \mbox{ is a measure } \} \,.
\]
\end{definition}

\begin{remark}
\begin{itemize}
\item[(i)] Any positive definite tempered distribution is in $\cMS$. The connection between positive definiteness and the space $\cMS$ is explained by Proposition \ref{PD and measure FT} or by the Bochner Schwartz theorem.
\item[(ii)] By Theorem \ref{FT are wap} we have
\[
\cMS \subseteq \fWAP(\RR^d) \,.
\]
\end{itemize}
\end{remark}

As is the case for measures, the discrete and continuous parts of the Fourier transform of a positive definite distribution are exactly the Fourier transforms of the strongly and null weakly almost periodic parts of the distribution respectively.

\begin{theorem}\label{FT}
Let $\psi \in \cMS$. Then
\[
\left(\widehat{\psi}\right)_{\mathsf{pp}} =\widehat{\psi_{\mathsf{s}}} \, \mbox{ and }\, \left(\widehat{\psi}\right)_{\mathsf{c}} =\widehat{\psi_0}  \,.
\]
\end{theorem}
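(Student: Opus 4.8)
The plan is to reduce the statement to the classical correspondence for functions recalled at the start of this section: the inverse Fourier transform of a finite measure $\nu$ is a weakly almost periodic function whose Eberlein ($SAP$/$WAP_0$) decomposition is the inverse Fourier transform of the Lebesgue (pure point/continuous) decomposition of $\nu$. Write $\mu := \widehat{\psi}$, a tempered measure; by Theorem \ref{FT are wap} we have $\psi \in \fWAP(\RR^d)$, so the Eberlein decomposition $\psi = \psi_{\mathsf s} + \psi_0$ of Theorem \ref{WAP decomposition} is available. Let $\mu_{\mathsf{pp}}$ and $\mu_{\mathsf c}$ denote the pure point and continuous parts of the locally finite measure $\mu$.

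First I would fix $f \in \cS$ and invoke Proposition \ref{convolution is CU}: the function $\psi * f = (\widehat{f}\,\mu)^\vee$ is the inverse Fourier transform of the \emph{finite} measure $\widehat{f}\,\mu$. Since $\widehat{f}$ is continuous, multiplication by $\widehat{f}$ maps atoms to atoms and cannot create atoms from an atomless measure, so the Lebesgue decomposition commutes with it: $(\widehat{f}\,\mu)_{\mathsf{pp}} = \widehat{f}\,\mu_{\mathsf{pp}}$ and $(\widehat{f}\,\mu)_{\mathsf c} = \widehat{f}\,\mu_{\mathsf c}$. Applying the classical correspondence to $\nu = \widehat{f}\,\mu$, the $SAP$ part of $\psi*f$ is $(\widehat{f}\,\mu_{\mathsf{pp}})^\vee$ and its $WAP_0$ part is $(\widehat{f}\,\mu_{\mathsf c})^\vee$. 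On the other hand, Theorem \ref{WAP decomposition} identifies these parts as $\psi_{\mathsf s}*f$ and $\psi_0*f$. Hence $\psi_{\mathsf s}*f = (\widehat{f}\,\mu_{\mathsf{pp}})^\vee$ and $\psi_0*f = (\widehat{f}\,\mu_{\mathsf c})^\vee$ for every $f \in \cS$.

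Next I would pass to Fourier transforms. By Lemma \ref{lem conv in S}, $\widehat{\psi_{\mathsf s}*f} = \widehat{\psi_{\mathsf s}}\,\widehat{f}$, while the Fourier transform of $(\widehat{f}\,\mu_{\mathsf{pp}})^\vee$ is $\widehat{f}\,\mu_{\mathsf{pp}}$. Comparing, and using that $f \mapsto \widehat{f}$ is a bijection of $\cS$, I obtain
\[
\widehat{\psi_{\mathsf s}}\, g = g\,\mu_{\mathsf{pp}} \quad\text{and}\quad \widehat{\psi_0}\, g = g\,\mu_{\mathsf c} \qquad\text{for all } g\in\cS,
\]
an identity of finite measures, noting that $g\mu$ (and hence $g\mu_{\mathsf{pp}}$, $g\mu_{\mathsf c}$) is finite for each $g\in\cS$. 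To remove the factor $g$ I would localise: given $h\in\cD$, choose $g\in\cS$ with $g\equiv 1$ on a neighbourhood of $\supp(h)$, so that $\widehat{\psi_{\mathsf s}}(h) = \widehat{\psi_{\mathsf s}}(gh) = (\widehat{\psi_{\mathsf s}}\,g)(h) = (g\mu_{\mathsf{pp}})(h) = \mu_{\mathsf{pp}}(h)$. Thus the tempered distribution $\widehat{\psi_{\mathsf s}}$ agrees with the measure $\mu_{\mathsf{pp}}$ on $\cD$, which identifies $\widehat{\psi_{\mathsf s}}$ with $\mu_{\mathsf{pp}} = (\widehat{\psi})_{\mathsf{pp}}$; the continuous case is identical.

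The main obstacle I anticipate is the bookkeeping around temperedness. A priori $\mu_{\mathsf{pp}}$ and $\mu_{\mathsf c}$ are only locally finite measures, and it is precisely the product identities $\widehat{\psi_{\mathsf s}}\,g = g\,\mu_{\mathsf{pp}}$ and $\widehat{\psi_0}\,g = g\,\mu_{\mathsf c}$, together with the localisation step, that simultaneously establish the claimed equalities and show that these parts are tempered measures. The delicate point is therefore not any single computation but ensuring that the passage from the smeared identities to the global statement is legitimate, i.e.\ that a tempered distribution agreeing on $\cD$ with a locally finite measure is that measure and inherits temperedness from the distribution.
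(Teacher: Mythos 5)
Your proposal is correct and follows essentially the same route as the paper's own proof: smear with $f\in\cS$ via Proposition \ref{convolution is CU}, apply Eberlein's correspondence for finite measures together with the identities $(\psi*f)_{\mathsf{s}}=\psi_{\mathsf{s}}*f$, $(\psi*f)_0=\psi_0*f$ from Theorem \ref{WAP decomposition}, use that multiplication by the continuous function $\widehat{f}$ commutes with the Lebesgue decomposition, and then cancel the factor $\widehat{f}$. The only difference is that you spell out the final cancellation (the localisation with $g\equiv 1$ near $\supp(h)$), a step the paper leaves implicit in its closing ``This proves the claim.''
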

\begin{proof}
Let $f \in \cS$. Then, by Proposition \ref{convolution is CU}, $\widehat{\psi}\widehat{f}$ is a finite measure whose inverse Fourier transform is $f*\psi$. Accordingly, the pure point and continuous parts of $\widehat{\psi}\widehat{f}$, $\left( \widehat{f} \widehat{\psi}\right)_{\mathsf{pp}}$ and
$\left( \widehat{f} \widehat{\psi}\right)_{\mathsf{c}}$, are finite measures. Then by \cite{EBE2}, the inverse Fourier transforms of
$\left( \widehat{f} \widehat{\psi}\right)_{\mathsf{pp}}$ and $\left( \widehat{f} \widehat{\psi}\right)_{\mathsf{c}}$ are respectively strongly and null weakly almost periodic functions.

As
\[
f*\psi = \left[ \left( \widehat{f} \widehat{\psi}\right)_{\mathsf{pp}}\right]^\vee+\left[\left( \widehat{f} \widehat{\psi}\right)_{\mathsf{c}} \right]^\vee \,,
\]
by the uniqueness of the Eberlein decomposition and Theorem \ref{WAP decomposition} we must have

\begin{equation}\label{EQ2}
\begin{aligned}
\left[ \left( \widehat{f} \widehat{\psi}\right)_{\mathsf{pp}}\right]^\vee&=(f*\psi)_{\mathsf{s}}=f*(\psi_{\mathsf{s}})\,; \\
\left[\left( \widehat{f} \widehat{\psi}\right)_{\mathsf{c}} \right]^\vee &= (f*\psi )_0=f*(\psi_0)\,.
\end{aligned}
\end{equation}

Therefore, as $\left( \widehat{f} \widehat{\psi}\right)_{\mathsf{pp}}= \widehat{f} \left( \widehat{\psi}_{\mathsf{pp}}\right)$ and $\left( \widehat{f} \widehat{\psi}\right)_{\mathsf{c}}= \widehat{f} \left( \widehat{\psi}_{\mathsf{c}}\right)$, by taking the Fourier transforms we get
\begin{eqnarray*}
\begin{split}
\widehat{f} \left(  \widehat{\psi}_{\mathsf{pp}}\right) &=\widehat{f*(\psi_{\mathsf{s}})}=\widehat{f} \widehat{\psi_{\mathsf{s}}}\,; \\
\widehat{f} \left(  \widehat{\psi}_{\mathsf{c}}\right)&=\widehat{f*(\psi_0)}=\widehat{f} \widehat{\psi_0} \,.
\end{split}
\end{eqnarray*}
This proves the claim.
\end{proof}

Theorem \ref{FT} can be used to obtain a simple characterisation of tempered distributions with a pure point measure as a Fourier transform. While the characterisation is via test functions in $\cS$, we can easily extend this to $\cD$ and obtain a condition which is easier to test.

\begin{proposition}\label{SAP characterisation}
Let $\psi \in \cMS$. Then the following are equivalent:
\begin{enumerate}
\item[(i)] $\widehat{\psi}$ is a discrete measure.

\item[(ii)] $\psi \in \fSAP(\RR^d)$.
\item[(iii)] $\psi*f \in SAP(\RR^d)$ for all $f \in \cD$.
\item[(iv)]  $\psi*f*g \in SAP(\RR^d)$ for all $f,g \in \cD$.
\item[(v)]  $\psi*f*\widetilde{f} \in SAP(\RR^d)$ for all $f \in \cD$.
\end{enumerate}
\end{proposition}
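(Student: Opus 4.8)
The plan is to establish the cycle of implications (i)$\Rightarrow$(ii)$\Rightarrow$(iii)$\Rightarrow$(iv)$\Rightarrow$(v)$\Rightarrow$(i), so that the genuinely analytic content is concentrated in the single step (v)$\Rightarrow$(i); the remaining arrows are essentially bookkeeping. Throughout, note that $\psi\in\cSp_\infty$ automatically, by Corollary \ref{cor measure FT is uc}, so all statements live in the correct space.

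For (i)$\Rightarrow$(ii) I would invoke Theorem \ref{FT}: since $(\widehat{\psi})_{\mathsf{c}} = \widehat{\psi_0}$, discreteness of $\widehat{\psi}$ forces $\widehat{\psi_0} = 0$, hence $\psi_0 = 0$ and $\psi = \psi_{\mathsf{s}} \in \fSAP(\RR^d)$. The implication (ii)$\Rightarrow$(iii) is immediate from the definition of $\fSAP$ together with $\cD \subseteq \cS$. For (iii)$\Rightarrow$(iv), observe that for $f, g \in \cD$ the convolution $f*g$ again lies in $\cD$; by Lemma \ref{triple conv} we have $\psi*f*g = \psi*(f*g)$, and applying (iii) to the test function $f*g \in \cD$ gives $\psi*f*g \in SAP(\RR^d)$. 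Finally, (iv)$\Rightarrow$(v) is the special case $g = \widetilde{f}$, which lies in $\cD$ whenever $f$ does.

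The heart of the argument is (v)$\Rightarrow$(i). For $f \in \cD$ the function $f * \widetilde{f}$ lies in $\cD \subseteq \cS$, and $\widehat{f*\widetilde{f}} = |\widehat{f}|^2$. By Proposition \ref{convolution is CU}, applied to the test function $f*\widetilde{f}$, the function $\psi*f*\widetilde{f}$ is the inverse Fourier transform of the finite measure $\widehat{\psi}\,|\widehat{f}|^2$. Assumption (v) says this function lies in $SAP(\RR^d)$; by the correspondence between the Eberlein and Lebesgue decompositions for inverse transforms of finite measures (as in \cite{EBE2}, used already in the proof of Theorem \ref{FT}), this holds if and only if the measure $\widehat{\psi}\,|\widehat{f}|^2$ is pure point. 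Splitting $\widehat{\psi} = (\widehat{\psi})_{\mathsf{pp}} + (\widehat{\psi})_{\mathsf{c}}$ and using that $|\widehat{f}|^2$ is a continuous function (so that $(\widehat{\psi})_{\mathsf{c}}\,|\widehat{f}|^2$ is again continuous and $(\widehat{\psi})_{\mathsf{pp}}\,|\widehat{f}|^2$ again pure point), uniqueness of the Lebesgue decomposition shows that (v) is equivalent to $(\widehat{\psi})_{\mathsf{c}}\,|\widehat{f}|^2 = 0$ for every $f \in \cD$.

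It then remains to deduce $(\widehat{\psi})_{\mathsf{c}} = 0$ from the vanishing of $(\widehat{\psi})_{\mathsf{c}}\,|\widehat{f}|^2$ for all $f \in \cD$. The vanishing of this measure means $\widehat{f} = 0$ almost everywhere with respect to $|(\widehat{\psi})_{\mathsf{c}}|$. The key point is that the functions $\{\widehat{f} : f \in \cD\}$ are rich enough to detect any continuous measure: given $\xi_0 \in \widehat{\RR^d}$, the modulate $f(x) = \ee^{2\pi \ii \xi_0 \cdot x} f_0(x)$ of a fixed nonnegative $f_0 \in \cD$ lies in $\cD$ and satisfies $\widehat{f}(\xi_0) = \widehat{f_0}(0) = \int f_0 \neq 0$, so $\widehat{f}$ is nonzero on an entire neighbourhood of $\xi_0$. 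That neighbourhood must therefore be $|(\widehat{\psi})_{\mathsf{c}}|$-null; covering $\widehat{\RR^d}$ by countably many such neighbourhoods gives $|(\widehat{\psi})_{\mathsf{c}}| = 0$, hence $(\widehat{\psi})_{\mathsf{c}} = 0$ and $\widehat{\psi}$ is discrete, which is (i). I expect this last density/covering step — converting pointwise nonvanishing of the $\widehat{f}$ into global nullity of the continuous part — to be the main obstacle, and the only place where one must argue from scratch rather than quote an earlier result.
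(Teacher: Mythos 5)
Your proof is correct and follows essentially the same route as the paper: the same chain of easy implications, (i)$\Rightarrow$(ii) via Theorem~\ref{FT}, and the key step (v)$\Rightarrow$(i) via Proposition~\ref{convolution is CU} together with Eberlein's theorem that the inverse transform of a finite measure is strongly almost periodic if and only if the measure is pure point. The only (minor) difference is the final localization: the paper picks, for each $R>0$, a single $f\in\cD$ whose transform is non-vanishing on $B_R(0)$ and concludes that $\widehat{\psi}$ restricted to $B_R(0)$ is pure point, whereas you use modulates of a fixed bump function plus a countable covering to show $(\widehat{\psi})_{\mathsf{c}}=0$ — both are valid finishes of the same argument.
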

\begin{proof}
$(i) \Leftrightarrow (ii)$ follows immediately from Theorem \ref{FT} and the uniqueness of the Eberlein decomposition of Theorem \ref{WAP decomposition}.

$(ii) \Rightarrow (iii)$ is obvious as $\cD \subseteq \cS$.

$(iii) \Rightarrow (iv) \Rightarrow (v)$ are obvious.

$(v) \Rightarrow (i)$ : Let $f\in\cD$. Since $\widehat{\psi}$ is a tempered measure, it follows that  $\left| \widehat{f} \right|^2 \widehat{\psi}$ is a finite measure. As the Fourier transform of this measure is $(\psi*f*\widetilde{f})_- \in SAP(\RR^d)$, it follows \cite{EBE2} that $\left| \widehat{f} \right|^2 \widehat{\psi}$ is a pure point measure.

As for each $R >0$ we can find some $f \in \cD$ which is non-vanishing on $B_R(0)$, it follows that the restriction of $\widehat{\psi}$ to $B_R(0)$ is pure point for each $R >0$. This implies that $\widehat{\psi}$ is pure point.
\end{proof}

Note that in the situation of Proposition \ref{SAP characterisation}, Theorem \ref{pure point intensity} will give us a way of describing the pure point measure $\widehat{\psi}$.

In exactly the same way, we can prove the following.

\begin{proposition}\label{WAP0 characterisation}
Let $\psi \in \cMS$. Then the following are equivalent:
\begin{itemize}
\item[(i)] $\widehat{\psi}$ is a continuous measure.
\item[(ii)] $\psi \in \fWAP_0(\RR^d)$.
\item[(iii)] $\psi*f \in WAP_0(\RR^d)$ for all $f \in \cD$.
\item[(iv)]  $\psi*f*g \in WAP_0(\RR^d)$ for all $f,g \in \cD$.
\item[(v)]  $\psi*f*\widetilde{f} \in WAP_0(\RR^d)$ for all $f \in \cD$.
\end{itemize}
\end{proposition}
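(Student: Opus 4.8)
The plan is to mirror the proof of Proposition \ref{SAP characterisation}, replacing the pure point part with the continuous part throughout and invoking the null weakly almost periodic, rather than the strongly almost periodic, case of the Eberlein correspondence. As there, I would run the cycle $(i)\Leftrightarrow(ii)$, then the trivial forward implications $(ii)\Rightarrow(iii)\Rightarrow(iv)\Rightarrow(v)$, and close the loop with $(v)\Rightarrow(i)$.

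For $(i)\Leftrightarrow(ii)$ I would appeal directly to Theorem \ref{FT} together with the uniqueness of the Eberlein decomposition (Theorem \ref{WAP decomposition}). Since $\widehat{\psi}$ is a measure, Theorem \ref{FT} gives $(\widehat{\psi})_{\mathsf{c}}=\widehat{\psi_0}$ and $(\widehat{\psi})_{\mathsf{pp}}=\widehat{\psi_{\mathsf{s}}}$. Hence $\widehat{\psi}$ is continuous iff $(\widehat{\psi})_{\mathsf{pp}}=0$ iff $\widehat{\psi_{\mathsf{s}}}=0$ iff $\psi_{\mathsf{s}}=0$ (injectivity of the Fourier transform on $\cSp$), which is precisely $\psi=\psi_0\in\fWAP_0(\RR^d)$. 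The implication $(ii)\Rightarrow(iii)$ is immediate because $\cD\subseteq\cS$, and the chain $(iii)\Rightarrow(iv)\Rightarrow(v)$ follows since $f*g$ and $f*\widetilde{f}$ again lie in $\cD$ whenever $f,g\in\cD$: thus $\psi*f*g=\psi*(f*g)$ is covered by (iii), and (v) is the case $g=\widetilde{f}$ of (iv).

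The substantive step, which I expect to be the main obstacle, is $(v)\Rightarrow(i)$, and I would handle it exactly as in Proposition \ref{SAP characterisation} but with the continuous case of \cite{EBE2}. Fix $f\in\cD$. Since $\widehat{f}\in\cS$ decays rapidly and $\widehat{\psi}$ is a tempered measure, the product $|\widehat{f}|^2\widehat{\psi}$ is a finite measure; using $\widehat{\widetilde{f}}=\overline{\widehat{f}}$, a direct computation shows that its inverse Fourier transform is $\psi*f*\widetilde{f}$. By hypothesis (v) this function is null weakly almost periodic, so the Eberlein correspondence for finite measures \cite{EBE2} forces $|\widehat{f}|^2\widehat{\psi}$ to be a continuous measure.

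To pass from this to a global statement, I would use that for every $R>0$ there exists $f\in\cD$ with $\widehat{f}$ non-vanishing on $B_R(0)$, obtained for instance by rescaling a fixed nonnegative bump whose transform is positive at the origin. Then $|\widehat{f}|^2$ is bounded below by a positive constant on $B_R(0)$, so dividing by it introduces no atoms and the restriction of $\widehat{\psi}$ to $B_R(0)$ inherits continuity from $|\widehat{f}|^2\widehat{\psi}$. Letting $R\to\infty$ yields that $\widehat{\psi}$ is continuous, which is (i), and completes the loop. The only point demanding care is the construction of test functions with $\widehat{f}$ non-vanishing on arbitrarily large balls, but this is a standard rescaling argument and the rest is a transcription of the pure point case.
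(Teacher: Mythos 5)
Your proposal is correct and is essentially the paper's own argument: the paper omits the proof of this proposition, stating only that it is almost identical to that of Proposition~\ref{SAP characterisation}, and your write-up is precisely that transcription (the cycle $(i)\Leftrightarrow(ii)$ via Theorem~\ref{FT} and uniqueness of the Eberlein decomposition, the trivial chain $(ii)\Rightarrow(iii)\Rightarrow(iv)\Rightarrow(v)$, and $(v)\Rightarrow(i)$ via the finiteness of $|\widehat{f}|^2\widehat{\psi}$, the Eberlein correspondence for finite measures, and test functions whose transforms are non-vanishing on arbitrarily large balls). Your rescaling construction of such test functions also correctly fills in the detail that the paper's template proof leaves implicit.
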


As the proof is almost identical to the one of Proposition \ref{SAP characterisation}, we omit it.

\section{The mean of a weakly almost periodic distribution}

Fourier analysis is a very powerful tool in the study of $L^1$ functions and finite measures. If we want to extend some of the ideas of this theory to bounded functions and translation bounded measures, we cannot use integration anymore, as the integrals will be infinite. In any case, the translation boundedness of the measures, or boundedness of the functions, implies that we may still be able to integrate on average, meaning dividing the integral over $B_R$ by the volume of $B_R$ and taking the limit as $R\to\infty$. We will refer to this limit, when it exists, as the mean of the function.

As Eberlein showed \cite{EBE}, for weakly almost periodic functions the mean always exists and is uniform in translates of the function. Moreover,
\[
\langle f,g\rangle := M(f \bar{g})
\]
defines a semi-inner product on the space of weakly almost periodic functions, which becomes an inner product when we restrict to the class of strongly almost periodic functions. The space $(SAP(\RR^d), \langle  \cdot , \cdot \rangle)$ becomes a Hilbert space, and the set $\{ \chi_x(y) =e^{2\pi i x \cdot y} : x \in \RR^d \}$ of characters on $\RR^d$ is an orthogonal basis which is complete for this inner product. For more details we refer the reader to \cite{MoSt}.

Eberlein also showed that given a finite measure $\mu$, the function $f= \check{\mu}$ is weakly almost periodic, and the (Fourier-Bohr) coefficients $a_x(f) M(\overline{\chi_x} f)$ of $f$ with respect to this complete orthogonal set are exactly the intensities of the atoms in the pure point part $(\mu)_{pp}$ of the measure $\mu$.

The notion of the mean was extended to weakly almost periodic measures by Argabright and deLamadrid \cite{ARMA}. They further showed that for a Fourier transformable measure, the Fourier Bohr coefficients of the measure give exactly the intensity of the atoms in the pure point part of the measure's Fourier transform.

In this section we use the ideas of \cite{ARMA} to extend the notion of mean to weakly almost periodic distributions and prove that for distributions with measure Fourier transform, the relation between the Fourier Bohr coefficients and the pure point part of the Fourier transform still holds.

\begin{theorem}\label{WAP are amenable}
Let $\psi \in \fWAP(\RR^d)$. Then, there exists a constant $M(\psi)$ such that for all $f \in \cD$ we have
\[
M(\psi\ast f)=M(\psi) \int_{\RR^d} f \dd\lambda\,.
\]
\end{theorem}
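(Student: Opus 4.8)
The plan is to regard the assignment $f \mapsto M(\psi\ast f)$ as a linear functional on $\cS(\RR^d)$, to verify that it is a translation-invariant tempered distribution, and then to invoke the fact that such a distribution must be a constant multiple of Lebesgue measure; the constant is then the sought $M(\psi)$.

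First I would check that the map $L \colon \cS(\RR^d) \to \CC$ defined by $L(f) := M(\psi\ast f)$ is well defined and continuous. Since $\psi \in \fWAP(\RR^d)$, we have $\psi\ast f \in WAP(\RR^d)$ for every $f \in \cS$, so the mean $M(\psi\ast f)$ exists by Eberlein's theorem; linearity of $L$ is immediate from linearity of convolution in the second slot and of the mean. For continuity, recall that $\fWAP(\RR^d) \subseteq \cSp_\infty(\RR^d)$, so by Proposition \ref{tb Prop 1} there are $N \in \ZZ_+$ and $C>0$ with $\|\psi\ast f\|_\infty \leq C\,\|f\|_{N,N}$ for all $f \in \cS$. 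Because the mean is an average of values bounded in modulus by the supremum, we have $|M(g)| \leq \|g\|_\infty$ for every $g \in WAP(\RR^d)$, whence $|L(f)| \leq C\,\|f\|_{N,N}$. Thus $L \in \cSp(\RR^d)$.

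Next I would establish translation invariance. A direct computation (using $(T_t f)_{\n} = T_{-t} f_{\n}$) gives $\psi\ast(T_t f) = T_t(\psi\ast f)$ for all $t \in \RR^d$, and since the mean of a weakly almost periodic function is invariant under translation of its argument, we obtain $L(T_t f) = M\bigl(T_t(\psi\ast f)\bigr) = M(\psi\ast f) = L(f)$. Hence $L$ is a translation-invariant tempered distribution.

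The key step is then the classical rigidity of translation-invariant distributions: differentiating the identity $L(T_{he_j} f) = L(f)$ at $h=0$, and using that the difference quotients $h^{-1}(T_{he_j}f - f)$ converge to $-\partial_j f$ in the topology of $\cS$, yields $L(\partial_j f) = 0$ for each coordinate direction, i.e.\ $\partial_j L = 0$ for all $j$. A tempered distribution whose first partial derivatives all vanish is a constant $c$, so $L = c\,\lambda$. Setting $M(\psi) := c$ and restricting to $f \in \cD$ gives $M(\psi\ast f) = M(\psi)\int_{\RR^d} f \dd\lambda$, as claimed. I expect the only genuine obstacle to be the continuity of $L$: it is precisely what guarantees that the translation-invariant functional is an honest tempered distribution (so that the vanishing-derivative argument applies), and it is here that the translation boundedness $\fWAP(\RR^d) \subseteq \cSp_\infty(\RR^d)$ together with the sup-norm bound on the mean is essential.
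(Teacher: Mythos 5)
Your proof is correct, but it takes a genuinely different route from the paper's. The paper stays entirely inside the almost-periodic machinery: for $f,g \in \cD$ it invokes the Argabright--de Lamadrid identity $M((\psi\ast f)\ast g)=M(\psi\ast f)\int_{\RR^d} g \dd\lambda$, applies it twice, and uses $(\psi\ast f)\ast g=(\psi\ast g)\ast f$ (commutativity plus Lemma \ref{triple conv}) to conclude that the ratio $M(\psi\ast f)\big/\int_{\RR^d} f\dd\lambda$ is the same for every $f\in\cD$ with nonzero integral; the constant $M(\psi)$ is then read off from one fixed $f_0$. You instead package $f\mapsto M(\psi\ast f)$ as a single functional $L$, show it is an honest tempered distribution (via $\fWAP(\RR^d)\subseteq\cSp_\infty(\RR^d)$, Proposition \ref{tb Prop 1}, and the bound $|M(g)|\le\|g\|_\infty$ for $g\in WAP(\RR^d)$), show it is translation invariant using $\psi\ast(T_tf)=T_t(\psi\ast f)$ and translation invariance of the mean, and then invoke the classical rigidity theorem that a tempered distribution with vanishing first partial derivatives is a constant multiple of $\lambda$. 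All of these steps are sound, including the continuity estimate and the difference-quotient argument. What your route buys: it proves the identity for all $f\in\cS$ in one stroke, so it subsumes the paper's subsequent proposition (which extends the theorem from $\cD$ to $\cS$ with extra input from \cite{ARMA}), and it makes the appearance of Lebesgue measure conceptually transparent as uniqueness of the translation-invariant functional. What the paper's route buys: it needs neither the continuity of $L$ nor the rigidity theorem, only the ARMA convolution identity, and---because it never differentiates---it is the argument that survives verbatim on general locally compact abelian groups, where the coordinate-differentiation step of your proof has no analogue.
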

\begin{proof}
Let $f, g \in \cD$. As $\psi*f \in WAP(\RR^d)$ it follows from \cite{ARMA} that
\[
M((\psi*f)*g)=M(\psi*f) \int_{\RR^d} g \dd\lambda
\]
and, similarly, that
\[
M((\psi*g)*f)=M(\psi*g) \int_{\RR^d} f \dd\lambda \,.
\]
This implies that
\[
M(\psi*f) \int_{\RR^d} g\dd\lambda = M(\psi*g) \int_{\RR^d} f \dd\lambda\,,
\]
and we see that the ratio $\ds\frac{M(\psi*f)}{\int_{\RR^d} f\dd\lambda}$ is constant over all $f\in\cD$ with non-zero integral. This allows us to define the number
\[
M(\psi) := \frac{M(\psi*f_0)}{\int_{\RR^d} f_0\dd\lambda }\,,
\]
where $f_0 \in \cD$ is an arbitrary function with $\int_{\RR^d} f_0 \dd\lambda \neq 0$.

Finally, if $g \in \cD$ is any function, we get
\[
 M(\psi*g) \int_{\RR^d} f_0 \dd\lambda = M(\psi*f_0) \int_{\RR^d} g \dd\lambda = M(\psi)\int_{\RR^d} f_0 \dd\lambda \int_{\RR^d}  g \dd\lambda   \,.
\]
Dividing both sides by $\int_{\RR^d} f_0 \dd\lambda \neq 0$ we obtain the required identity.
\end{proof}

The constant found in the above theorem is called the mean of $\psi$.

\begin{definition}
Let $\psi \in \fWAP(\RR^d)$. The {\em mean} of $\psi$ is the number $M(\psi)$, defined by
\[
M(\psi*f) = M(\psi) \int_{\RR^d} f(t) d \lambda(t) \, \mbox{ for all }\, f \in \cD \,.
\]
\end{definition}

We next show that Theorem \ref{WAP are amenable} can be extended to test functions $f \in \cS$.

\begin{proposition} Let $\psi \in \fWAP(\RR^d)$. Then for all $f \in \cS$ we have
\[
M(\psi\ast f)=M(\psi) \int_{\RR^d} f \dd\lambda \,.
\]
\end{proposition}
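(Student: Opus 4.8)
The plan is to pass from $\cD$ to $\cS$ by a density-and-continuity argument, leveraging the fact that $\psi$, being weakly almost periodic, is translation bounded. Fix $f \in \cS$. Since $\psi \in \fWAP(\RR^d) \subseteq \cSp_\infty(\RR^d)$, the function $\psi * f$ lies in $WAP(\RR^d)$, so its mean $M(\psi * f)$ is defined, and it only remains to identify its value. Using the density of $\cD$ in $\cS$, I would choose a sequence $\{g_n\} \subseteq \cD$ with $g_n \to f$ in the topology of $\cS$, that is, $\norm{g_n - f}_{M,N} \to 0$ for every pair $M, N$.

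The key step is to transfer this $\cS$-convergence into convergence of the relevant scalar quantities. On one hand, because $\psi$ is translation bounded, Proposition \ref{tb Prop 1} provides constants $M, N, C$ with $\norm{\psi * h}_\infty \leq C \norm{h}_{M,N}$ for all $h \in \cS$; applying this to $h = g_n - f$ shows that $\psi * g_n \to \psi * f$ uniformly. On the other hand, since the $L^1$-norm of a Schwartz function is dominated by a suitable seminorm (integrate against the fixed weight $(1 + |x|)^{-(d+1)}$ after extracting $\norm{(1+|x|)^{d+1} h}_\infty$), we also have $\int_{\RR^d} g_n \dd\lambda \to \int_{\RR^d} f \dd\lambda$.

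To conclude, I would invoke the continuity of the mean with respect to the uniform norm: for any $h \in WAP(\RR^d)$ one has $|M(h)| \leq \norm{h}_\infty$, so the uniform convergence $\psi * g_n \to \psi * f$ yields $M(\psi * g_n) \to M(\psi * f)$. Meanwhile, Theorem \ref{WAP are amenable} gives $M(\psi * g_n) = M(\psi) \int_{\RR^d} g_n \dd\lambda$ for each $n$, and the right-hand side converges to $M(\psi) \int_{\RR^d} f \dd\lambda$. Matching the two limits gives the desired identity $M(\psi * f) = M(\psi) \int_{\RR^d} f \dd\lambda$.

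I do not expect a serious obstacle here; the only point requiring care is the passage from $\cS$-convergence to uniform convergence of the convolutions, which is exactly where translation boundedness (Proposition \ref{tb Prop 1}) is essential. Without the seminorm bound on $\norm{\psi * h}_\infty$, convergence in $\cS$ would not by itself control $\psi * g_n$ in sup norm, and the continuity of the mean could not be applied.
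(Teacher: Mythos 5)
Your proof is correct, but it follows a genuinely different route from the paper's. The paper argues algebraically: fixing $g \in \cD$ with $\int_{\RR^d} g \dd\lambda \neq 0$, it uses Lemma \ref{triple conv} to write $(\psi*f)*g = (\psi*g)*f$, then applies the Argabright--de Lamadrid results \cite{ARMA} on means of convolutions (first for the $WAP$ function $\psi*f$ convolved with $g \in L^1(\RR^d)$, then, for the reversed order, for the measure $(\psi*g)\lambda$ convolved with the finite measure $f\lambda$) to obtain $M(\psi*f)\int_{\RR^d} g \dd\lambda = M(\psi*g)\int_{\RR^d} f \dd\lambda$, and concludes via Theorem \ref{WAP are amenable} after dividing by $\int_{\RR^d} g \dd\lambda$. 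You instead run a density-and-continuity argument: approximate $f$ in the $\cS$-topology by $g_n \in \cD$, use translation boundedness (Proposition \ref{tb Prop 1}(iii), legitimately available since $\fWAP(\RR^d) \subseteq \cSp_\infty(\RR^d)$) to get $\psi*g_n \to \psi*f$ uniformly, and pass to the limit in the identity of Theorem \ref{WAP are amenable} using that the mean is a linear functional on $WAP(\RR^d)$ satisfying $|M(h)| \leq \norm{h}_\infty$. Both proofs are sound. The paper's buys brevity by leaning on the external convolution theorems of \cite{ARMA}; yours is more self-contained --- beyond Theorem \ref{WAP are amenable} it needs only the seminorm estimate already established in the paper and elementary properties of the mean --- at the cost of some approximation bookkeeping, all steps of which (density of $\cD$ in $\cS$, domination of the $L^1$-norm by Schwartz seminorms, continuity of $M$ in the sup norm) are standard and correctly handled in your write-up.
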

\begin{proof}
Let $g \in \cD$ with $\int_{\RR^d} g\dd\lambda \neq 0$ and let $f\in\cS$. Then $f*g \in \cS$ and, as $\psi \in \fWAP(\RR^d)$, we have
\[
\psi*f \in WAP(\RR^d) \,;\, \psi*g \in WAP(\RR^d)  \mbox{ and } \, \psi*(f*g) \in WAP(\RR^d)\,.
\]
Then by \cite{ARMA},
\[
M((\psi*f )*g) = M(\psi*f) \int_{\RR^d} g d \lambda \,,
\]
and thus, by Lemma \ref{triple conv},
\[
 M(\psi*f) \int_{\RR^d} g d \lambda =M((\psi*f )*g) =M(\psi*(f *g))=M(\psi*(g *f))= M((\psi*g)*f)\,.
\]
Now, $\psi*g \in WAP(\RR^d)$ and the convolution $(\psi*g)*f$ of functions can be interpreted as the convolution between the measure $(\psi*g)\lambda$ and the finite measure $f \lambda$. Therefore, by \cite[Prop.~4.3]{ARMA} we have
\[
M((\psi*g)*f)= M(\psi*g) \int_{\RR^d} f d \lambda \,.
\]
The claim follows now from Theorem \ref{WAP are amenable}.
\end{proof}

\begin{corollary} Let $\psi \in \fWAP(\RR^d)$ and $f \in \cS$. Then
\[
M(\psi) \int_{\RR^d} f d \lambda = \lim_n \frac{1}{(2n)^d} \int_{[-n,n]^d} \psi ( T_t f_-) dt  \,.
\]
\end{corollary}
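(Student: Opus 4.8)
The plan is to recognise the right-hand side as nothing more than the mean of the weakly almost periodic function $\psi\ast f$, computed along the particular van Hove sequence of cubes $\{[-n,n]^d\}$, and then to invoke the preceding proposition.

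First I would unwind the definition of convolution. For $\psi\in\cSp$ and $f\in\cS$ we have $\psi\ast f(t)=\psi(T_t f_{\n})$, and since $f_-=f_{\n}$, this gives $\psi(T_t f_-)=(\psi\ast f)(t)$ for every $t\in\RR^d$. Hence the integrand on the right is precisely the value of the function $\psi\ast f$ at $t$, and the right-hand side equals
\[
\lim_n \frac{1}{(2n)^d}\int_{[-n,n]^d} (\psi\ast f)(t)\dd t\,.
\]

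Next, because $\psi\in\fWAP(\RR^d)$ and $f\in\cS$, the function $\psi\ast f$ lies in $WAP(\RR^d)$. By Eberlein's theorem \cite{EBE}, the mean of a weakly almost periodic function exists and is independent of the choice of van Hove averaging family; in particular it may be computed along the cubes $[-n,n]^d$. Thus the displayed limit equals $M(\psi\ast f)$.

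Finally, I would apply the proposition immediately preceding this corollary, which extends Theorem \ref{WAP are amenable} to Schwartz test functions and yields $M(\psi\ast f)=M(\psi)\int_{\RR^d} f\dd\lambda$. Chaining the three identities delivers the claimed formula. There is no genuine difficulty in this argument; the only step meriting care is the appeal to the independence of the mean from the averaging family, which is a standard property of weakly almost periodic functions (see \cite{EBE} or \cite{MoSt}).
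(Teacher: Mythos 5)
Your proposal is correct and follows exactly the route the paper intends: the paper states this corollary without proof precisely because it is the immediate concatenation of the identity $\psi(T_t f_-) = (\psi\ast f)(t)$, Eberlein's theorem that the mean of the weakly almost periodic function $\psi\ast f$ exists and may be computed as the limit of averages over the cubes $[-n,n]^d$, and the preceding proposition giving $M(\psi\ast f) = M(\psi)\int_{\RR^d} f \dd\lambda$ for $f\in\cS$. Nothing is missing.
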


As for functions and measures, the mean is linear and is compatible with the operatons of reflection and conjugation.

\begin{proposition} The mean $M : \fWAP(\RR^d) \to \CC$ has the following properties:
\begin{enumerate}
\item[(i)] For all $\psi, \phi\in\fWAP(\RR^d)$ and $a,b\in\CC$, $M(a\psi+b\phi) =a M(\psi)+bM(\phi)$.
\item[(ii)] $M(\lambda_{\RR^d}) =1$.
\item[(iii)] If $f \in WAP(\RR^d) \cap \cS$ then the means of $f$ as function and distribution are equal.
\item[(iv)] If $\mu \in {\mathcal WAP}(\RR^d) \cap \cSp$ then the means of $\mu$ as measure and distribution are equal.
\item[(v)] For $\psi\in\fWAP(\RR^d)$, $M(\overline{\psi})=M(\widetilde{\psi})=\overline{ M(\psi)}$ and $M(\psi_{\n})= M(\psi)$.
\item[(vi)] For $\psi\in\fWAP(\RR^d)$, $x\in\RR^d$, $M(T_x\psi)= M(\psi)$.
\end{enumerate}
\end{proposition}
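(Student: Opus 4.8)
The plan is to reduce every claim to the defining relation of the mean, namely $M(\psi*f)=M(\psi)\int_{\RR^d} f\dd\lambda$ for $f\in\cD$ (Theorem \ref{WAP are amenable}), combined with the corresponding elementary property of the mean on the function space $WAP(\RR^d)$, which is linear, reflection- and translation-invariant and satisfies $M(\overline h)=\overline{M(h)}$ \cite{ARMA,EBE}. In each case the strategy is identical: given the transformed distribution $\Phi$ (one of $a\psi+b\phi$, $\lambda_{\RR^d}$, $\overline{\psi}$, $\widetilde{\psi}$, $\psi_{\n}$ or $T_x\psi$), I first check $\Phi\in\fWAP(\RR^d)$ by writing $\Phi*f$ for $f\in\cD$ as a $WAP$ function built from $\psi*(\cdot)$, then compute $M(\Phi*f)$ using the function-level property, and finally read off $M(\Phi)$ by dividing by $\int_{\RR^d}f\dd\lambda$ for some $f$ with nonzero integral. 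For (i), since $(a\psi+b\phi)*f=a(\psi*f)+b(\phi*f)\in WAP(\RR^d)$, linearity of the function mean gives $M((a\psi+b\phi)*f)=aM(\psi)\int f\dd\lambda+bM(\phi)\int f\dd\lambda$, and the defining relation yields the claim. For (ii), the convolution $\lambda_{\RR^d}*f$ is the constant function equal to $\int_{\RR^d}f\dd\lambda$, whose mean is itself; comparing with $M(\lambda_{\RR^d})\int f\dd\lambda$ forces $M(\lambda_{\RR^d})=1$.

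For (iii) and (iv) I would use that for $h\in\cD\subseteq L^1(\RR^d)$ the mean of a convolution factors at both the function and measure level: by \cite{ARMA} (see also \cite[Prop.~4.3]{ARMA}) one has $M(f*h)=M(f)\int h\dd\lambda$ for $f\in WAP(\RR^d)$ and $M(\mu*h)=M(\mu)\int h\dd\lambda$ for a weakly almost periodic measure $\mu$, where on the left the function and measure means respectively appear. Matching these against the distributional identity $M(\Phi*h)=M_{\mathrm{dist}}(\Phi)\int h\dd\lambda$ and dividing by $\int h\dd\lambda\neq0$ shows the distributional mean coincides with the function mean in (iii) and with the measure mean in (iv); for (iv) one also invokes the equivalence $\mu\in\mathcal{WAP}(\RR^d)\Leftrightarrow\mu\in\fWAP(\RR^d)$ from the end of Section 5, so that $M_{\mathrm{dist}}(\mu)$ is defined in the first place.

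For (v) and (vi) the key ingredients are the convolution identities $(\psi_{\n})*f=(\psi*f_{\n})_{\n}$, $(\overline{\psi})*f=\overline{\psi*\overline{f}}$, $\widetilde{\psi}=(\overline{\psi})_{\n}$ and $(T_x\psi)*f=T_x(\psi*f)$, each verified directly from $\psi*f(t)=\psi(T_tf_{\n})$, $\psi_{\n}(g)=\psi(g_{\n})$ and $\overline{\psi}(g)=\overline{\psi(\overline g)}$. Combining them with $M(h_{\n})=M(h)$, $M(\overline h)=\overline{M(h)}$ and $M(T_xh)=M(h)$ for $h\in WAP(\RR^d)$, together with $\int f_{\n}\dd\lambda=\int f\dd\lambda$ and $\overline{\int\overline f\dd\lambda}=\int f\dd\lambda$, gives $M(\psi_{\n})=M(\psi)$, $M(\overline{\psi})=\overline{M(\psi)}$, then $M(\widetilde{\psi})=M((\overline{\psi})_{\n})=M(\overline{\psi})=\overline{M(\psi)}$, and finally $M(T_x\psi)=M(\psi)$.

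All the computations are routine; the only points requiring care — and the main obstacle — are verifying that each transformed object again lies in $\fWAP(\RR^d)$ (so that its mean is defined) and getting the reflection and conjugation convolution identities exactly right, since a sign slip in the argument of $f$ or a misplaced conjugate would corrupt (v). These are elementary but must be tracked carefully against the precise definitions of $f_{\n}$, $\widetilde{g}$ and the induced operations on $\cSp$.
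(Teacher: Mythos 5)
Your proposal is correct and follows essentially the same route as the paper: every claim is reduced to the defining identity $M(\psi*f)=M(\psi)\int_{\RR^d} f\dd\lambda$ of Theorem \ref{WAP are amenable} together with the corresponding function-level property of the mean on $WAP(\RR^d)$ from Eberlein and Argabright--de~Lamadrid (and, for (iv), the observation that the measure and distribution convolutions $\mu*g$ coincide). The paper's proof is merely a terser version of yours, citing \cite[Thm.~14.1]{EBE} for (i), (ii), (v), (vi) where you spell out the convolution identities explicitly.
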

\begin{proof}
(i) and (ii) follow from \cite[Theorem 14.1]{EBE} and Theorem \ref{WAP are amenable}.

(iii): If $f \in WAP(\RR^d) \cap \cS$ and $g \in \cD$ then as functions we have \cite{ARMA}
\[
M(f*g) =M(f) \int_{\RR^d} g d \lambda \,.
\]
The claim follows now from Theorem \ref{WAP are amenable}.

(iv): If $\mu \in {\mathcal WAP}(\RR^d) \cap \cS'$ and $g \in \cD$ then the convolution $\mu*g$ of the measure $\mu$ and the compactly supported continuous function $g$ satisfies \cite{ARMA}
\[
M(\mu*g) =M(\mu) \int_{\RR^d} g d \lambda \,.
\]
As $\mu*g$ is the same if we consider it as the convolution of a tempered distribution and a Schwartz function, the claim follows again from Theorem \ref{WAP are amenable}.

(v) and (vi) follow from \cite[Theorem 14.1]{EBE} and Theorem \ref{WAP are amenable}.
\end{proof}

Recall that for each $x \in \RR^d$  and $\psi \in \cS$ we have
\[
\widehat{T_x \psi} =\chi_{-x} \widehat{\psi} \; \mbox{ and } \; \widehat{\chi_x \psi} =T_{x} \widehat{\psi}
\]
where for $x,y\in\RR^d$,
\[
\chi_x(y) =e^{2\pi i x \cdot y} \,.
\]

The importance of the mean for diffraction theory is given by the following type of results, see also \cite{EBE2,ARMA,MoSt,hof95}.

\begin{theorem}\label{pure point intensity} Let $\psi \in \cMS$. Then, for all $x \in \RR^d$ we have
\[
\widehat{\psi}(\{ x \}) = M(\overline{\chi_x} \psi) \,.
\]
\end{theorem}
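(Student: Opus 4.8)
The plan is to reduce to the case $x=0$ and then to compare two expressions for the mean of $\psi\ast f$: one furnished by the definition of $M(\psi)$, the other by Eberlein's identification of the zeroth Fourier--Bohr coefficient with the atom of the underlying measure. First I would set $\varphi := \overline{\chi_x}\,\psi = \chi_{-x}\psi$. Using the intertwining relation $\widehat{\chi_y\psi} = T_{y}\widehat{\psi}$ with $y=-x$, one gets $\widehat{\varphi} = T_{-x}\widehat{\psi}$, which is a translate of the measure $\widehat{\psi}$ and hence again a measure; thus $\varphi\in\cMS$, and by Theorem \ref{FT are wap} we have $\varphi\in\fWAP(\RR^d)$, so that $M(\overline{\chi_x}\psi)=M(\varphi)$ is well defined. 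With the convention $(T_{t}\mu)(A)=\mu(A-t)$ for measures we also have $\widehat{\varphi}(\{0\}) = (T_{-x}\widehat{\psi})(\{0\}) = \widehat{\psi}(\{x\})$. Hence it suffices to establish the case $x=0$, namely that $\widehat{\psi}(\{0\}) = M(\psi)$ for every $\psi\in\cMS$, and then apply this to $\varphi$.

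For the reduced statement, fix $f\in\cD$ with $\int_{\RR^d} f\dd\lambda \neq 0$. On the one hand, by the definition of the mean (Theorem \ref{WAP are amenable}),
\[
M(\psi\ast f) = M(\psi)\int_{\RR^d} f\dd\lambda = M(\psi)\,\widehat{f}(0)\,,
\]
since $\int_{\RR^d} f\dd\lambda = \widehat{f}(0)$. On the other hand, by Proposition \ref{convolution is CU} the function $\psi\ast f$ is the inverse Fourier transform of the finite measure $\nu := \widehat{f}\,\widehat{\psi}$, and, as $\psi\in\cMS\subseteq\fWAP(\RR^d)$, it lies in $WAP(\RR^d)$. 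By Eberlein's theorem \cite{EBE2}, the zeroth Fourier--Bohr coefficient of such a function, which is exactly its mean, equals the mass that $\nu$ assigns to $\{0\}$; that is,
\[
M(\psi\ast f) = \nu(\{0\}) = \widehat{f}(0)\,\widehat{\psi}(\{0\})\,,
\]
where the last equality uses that $\widehat{f}$ is continuous, so the atom of $\widehat{f}\,\widehat{\psi}$ at $0$ is $\widehat{f}(0)$ times that of $\widehat{\psi}$. Comparing the two displays and dividing by $\widehat{f}(0)\neq 0$ yields $M(\psi) = \widehat{\psi}(\{0\})$, which completes the reduced case and hence the proof.

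The main obstacle is marshalling the correct input from Eberlein's theory: the precise statement that for a finite measure $\nu$ the mean $M(\check{\nu})$ equals $\nu(\{0\})$. This is the analytic heart of the argument and is exactly the classical fact recorded in the introduction; everything else is bookkeeping with the Fourier intertwining relations $\widehat{\chi_x\psi}=T_x\widehat{\psi}$ and the identity $\int_{\RR^d} f\dd\lambda = \widehat{f}(0)$. A secondary point requiring care is the translation convention for measures, which must be fixed so that the atom of $\widehat{\psi}$ at $x$ corresponds to the atom of $T_{-x}\widehat{\psi}$ at $0$; once this is pinned down the reduction to $x=0$ is immediate.
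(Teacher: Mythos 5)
Your proof is correct and takes essentially the same route as the paper's: both establish the case $x=0$ by comparing the defining identity $M(\psi\ast f)=M(\psi)\,\widehat{f}(0)$ with Eberlein's theorem that the atom at $0$ of the finite measure $\widehat{f}\,\widehat{\psi}$ equals the mean of its inverse Fourier transform, and both handle general $x$ via the intertwining relation $\widehat{\overline{\chi_x}\psi}=T_{-x}\widehat{\psi}$. The only (cosmetic) differences are the order of the two steps and your welcome extra care in fixing $f$ with $\widehat{f}(0)\neq 0$, pinning down the translation convention for measures, and noting $\overline{\chi_x}\psi\in\cMS$ so that its mean is defined.
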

\begin{proof}
Let $f \in \cD$. Then, $\widehat{f} \widehat{\psi}$ is a finite measure, and hence, by \cite{EBE2} we have
\[
\left(\widehat{f}\widehat{\psi}\right)(\{ 0 \}) = M(\psi\ast f)=M(\psi)\int_{\RR^d} f\dd\lambda \,.
\]
As
\[
\widehat{f}(0) =\int_{\RR^d} f \dd\lambda\,,
\]
it follows that $\widehat{\psi}(\{0\}) = M(\psi)$; that is, our claim is true for $\chi=0$.

For general $x\in\RR^d$, we then have
\[
\widehat{\psi}(\{ x \})=T_{-\chi_x}\widehat{\psi}(\{ 0 \}) =\widehat{(\overline{\chi_x}\psi)}(\{ 0 \})= M(\chi_x \psi) \,.
\]
\end{proof}

Exactly as in the case of functions and measures, we can characterize the null weakly almost periodic tempered distributions in terms of means under character multiplication.

\begin{proposition} Let $\psi \in \fWAP(\RR^d)$. Then $\psi \in \fWAP_0(\RR^d)$ if and only if for all $x \in \RR^d$ we have
\[
M(\chi_x \psi) =0 \,.
\]
\end{proposition}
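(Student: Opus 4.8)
The plan is to reduce the statement to the classical characterization of $WAP_0$ among weakly almost periodic functions, namely that $h\in WAP(\RR^d)$ lies in $WAP_0(\RR^d)$ if and only if $M(\overline{\chi_x}\,h)=0$ for every $x\in\RR^d$ (all Fourier--Bohr coefficients vanish), together with the fact that $WAP(\RR^d)$ is closed under multiplication by a character; see \cite{EBE2, MoSt, ARMA}. The bridge between the distribution and the functions $\psi*f$ will be a single computational identity: for every character $\chi_x$, every $\psi\in\cSp$ and every $f\in\cS$,
\[
\chi_x\cdot(\psi*f)=(\chi_x\psi)*(\chi_x f)\,.
\]
I would establish this by unwinding the definition $\psi*f(t)=\psi(T_t f_{\n})$ and using the multiplicativity $\chi_x(t)=\chi_x(s)\,\chi_x(t-s)$, which lets one move the factor $\chi_x(t)$ inside the test function; concretely, pairing $\chi_x\psi$ with $T_t(\chi_x f)_{\n}$ and simplifying returns exactly $\chi_x(t)\,(\psi*f)(t)$.

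Two preliminary observations are then needed. First, taking the identity with $\chi=\chi_x$ and recalling that $WAP(\RR^d)$ is stable under multiplication by $\chi_x$ shows that $\psi\in\fWAP(\RR^d)$ implies $\chi_x\psi\in\fWAP(\RR^d)$ for every $x$; in particular $M(\chi_x\psi)$ is well defined, and (applying this with $-x$) so is $M(\overline{\chi_x}\psi)$. Second, since $\overline{\chi_x}f\in\cD$ whenever $f\in\cD$ and $\overline{\chi_x}\psi\in\fWAP(\RR^d)$, Theorem \ref{WAP are amenable} applied to the identity with character $\overline{\chi_x}=\chi_{-x}$ gives
\[
M\bigl(\overline{\chi_x}\,(\psi*f)\bigr)=M\bigl((\overline{\chi_x}\psi)*(\overline{\chi_x}f)\bigr)=M(\overline{\chi_x}\psi)\int_{\RR^d}\overline{\chi_x}\,f\dd\lambda\,.
\]
This formula is the heart of both implications; note also that the stated condition ``$M(\chi_x\psi)=0$ for all $x$'' is the same as ``$M(\overline{\chi_x}\psi)=0$ for all $x$'', upon replacing $x$ by $-x$.

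For the reverse implication, assume $M(\overline{\chi_x}\psi)=0$ for all $x$. The displayed formula then forces $M(\overline{\chi_x}(\psi*f))=0$ for every $x\in\RR^d$ and every $f\in\cD$; by the Fourier--Bohr characterization of $WAP_0$ this means $\psi*f\in WAP_0(\RR^d)$ for all $f\in\cD$, whence $\psi\in\fWAP_0(\RR^d)$ by Lemma \ref{Lemma Lag}(iii). For the forward implication, assume $\psi\in\fWAP_0(\RR^d)$, so $\psi*f\in WAP_0(\RR^d)$ and hence $M(\overline{\chi_x}(\psi*f))=0$ for all $x$ and all $f\in\cD$. Fixing $x$ and choosing $f=\chi_x g$ with $g\in\cD$ and $\int_{\RR^d}g\dd\lambda\neq0$ (so that $\overline{\chi_x}f=g$), the displayed formula collapses to $0=M(\overline{\chi_x}\psi)\int_{\RR^d}g\dd\lambda$, forcing $M(\overline{\chi_x}\psi)=0$; as $x$ was arbitrary we conclude $M(\chi_x\psi)=0$ for all $x$.

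The one genuinely delicate point---the main obstacle---is the bookkeeping in the convolution identity, where the interaction of the modulation $\chi_x$ with the reflection $f\mapsto f_{\n}$ and the translation $T_t$ must be tracked carefully; once that identity is secured, the proposition follows formally from Theorem \ref{WAP are amenable} and the function-level facts. I would also make explicit that the required statements about functions (closure of $WAP(\RR^d)$ under character multiplication and the vanishing-mean characterization of $WAP_0(\RR^d)$) are precisely the classical ones already invoked for measures in \cite{ARMA, MoSt}, so that no harmonic analysis beyond the earlier sections is needed.
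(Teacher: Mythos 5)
Your proposal is correct and takes essentially the same route as the paper: the paper's proof rests on the identity $(\chi_x \psi)*f = \chi_x\left( \psi*( \chi_{-x} f)\right)$ (which is your identity after the substitution $f \mapsto \chi_x f$), combined with Eberlein's vanishing Fourier--Bohr coefficient characterization of $WAP_0(\RR^d)$ and the mean formula of Theorem \ref{WAP are amenable}, exactly as you use them. The only cosmetic differences are that the paper derives the identity by taking Fourier transforms rather than directly from the definition $\psi*f(t)=\psi(T_t f_{\n})$, and that it leaves implicit the point you rightly make explicit, namely that $\chi_x\psi\in\fWAP(\RR^d)$ so that $M(\chi_x\psi)$ is well defined.
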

\begin{proof}
Note that for $f \in \cS\,, x\in\RR^d$, we have
\[
((\chi_x \psi)*f)\widehat{\phantom{x}} = \left( T_x \widehat{\psi} \right) \widehat{f} = T_x \left(  \widehat{\psi} T_{-x}\widehat{f} \right)\,,
\]
and thus
\begin{equation}\label{eq chi conv}
(\chi_x \psi)*f = \chi_x \left( \psi*( \chi_{-x} f)  \right)\,.
\end{equation}

Let $\psi \in \fWAP_0 (\RR^d)$. For all $f \in \cD$ and $x \in \RR^d$, we have $(\psi\ast(\chi_{-x} f)) \in WAP_0(\RR^d)$, and hence by \cite{EBE3},
\[
M\left(\chi_x ( \psi*( \chi_{-x} f) )  \right) =0 \,.
\]
Then from (\ref{eq chi conv}), for all $f \in \cD$ we have
\[
M( (\chi_x \psi)*f ) =0 \,,
\]
so by the definition of the mean, we have the forward implication.

Conversely, suppose that $M(\chi_x \psi) =0$ for all $x \in \RR^d$. Let $f \in \cS$. Then
\[
M \left( (\chi_x \psi)*(\chi_x f \right) ) = M( \chi_x \psi ) \int_{\RR^d} \chi_x f d \lambda =0 \,
\]
for all $x \in \RR^d$. Applying (\ref{eq chi conv}), this yields that
\[
0 = M\left(\chi_x ( \psi*( \chi_{-x} \chi_x f) ) \right) = M( \chi_x ( \psi* f) )
\]
for all $x \in \RR^d$ and hence \cite{EBE3} we have that $\psi *f \in WAP_0(\RR^d)$.
As $f \in \cS$ was arbitrary, we obtain the result.
\end{proof}

Note that in case $\psi \in \cMS$, the above result is an immediate corollary of Theorem \ref{pure point intensity}.

Combining the results of this section with Proposition \ref{WAP0 characterisation}, we obtain the following.

\begin{proposition}\label{WAP0 characterisation 2}
Let $\psi \in \cMS$. Then the following are equivalent:
\begin{enumerate}
\item[(i)] $\widehat{\psi}$ is a continuous measure.
\item[(ii)] $\psi \in \fWAP_0(\RR^d)$.
\item[(iii)] $\psi*f \in WAP_0(\RR^d)$ for all $f \in \cD$.
\item[(iv)]  $\psi*f*g \in WAP_0(\RR^d)$ for all $f,g \in \cD$.
\item[(v)]  $\psi*f*\widetilde{f} \in WAP_0(\RR^d)$ for all $f \in \cD$.
\item[(vi)]  $M(\chi_x \psi) =0 $ for all $x \in \RR^d$.
\end{enumerate}
\end{proposition}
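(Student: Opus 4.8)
The plan is to show this result follows by simply stitching together two previously established equivalences and the new mean-based characterisation. The statement combines Proposition \ref{WAP0 characterisation} (the equivalences of (i)--(v)) with the immediately preceding proposition (characterising $\fWAP_0$ via vanishing means), so the bulk of the work is already done; what remains is to verify that (vi) can be inserted into the chain of equivalences and that the hypothesis $\psi \in \cMS$ (rather than just $\psi \in \fWAP(\RR^d)$) changes nothing.

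First I would invoke Proposition \ref{WAP0 characterisation}, which gives directly that statements (i) through (v) are equivalent for any $\psi \in \cMS$; this requires no new argument and can be cited verbatim. To complete the cycle, I would then establish $(\mathrm{ii}) \Leftrightarrow (\mathrm{vi})$. Since $\cMS \subseteq \fWAP(\RR^d)$ by the remark following the definition of $\cMS$ (an immediate consequence of Theorem \ref{FT are wap}), the preceding proposition applies to $\psi$ and tells us precisely that $\psi \in \fWAP_0(\RR^d)$ if and only if $M(\chi_x \psi) = 0$ for all $x \in \RR^d$. This is exactly the equivalence $(\mathrm{ii}) \Leftrightarrow (\mathrm{vi})$, so no independent proof of (vi) is needed.

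Alternatively, and perhaps more in the spirit of the remark following Theorem \ref{pure point intensity}, one could prove $(\mathrm{i}) \Leftrightarrow (\mathrm{vi})$ directly: by Theorem \ref{pure point intensity} we have $\widehat{\psi}(\{x\}) = M(\overline{\chi_x}\psi)$ for every $x \in \RR^d$, so $\widehat{\psi}$ has no atoms (i.e.\ $\widehat{\psi}$ is continuous, statement (i)) if and only if $M(\overline{\chi_x}\psi) = 0$ for all $x$, which, after replacing $x$ by $-x$, is statement (vi). Either route closes the loop and yields the full set of equivalences (i)--(vi).

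The main obstacle here is essentially bookkeeping rather than mathematics: one must make sure the character convention in (vi) (namely $M(\chi_x \psi)$ with $\chi_x(y) = e^{2\pi \ii x \cdot y}$) matches the convention appearing in Theorem \ref{pure point intensity} (where the atom at $x$ is $M(\overline{\chi_x}\psi)$), so that the sign in the exponent does not introduce a spurious mismatch. Since the condition ``for all $x \in \RR^d$'' is symmetric under $x \mapsto -x$, the distinction between $\chi_x$ and $\overline{\chi_x}$ is harmless in the quantified statement, and the equivalence goes through cleanly. Given how directly the result follows from the two cited propositions, the proof can reasonably be stated in a sentence or two, which is presumably why the authors phrased it as ``Combining the results of this section with Proposition \ref{WAP0 characterisation}''.
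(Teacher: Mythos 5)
Your proposal is correct and matches the paper's own (implicit) proof exactly: the paper derives the result by combining Proposition \ref{WAP0 characterisation} for the equivalence of (i)--(v) with the immediately preceding mean-based characterisation of $\fWAP_0(\RR^d)$ for $(\mathrm{ii}) \Leftrightarrow (\mathrm{vi})$, which applies since $\cMS \subseteq \fWAP(\RR^d)$. Your alternative route via Theorem \ref{pure point intensity}, and your observation that quantifying over all $x \in \RR^d$ makes the $\chi_x$ versus $\overline{\chi_x}$ convention harmless, also mirror the paper's remark that for $\psi \in \cMS$ the mean characterisation is an immediate corollary of that theorem.
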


\section{Diffraction of a Weakly Almost Periodic Distribution}

In this section we study the diffraction of a weakly almost periodic distribution. We shall see that as long as such a distribution admits an autocorrelation, the diffraction is unique, pure point diffractive and can be expressed in terms of the Fourier-Bohr coefficients of the original distribution.

\begin{theorem}\label{WAP are pure point diffractive}
Let $\psi \in \fWAP$ which has an autocorrelation $\phi$. Then the following hold.
\begin{enumerate}
\item[(i)] For all $f \in \cD$ we have
\[
\phi*f*\widetilde{f} = (\psi*f) \econv \widetilde{\psi*f}
\]
where the RHS denotes the Eberlein convolution of weakly almost periodic functions \footnote{See \cite[Def.~15.1]{EBE} for the definition of Eberlein convolution of weakly almost periodic functions.}.
\item[(ii)] $\phi$ is the only autocorrelation of $\psi$.
\item[(iii)] If $\{ h_R \}$ is any strong smooth approximate van Hove family for $\psi$ then
\[
\phi = \lim_R \frac{1}{\vol(B_R)} \psi_R * \widetilde{\psi_R}
\]
\item[(iv)] $\phi \in \fSAP(\RR^d)$.
\item[(v)] $\psi$ has pure point diffraction given by
\[
\widehat{\phi}= \sum_{x \in \RR^d} \left| M(\overline{\chi_x}\psi) \right|^2 \delta_x \,.
\]
\end{enumerate}

\end{theorem}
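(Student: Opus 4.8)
The crux is part (i); the remaining items follow from it together with the results already established and the classical theory of almost periodic functions. Throughout I fix $f\in\cD$ with $\supp f\subseteq B_{r_0}$, write $g:=\psi*f\in WAP(\RR^d)$ and $g_n:=\psi_{R_n}*f$, and note that, since $h_R\psi$ has compact support, each $g_n$ lies in $\cD$. Starting from $\phi=\lim_n \tfrac{1}{\vol(B_{R_n})}\psi_{R_n}*\widetilde{\psi_{R_n}}$ and using that convolution with a fixed Schwartz function is weak-$*$ continuous, together with $\widetilde{\psi_{R_n}}*\widetilde f=\widetilde{\psi_{R_n}*f}$, I would first rewrite
\[
\phi*f*\widetilde f=\lim_n \tfrac{1}{\vol(B_{R_n})}\,g_n*\widetilde{g_n}
\]
in the weak-$*$ topology. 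The plan is then to identify this limit with the Eberlein convolution $g\econv\widetilde g$, whose value at $y$ is the mean $\lim_R\tfrac{1}{\vol(B_R)}\int_{B_R} g(t)\overline{g(t-y)}\dd t$.

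The main obstacle is comparing the smeared object $g_n=(h_{R_n}\psi)*f$ with the sharp truncation $1_{B_{R_n}}g$ implicit in the Eberlein convolution. Since $\supp f\subseteq B_{r_0}$, one checks that $g_n=g$ on $B_{R_n-r_0}$ and $g_n\equiv 0$ off $B_{R_n+1+r_0}$, so the two differ only on a boundary corona of width $O(1)$. To control the corona one needs a bound on $g_n$ there that is uniform in $n$; this is the hard point. I would obtain it from equicontinuity: writing $g_n(x)=(T_{-x}\psi)\bigl((T_{-x}h_{R_n})f_{\n}\bigr)$, the test function $(T_{-x}h_{R_n})f_{\n}$ is supported in $B_{r_0}$, so Proposition \ref{tb Prop 1}(iv) together with uniformly bounded derivatives of the $h_{R_n}$ (which we may assume by the remark following Proposition \ref{derivative tb measures} and Proposition \ref{smooth van Hove is irrelevant}) yields $\sup_{n,x}|g_n(x)|<\infty$. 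As the corona has volume $o(\vol(B_{R_n}))$, its contribution vanishes in the limit, while on the bulk $g_n=g$; hence $\tfrac{1}{\vol(B_{R_n})}(g_n*\widetilde{g_n})(y)\to(g\econv\widetilde g)(y)$ pointwise. The uniform bound makes these functions uniformly bounded, so the pointwise limit is also the weak-$*$ limit, which must equal $\phi*f*\widetilde f$. This proves (i).

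Given (i), the rest is largely formal. For (ii): since (i) applies to every autocorrelation, any two autocorrelations $\phi,\phi'$ satisfy $(\phi-\phi')*f*\widetilde f=0$ for all $f\in\cD$; polarising in $f$ gives $(\phi-\phi')*(f*\widetilde g)=0$ for all $f,g\in\cD$, and as $\cD*\cD$ is dense in $\cS$ this forces $\phi=\phi'$. For (iii): a strong family makes $\{\tfrac{1}{\vol(B_R)}\psi_R*\widetilde{\psi_R}\}$ weak-$*$ precompact, so by a diagonal argument over a countable dense subset of $\cD$ combined with Lemma \ref{conv on cD implies cS}, every sequence $R_n\to\infty$ has a weak-$*$ convergent subsequence; its limit is an autocorrelation, hence equals $\phi$ by (ii). Since all cluster points coincide with $\phi$, the full net converges to $\phi$. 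For (iv): $\phi$ is positive definite, so $\phi\in\cMS$; by (i), $\phi*f*\widetilde f=(\psi*f)\econv\widetilde{(\psi*f)}$, which lies in $SAP(\RR^d)$ because the Eberlein convolution of weakly almost periodic functions is strongly almost periodic \cite{EBE}. Proposition \ref{SAP characterisation} (in its form $(v)\Rightarrow(ii)$) then gives $\phi\in\fSAP(\RR^d)$.

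Finally, for (v): by (iv) and Proposition \ref{SAP characterisation}, $\widehat\phi$ is pure point, so $\widehat\phi=\sum_{x}\widehat\phi(\{x\})\delta_x$, and Theorem \ref{pure point intensity} gives $\widehat\phi(\{x\})=M(\overline{\chi_x}\phi)$. To evaluate this I would first record a Fourier--Bohr identity: for $\eta\in\fWAP(\RR^d)$ and $h\in\cS$, applying \eqref{eq chi conv} twice yields $\overline{\chi_x}(\eta*h)=(\overline{\chi_x}\eta)*(\overline{\chi_x}h)$, whence, by the extension of Theorem \ref{WAP are amenable} to Schwartz test functions,
\[
M\bigl(\overline{\chi_x}(\eta*h)\bigr)=M(\overline{\chi_x}\eta)\int_{\RR^d}\overline{\chi_x}\,h\dd\lambda=\widehat h(x)\,M(\overline{\chi_x}\eta)\,.
\]
Taking Fourier--Bohr coefficients in the identity of part (i) and using the classical relation $M(\overline{\chi_x}(g\econv\widetilde g))=|M(\overline{\chi_x}g)|^2$ for $g=\psi*f\in WAP(\RR^d)$ \cite{EBE2}, the displayed identity (applied once with $\eta=\phi,\ h=f*\widetilde f$, so that $\widehat h=|\widehat f|^2$, and once with $\eta=\psi,\ h=f$) gives
\[
|\widehat f(x)|^2\,M(\overline{\chi_x}\phi)=|\widehat f(x)|^2\,\bigl|M(\overline{\chi_x}\psi)\bigr|^2\,.
\]
Choosing, for each $x$, some $f\in\cD$ with $\widehat f(x)\neq 0$ yields $\widehat\phi(\{x\})=|M(\overline{\chi_x}\psi)|^2$, which is the claimed formula.
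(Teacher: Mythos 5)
Your proof is correct, and for parts (ii)--(v) it coincides in substance with the paper's argument: polarisation plus density of $\cD*\cD$ for uniqueness in (ii), precompactness plus uniqueness of cluster points for (iii), positive definiteness, Eberlein's theorem that $g\econv\widetilde{g}\in SAP(\RR^d)$ and Proposition \ref{SAP characterisation} for (iv), and Theorem \ref{pure point intensity} together with the Fourier--Bohr identity $M\bigl(\overline{\chi_x}(\eta*h)\bigr)=\widehat{h}(x)M(\overline{\chi_x}\eta)$ and Eberlein's formula $M\bigl(\overline{\chi_x}(g\econv\widetilde{g})\bigr)=|M(\overline{\chi_x}g)|^2$ for (v). The genuine difference is in the corona estimate of part (i). The paper disposes of the corona by asserting that $\psi_{R_n}*f$ equals $h_{R_n}\cdot(\psi*f)$ there and bounding this by $\|\psi*f\|_\infty$; that identity is in fact false on the corona in general (already for $\psi=\delta_\ZZ$ one has $(h_{R_n}\psi)*f(t)=\sum_k h_{R_n}(k)f(t-k)$, which need not equal $h_{R_n}(t)\sum_k f(t-k)$ where $h_{R_n}$ is not locally constant), so the uniform-in-$n$ bound on $\psi_{R_n}*f$ over the corona, which you single out as ``the hard point'', is exactly the step that needs an argument. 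Your replacement --- writing $g_n(x)=(T_{-x}\psi)\bigl((T_{-x}h_{R_n})f_{\n}\bigr)$ and combining equicontinuity from Proposition \ref{tb Prop 1}(iv) with uniformly bounded derivatives of the $h_{R_n}$ --- closes this correctly and is more careful than the published argument. The one soft spot is your justification of the reduction to a family with uniformly bounded derivatives: Proposition \ref{smooth van Hove is irrelevant} applies only when \emph{both} families are strong for $\psi$, and strongness of the replacement family is never established. This is repairable without new ideas: part (i) only ever tests the approximants against $T_t(f*\widetilde{f})_{\n}=(T_tf_{\n})*\overline{f}$, a convolution of two elements of $\cD$, and Lemma \ref{L4.5} holds for arbitrary smooth approximate van Hove families, so the switch of family is legitimate for precisely the quantities appearing in (i) (alternatively, the switch is consistent with the paper's standing convention that all families in play are assumed strong). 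With that citation repaired, your proof is complete.
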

\begin{proof}

In this proof we follow closely the proofs of \cite{LS2}.

(i): Let $R_n \to \infty$ be such that
\[
\phi= \lim_n \frac{1}{\vol(B_{R_n})} \psi_{R_n}*\widetilde{\psi_{R_n}}\,.
\]

Let $f \in \cD$ and let $R>0$ be such that $\sup(f) \subseteq B_R(0)$. Let $N$ be such that for all $n >N$ we have $R_n>R$.

For any $n > N$, we note that
\[
(\psi_{R_n} *f)(t) - h_{R_n}(t) (\psi*f)(t) = \psi ( h_{R_n} T_t f_- ) -h_{R_n}(t) \psi (T_t f_- ) \,.
\]
If $t \notin B_{R_n+1+R}(0)$ then $h_{R_n} T_t f_- \equiv 0$ and hence
\[
 \psi ( h_{R_n} T_t f_- ) =0 \,.
\]
In this case we also have $h_{R_n}(t)=0$. This shows that
\[
(\psi_{R_n} *f)(t) = h_{R_n}(t) (\psi*f)(t)  \, \mbox{ for } t \notin B_{R_n+1 +R}(0) \,.
\]
Now, taking $n>N$, if $t \in B_{R_n-R}(0)$ we have
\[
h_{R_n} T_tf_-= T_tf_- \,,
\]
and hence
\[
(\psi_{R_n} *f)(t)= \psi ( h_{R_n}  T_t f_- )= \psi ( T_t f_- )=h_{R_n}(t) \psi (T_t f_- )  =h_{R_n}(t) (\psi*f)(t) \,.
\]
This shows that
\[
(\psi_{R_n} *f)(t) = h_{R_n}(t) (\psi*f)(t)  \, \mbox{ for all } t\notin B_{R_n+1+R}(0) \backslash B_{R_n-R}(0) \,.
\]
Moreover, on the compact set $\overline{B_{R_n+1+R}(0)} \backslash B_{R_n-R}(0)$, the function $\psi_{R_n} *f = h_{R_n} (\psi*f)$ is bounded as $\| \psi*f \|_\infty < \infty$ by translation boundedness of $\psi$. Therefore, by a standard van Hove computation, we have
\begin{align*}
\lim_{n} \frac{1}{\vol(B_{R_n})} &\;(\psi_{R_n}\ast f)*(\widetilde{\psi_{R_n}}*\widetilde{f}) (t) \\
&- \lim_{R_n} \frac{1}{\vol(B_{R_n})} (h_{R_n} (\psi*f))*(h_{R_n} (\psi*f))\widetilde{\phantom{x}}(t) \; = \; 0 \,,
\end{align*}
for all $t \in \RR^d$. Moreover, by the properties of $h_{R_n}$ and the van Hove property of $B_{R_n}$, we have
\begin{align*}
\lim_{n \to \infty} \frac{1}{\vol(B_{R_n})} &(h_{R_n} (\psi*f))* (h_{R_n} (\psi*f))\widetilde{\phantom{x}} (t) \\
&- \frac{1}{\vol(B_{R_n})} (1_{B_{R_n}} (\psi*f))*(1_{B_{R_n}} (\psi*f))\widetilde{\phantom{x}} (t) \; = \; 0\,,
\end{align*}
for all $t \in \RR^d$. Therefore,
\begin{equation}\label{EQ3}
\begin{aligned}
\lim_{n \to \infty} \frac{1}{\vol(B_{R_n})} &(\psi_{R_n}\ast f)*\widetilde{\psi_{R_n}}*\widetilde{f} (t) \\
&- \frac{1}{\vol(B_{R_n})} (1_{B_{R_n}} (\psi*f))*(1_{B_{R_n}} (\psi*f))\widetilde{\phantom{x}} (t) \; = \; 0 \,,
\end{aligned}
\end{equation}
for all $t \in \RR^d$.
Now, since $\psi \in \fWAP(\RR^d)$ we have $\psi*f \in WAP(\RR^d)$. Therefore, the Eberlein convolution $(\psi*f)\econv \widetilde{(\psi*f)}$ is well defined and by \cite{EBE},
\[
(\psi*f)\econv \widetilde{(\psi*f)}(t) = \lim_n \frac{1}{\vol(B_{R_n})} (1_{B_{R_n}} (\psi*f))*(1_{B_{R_n}} (\psi*f))\widetilde{\phantom{x}} (t)
\]
uniformly in $t$. Therefore, by (\ref{EQ3}), for all $t \in \RR^d$ the limit
\[
\lim_{n \to \infty} \frac{1}{\vol(B_{R_n})} (\psi_{R_n}\ast f)*(\widetilde{\psi_{R_n}}*\widetilde{f}) (t)
\]
exists and is equal to $(\psi*f)\econv \widetilde{(\psi*f)}(t)$.

This proves (i).

(ii): Let $\varphi$ be any other autocorrelation of $\psi$. Then, as $(i)$ holds for any autocorrelation, we get that for all $f \in \cD$ we have
\[
\phi*f*\widetilde{f} = (\psi*f) \econv \widetilde{\psi*f}=\varphi*f*\widetilde{f}
\]

Evaluating at $t =0$ we get that $\phi=\varphi$ on a dense subset of $\cS$, and hence $\phi=\varphi$.

(iii): This follows immediately from the compactness of $\{  \frac{1}{\vol(B_{R})} \psi_{R}*\widetilde{\psi_{R}} | R> 1 \}$ and from the fact that as $R \to \infty$ this set has a unique cluster point.

(iv): Since $(\psi*f)\econv \widetilde{(\psi*f)} \in SAP(\RR^d)$ \cite[Theorem ~15.1]{EBE}, we have from (i) that
$\phi*f*\widetilde{f} \in SAP(\RR^d)$. The claim now follows from Proposition \ref{SAP characterisation}.

(v): From (i), for each $f \in \cD$ we have
\[
\phi*f*\widetilde{f} = (\psi*f)\econv \widetilde{\psi*f} \,.
\]
As $\phi \in \fSAP(\RR^d)$ is positive definite, $\widehat{\phi}$ is a pure point measure by Proposition \ref{SAP characterisation}. To prove $(iv)$ we need to show that $\widehat{\phi}(\{x\})= \left| M(\chi_x \psi) \right|^2$ for all $x\in\RR^d$.

Fix $x \in \RR^d$ and take $f \in \cD$ such that $\widehat{f}(x) \neq 0$. We have
\[
\left| \widehat{f}(x) \right|^2 \widehat{\phi}(\{x\}) = (\phi*f*\widetilde{f})\widehat{\phantom{x}}(x)
      = M(\overline{\chi_x}(\phi*f*\widetilde{f}) )  = M( \overline{\chi_x} ((\psi*f)\econv \widetilde{(\psi*f)}))  \,,
\]
where we have applied Theorem \ref{pure point intensity} and (i).
By \cite[Lemma~15.2]{EBE} we know that the Fourier Stiltje coefficient $M(\overline{\chi_x}( (\psi*f)\econv \widetilde{(\psi*f)}))$ of the Eberlein convolution is given by
\[
M(\overline{\chi_x}(\psi*f)\econv\widetilde{(\psi*f)})= \left|M(\overline{\chi_x}(\psi*f))\right|^2
				= \left|M(\overline{\chi_x} \psi) \right|^2 \left| \widehat{f}(x) \right|^2 \,.
\]
This shows that
\[
\left| \widehat{f}(x) \right|^2 \widehat{\phi}(\{x\}) = \left|M(\overline{\chi_x} \psi) \right|^2 \left| \widehat{f}(x) \right|^2 \,,
\]
which completes the proof.
\end{proof}

\section{Tempered distributions with an autocorrelation}

Our primary goal for this section is to introduce few examples of tempered distributions for which the diffraction can be calculated explicitly.

\begin{example}\label{delta Z'}
Returning again to the tempered distribution of Example \ref{eg DdeltaZ}, let $\psi = -D\delta_{\ZZ}$ on $\RR$. Writing
\[ \psi = -D\delta_0 \ast \delta_{\ZZ} \]
and observing that $D\delta_0\ast D\delta_0 = D^2\delta_0$, a simple calculation gives that $\psi$ has autocorrelation
\[
\phi = D^2\delta_{\ZZ} = \sum_{n \in \ZZ} D^2 \delta_n
\]
and diffraction
\[
\widehat{\phi} = 4 \pi^2 \sum_{n \in \ZZ} n^2\delta_n  \,.
\]
This can be easily generalised from the Dirac measure of $\ZZ$ to that of a lattice $\Lambda\in \RR^n$. For a multi-index $\alpha$, let
\[ \psi := D^{\alpha}\delta_{\Lambda}\,.\]
Then $\psi$ has autocorrelation given by
\[ \phi := \dens(\Lambda) D^{2\alpha} \delta_{\Lambda}\,,\]
and hence diffraction
\[ \widehat{\phi} = (4\pi^2)^{|\alpha|} \dens(\Lambda)^2 \sum_{x\in \Lambda^*} x^{2\alpha} \delta_x\,.\]

We will see later in this section that the results of this paper will yield a simpler derivation of these autocorrelation and diffraction measures.
\end{example}

Next, we change a standard probabilistic model to give a new interesting example:

\begin{example} Construct a distribution $\psi$ the following way: for each integer $n \in \ZZ$ select at random $\delta_n$ or $\delta'_n$ with respective probabilities $p$ and $1-p$. This distribution can be written as
\[
\psi=\sum_{n \in \ZZ} a_n \delta_n + b_n \delta'_n
\]
where $a_n,b_n \in \{0,1 \}$ and $a_n+b_n=1$. An easy computation shows that the autocorrelation $\phi$ of $\psi$ has the form
\[
\phi=\sum_{n \in \ZZ} c_n \delta_n + d_n \delta'_n + e_n \delta''_n
\]
where
\[
c_n =\lim_m \frac{1}{2m} \sum_{k=-m}^m a_k a_{n+k}\,,
\]
\[
d_n =2\lim_m \frac{1}{2m} \sum_{k=-m}^m a_k b_{n+k}\,,
\]
\[
e_n =\lim_m \frac{1}{2m} \sum_{k=-m}^m b_k b_{n+k}\,.
\]

Now, if $n \neq 0$ we have, by the independence of the random variables, that $a_k$ and $a_{n+k}$ are both 1 with probability $p^2$, otherwise their product is zero. Therefore, we get $c_n =p^2$. In the same way, $d_n=2p(-1p)$ and $e_n=(1-p)^2$.

If $n=0$, we have $a_k=a_{n+k}$, $b_k=n_{n+k}$, and $a_kb_k=0$. Therefore we get $c_n =p, d_n=0$, and $e_n=1-p$. This gives
\[
\phi= p^2 \delta_\ZZ + 2p(1-p) D \delta_\ZZ + (1-p)^2 D^2 \delta_{\ZZ}+ (p-p^2) \delta_0+ (1-p-(1-p)^2) D^2\delta_0 \,.
\]

The diffraction of $\psi$ is hence
\[
\widehat{\phi}= \bigl( \sum_{n \in \ZZ} (p^2+2np(1-p)+n^2(1-p)^2) \delta_n \bigr) + \bigl( (p-p^2) + (1-p-(1-p)^2)x^2 \bigr) \lambda \,.
\]
\end{example}

The next result will give us many examples of distributions with pure point diffraction.

\begin{lemma}\label{L temp diff}  Let $\mu \in \cM^\infty(\RR^d)$ and let $\psi = \check{\mu} \in \cSp(\RR^d)$. If $\psi$ has an autocorrelation $\phi$, then $\phi$ is the unique autocorrelation of $\psi, \phi \in \fSAP(\RR^d)$ and the diffraction of $\psi$ is
\[
\widehat{\phi} = \sum_{x \in \RR^d} \bigl| \mu( \{ x \} ) \bigr|^2 \delta_x \,.
\]
\end{lemma}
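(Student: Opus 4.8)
The plan is to recognize that this lemma is an immediate consequence of the machinery built up in Sections~5--8, specialised to $\psi = \check{\mu}$. The crucial first observation is that, since $\psi = \check{\mu}$ is the inverse Fourier transform of $\mu$, we have $\widehat{\psi} = \mu$, which is a measure; hence $\psi \in \cMS$. By Theorem~\ref{FT are wap} (equivalently, by the remark following the definition of $\cMS$), this already gives $\psi \in \fWAP(\RR^d)$.

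Given this, the heart of the argument is simply to invoke Theorem~\ref{WAP are pure point diffractive}, whose hypotheses are now met: $\psi \in \fWAP(\RR^d)$ and, by assumption, $\psi$ has an autocorrelation $\phi$. Parts~(ii), (iv) and (v) of that theorem yield, respectively, that $\phi$ is the unique autocorrelation of $\psi$, that $\phi \in \fSAP(\RR^d)$, and that the diffraction is the pure point measure
\[
\widehat{\phi} = \sum_{x \in \RR^d} \left| M(\overline{\chi_x}\psi) \right|^2 \delta_x \,.
\]
Thus all three assertions of the lemma are reduced to identifying the Fourier--Bohr coefficients $M(\overline{\chi_x}\psi)$ with the atoms of $\mu$.

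This last identification is where Theorem~\ref{pure point intensity} enters: since $\psi \in \cMS$, it gives $M(\overline{\chi_x}\psi) = \widehat{\psi}(\{x\})$ for every $x \in \RR^d$. As $\widehat{\psi} = \mu$, this equals $\mu(\{x\})$, and substituting into the expression above produces the stated formula $\widehat{\phi} = \sum_{x \in \RR^d} |\mu(\{x\})|^2 \delta_x$. I do not anticipate any genuine obstacle here, as every step is a direct citation of an earlier result; the lemma is essentially a corollary packaging Theorems~\ref{FT are wap}, \ref{WAP are pure point diffractive} and \ref{pure point intensity} together. The only point deserving a moment's care is the bookkeeping of the Fourier convention, namely confirming at the outset that $\check{\mu}$ has measure Fourier transform $\widehat{\psi} = \mu$, so that both $\psi \in \cMS$ and the applicability of Theorem~\ref{pure point intensity} are beyond doubt.
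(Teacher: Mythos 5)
Your proposal is correct and follows essentially the same route as the paper's own proof: both note that $\widehat{\psi}=\mu$ is a measure so $\psi\in\cMS\subseteq\fWAP(\RR^d)$, then invoke Theorem~\ref{WAP are pure point diffractive} for uniqueness, strong almost periodicity and the pure point diffraction formula, and finally use Theorem~\ref{pure point intensity} to identify $M(\overline{\chi_x}\psi)=\widehat{\psi}(\{x\})=\mu(\{x\})$. Nothing is missing.
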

\begin{proof}
As $\psi \in \cMS$ we have $\psi \in \fWAP(\RR^d)$. Therefore by Proposition \ref{WAP are pure point diffractive}, $\phi$ is the unique autocorrelation of $\psi, \phi \in \fSAP(\RR^d)$ and the diffraction of $\psi$ is
\[
\widehat{\phi} = \sum_{x \in \RR^d} \bigl| M(\bar{\chi_x}\psi) \bigr|^2 \delta_x \,.
\]

Moreover, as $\psi \in \cMS$ by Theorem \ref{pure point intensity} we have
\[
M(\bar{\chi_x}\psi)= \widehat{\psi}( \{ x \})= \mu( \{ x \})  \,.
\]

This completes the proof.
\end{proof}

\begin{corollary}\label{cor diff td}
Let $\psi \in \cMS$ be a tempered distribution with autocorrelation $\phi$. Then the diffraction of $\psi$ has the form
\[
\widehat{\phi}= \sum_{x \in \RR^d} \left| \widehat{\psi}(\{x \})\right|^2 \delta_x\,.
\]
\end{corollary}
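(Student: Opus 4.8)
The plan is to read this off directly from the two principal results of the preceding sections: the diffraction formula for weakly almost periodic distributions established in Theorem \ref{WAP are pure point diffractive}, and the identification of Fourier--Bohr coefficients with atomic masses of $\widehat{\psi}$ recorded in Theorem \ref{pure point intensity}. The corollary should then be a one-line substitution.

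First I would check that the hypotheses of Theorem \ref{WAP are pure point diffractive} are in force. Since $\psi \in \cMS$, the inclusion $\cMS \subseteq \fWAP(\RR^d)$ furnished by Theorem \ref{FT are wap} gives $\psi \in \fWAP(\RR^d)$; and by assumption $\psi$ admits an autocorrelation $\phi$. Consequently part (v) of that theorem applies and yields
\[
\widehat{\phi} = \sum_{x \in \RR^d} \left| M(\overline{\chi_x}\psi) \right|^2 \delta_x \,.
\]

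The only remaining step is to express each Fourier--Bohr coefficient through the atoms of $\widehat{\psi}$. Because $\psi \in \cMS$, Theorem \ref{pure point intensity} supplies the identity $M(\overline{\chi_x}\psi) = \widehat{\psi}(\{x\})$ for every $x \in \RR^d$. Inserting this into the previous display gives
\[
\widehat{\phi} = \sum_{x \in \RR^d} \left| \widehat{\psi}(\{x\}) \right|^2 \delta_x \,,
\]
which is precisely the claim.

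Honestly, there is no real obstacle here: the statement is an immediate composition of two already-proved theorems, and the only thing to confirm is that their hypotheses match up, which they do. I would also point out that this slightly strengthens Lemma \ref{L temp diff}, where $\widehat{\psi}$ was assumed to be a \emph{translation bounded} measure ($\mu \in \cM^\infty$); the argument above uses only that $\widehat{\psi}$ is a measure, since neither Theorem \ref{WAP are pure point diffractive} nor Theorem \ref{pure point intensity} invokes translation boundedness of the Fourier dual.
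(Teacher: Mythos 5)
Your proposal is correct and follows exactly the paper's own route: the paper derives this corollary (via the proof of Lemma \ref{L temp diff}) by combining Theorem \ref{FT are wap}, Theorem \ref{WAP are pure point diffractive}(v), and Theorem \ref{pure point intensity}, just as you do. Your closing observation is also consistent with the paper's presentation, since the corollary is indeed stated for general $\psi \in \cMS$ precisely because translation boundedness of $\widehat{\psi}$ plays no role in that argument.
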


\begin{corollary}\label{C temp diff}  Let $\Lambda \subseteq \widehat{\RR^d}$ be a Delone set and let $\psi = \check{\delta_{\Lambda}}$.  If $\psi$ admits an autocorrelation $\phi$ then $\phi$ is the only autocorrelation of $\psi$, the diffraction of $\psi$ is $\delta_{\Lambda}$ and $\psi =\phi$.
\end{corollary}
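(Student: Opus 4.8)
The plan is to recognize this corollary as an essentially immediate specialization of Lemma~\ref{L temp diff} to the measure $\mu = \delta_\Lambda$, supplemented by one short observation about Fourier transforms. First I would verify the hypothesis of that lemma, namely that $\delta_\Lambda \in \cM^\infty(\widehat{\RR^d})$. Since $\Lambda$ is a Delone set it is in particular uniformly discrete, so the number of points of $\Lambda$ in any translate $K+t$ of a fixed compact set $K$ is bounded uniformly in $t$; hence $\delta_\Lambda(K+t)$ is bounded and $\delta_\Lambda$ is a translation bounded measure. Consequently $\psi = \check{\delta_\Lambda}$ is a well-defined tempered distribution with $\widehat{\psi} = \delta_\Lambda$ a measure, so $\psi \in \cMS$ and the machinery of the previous section applies.

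Next I would simply invoke Lemma~\ref{L temp diff} with $\mu = \delta_\Lambda$. Under the standing hypothesis that $\psi$ admits an autocorrelation $\phi$, this yields at once that $\phi$ is the unique autocorrelation of $\psi$, that $\phi \in \fSAP(\RR^d)$, and that
\[
\widehat{\phi} = \sum_{x \in \widehat{\RR^d}} \bigl|\delta_\Lambda(\{x\})\bigr|^2 \delta_x \,.
\]
The one computation to carry out is that every atom of $\delta_\Lambda$ has weight exactly $1$: indeed $\delta_\Lambda(\{x\}) = 1_\Lambda(x)$, so $\bigl|\delta_\Lambda(\{x\})\bigr|^2 = 1_\Lambda(x)$ as well, and the squared comb collapses back onto itself. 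This gives the diffraction $\widehat{\phi} = \sum_{x\in\Lambda}\delta_x = \delta_\Lambda$, which is the second assertion. (One could equally well read this off from Corollary~\ref{cor diff td}, but Lemma~\ref{L temp diff} is more direct since it expresses the intensities through $\mu(\{x\})$ itself.)

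Finally, to obtain $\psi = \phi$ I would compare Fourier transforms. By construction $\widehat{\psi} = \delta_\Lambda$, and we have just shown $\widehat{\phi} = \delta_\Lambda$, so $\widehat{\phi} = \widehat{\psi}$; since the Fourier transform is injective on $\cSp$, it follows that $\phi = \psi$. There is no genuine obstacle here, as the corollary is a clean consequence of the preceding theory; the only two points that deserve care are the verification that $\delta_\Lambda$ is truly translation bounded (which is precisely where the Delone hypothesis enters, via uniform discreteness) and the remark that the unit weights of $\delta_\Lambda$ are exactly what make $\sum_x |\delta_\Lambda(\{x\})|^2 \delta_x = \delta_\Lambda$. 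Both are immediate, so the entire argument reduces to bookkeeping on top of Lemma~\ref{L temp diff}.
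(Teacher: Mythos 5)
Your proof is correct and follows essentially the same route as the paper: the uniqueness of the autocorrelation and the identification of the diffraction come from Lemma~\ref{L temp diff} applied to $\mu=\delta_\Lambda$ (with the unit atom weights giving $\sum_x|\delta_\Lambda(\{x\})|^2\delta_x=\delta_\Lambda$), and the final claim $\psi=\phi$ follows from $\widehat{\psi-\phi}=\delta_\Lambda-\delta_\Lambda=0$ together with injectivity of the Fourier transform on $\cSp$, exactly as in the paper. Your explicit verification that the Delone hypothesis guarantees $\delta_\Lambda\in\cM^\infty$ is a detail the paper leaves implicit, but it is the same argument.
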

\begin{proof}

The first part follows from Lemma \ref{L temp diff}. The last claim follows from the fact that $\widehat{\psi -\phi}= \delta_\Lambda- \delta_\Lambda =0$.

\end{proof}

\begin{remark} If $\Lambda$ is repetitive, has FLC and is not fully periodic, we have $\delta_\Lambda \notin \mathcal{WAP}(\RR^d)$ \cite{LS2}. As the Fourier Transform of any translation bounded measure is a  weakly almost periodic measure \cite{ARMA,MoSt}, it follows that $\delta_{\Lambda}$ cannot in this case be the diffraction measure of any translation bounded measure.
\end{remark}

\begin{example} Let $\Lambda$ be the Fibonacci point set, and $\psi = \check{\delta_{\Lambda}}$. If $\psi$ has an autocorrelation, then the only  diffraction of $\psi$ is $\delta_\Lambda$.
\end{example}

For the next example the existence of the autocorrelation was established in \cite{ter}.
\begin{example}
Recall from Example \ref{eg aperiodic} the tempered distribution
\[ \widehat{\omega} = \tfrac{1}{2}\delta_{\frac{\ZZ}{2}} + \sum_{n\geq 1} \frac{\cos(2\pi(4^n-1)(\cdot))}{4^n}\delta_{\frac{\ZZ}{2.4^n}}\,,\]
whose Fourier transform is the measure
\[\omega := \delta_{2\ZZ} + \sum_{n\geq 1} \delta_{2.4^n \ZZ} \ast (\delta_{4^n - 1} + \delta_{1-4^n}) \,.\]
By Corollary \ref{cor diff td}, the diffraction of $\widehat{\omega}$ is exactly $\omega$ (and hence its autocorrelation is $\widehat{\omega}$).
\end{example}

We next look at how convolution with a compactly supported distribution affects the diffraction.

\begin{proposition} Let $\psi \in \cSp$ be translation bounded and $\varphi$ be a tempered distribution with compact support. If $\phi$ is an autocorrelation of $\psi$ calculated with respect to $h_{R_n}$ and if
\[
\left\{ \frac{1}{\vol(B_{R_n})} (h_{R_n} \psi*\vartheta)* \widetilde{h_{R_n} \psi*\vartheta} : n\in\NN \right\}
\]
is weak-* precompact, then then $\phi*\vartheta*\widetilde{\vartheta}$ is the autocorrelation of $\psi*\vartheta$ calculated with respect to $h_{R_n}$.
\end{proposition}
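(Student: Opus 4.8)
The plan is to compare two families of averages: those built from the external cut-off $A_n := (h_{R_n}\psi)*\vartheta$ and those built from the internal cut-off $B_n := h_{R_n}(\psi*\vartheta)$, where $\vartheta$ denotes the compactly supported distribution $\varphi$. Only the latter enters the definition of the autocorrelation of $\psi*\vartheta$, which is meaningful because $\psi*\vartheta\in\cSp_\infty$ by Proposition \ref{conv is also tb}. The strategy is to show that the external averages converge to $\phi*\vartheta*\widetilde\vartheta$ by a purely algebraic argument, that the two families of averages are asymptotically equal when tested against functions $f*g$, and finally to upgrade this to weak-$*$ convergence of the internal averages via Lemma \ref{conv on cD implies cS}.

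First I would dispose of the external cut-off. Since $h_{R_n}\psi$, $\vartheta$ and their reflections all have compact support, convolution among them is associative and commutative, and $\widetilde{(\eta*\vartheta)}=\widetilde\eta*\widetilde\vartheta$; regrouping gives
\[
\frac{1}{\vol(B_{R_n})}\,A_n*\widetilde{A_n} \;=\; \Bigl[\tfrac{1}{\vol(B_{R_n})}(h_{R_n}\psi)*\widetilde{(h_{R_n}\psi)}\Bigr]*(\vartheta*\widetilde\vartheta)\,.
\]
The bracketed term converges weak-$*$ to $\phi$ by hypothesis, and convolution with the fixed compactly supported distribution $\vartheta*\widetilde\vartheta$ is weak-$*$ continuous: for $f\in\cS$ one has $(\eta*(\vartheta*\widetilde\vartheta))(f)=\eta\bigl((\vartheta*\widetilde\vartheta)_{\n}*f\bigr)$ with $(\vartheta*\widetilde\vartheta)_{\n}*f\in\cS$ by Lemma \ref{conv compact supp is Schwartz}. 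Hence $\tfrac{1}{\vol(B_{R_n})}A_n*\widetilde{A_n}\to\phi*\vartheta*\widetilde\vartheta$ weak-$*$.

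The main work, and the step I expect to be the real obstacle, is the comparison of the two cut-offs. Writing $\supp\vartheta\subseteq B_{\rho}$, I would first check that both $A_n$ and $B_n$ agree with $\psi*\vartheta$ on $B_{R_n-\rho}$ and both vanish off $\overline{B_{R_n+1+\rho}}$, so that $A_n-B_n$ is supported in the corona $\overline{B_{R_n+1+\rho}}\setminus B_{R_n-\rho}$, whose width does not depend on $n$. The interior agreement uses that $h_{R_n}\equiv 1$ on $B_{R_n}$, together with the observation that near a point $t$ with $|t|<R_n-\rho$ the distribution $(h_{R_n}\psi)*\vartheta$ only feels $h_{R_n}\psi$ on $t+B_\rho\subseteq B_{R_n}$. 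With this support statement in hand, the argument of Lemma \ref{L4.5} applies almost verbatim: using the telescoping
\[
A_n*\widetilde{A_n}-B_n*\widetilde{B_n}\;=\;A_n*(\widetilde{A_n}-\widetilde{B_n})+(A_n-B_n)*\widetilde{B_n}\,,
\]
testing against $f*g$ with $f,g\in\cD$, and noting that in each summand one factor is supported in the (fixed-width) corona while the other, being $A_n*f$ or $\widetilde{B_n}*g$, is bounded uniformly in $n$, the van Hove property of $\{B_{R_n}\}$ yields
\[
\lim_n \tfrac{1}{\vol(B_{R_n})}\bigl(A_n*\widetilde{A_n}-B_n*\widetilde{B_n}\bigr)(f*g)=0
\]
for all $f,g\in\cD$. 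The delicate point here, already implicit in Lemma \ref{L4.5}, is the uniform (in $n$) boundedness of the factors $A_n*f$ and $\widetilde{B_n}*g$ on the coronas; this follows from the translation boundedness of $\psi$ and of $\psi*\vartheta$ together with the uniform control of the derivatives of $h_{R_n}$, after translating the test function back to the origin so that the seminorm $\|\cdot\|_{M,N}$ no longer sees the polynomial factor $x^\alpha$ growing near the corona.

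Finally I would combine the pieces. The first step gives $\tfrac{1}{\vol(B_{R_n})}A_n*\widetilde{A_n}(f*g)\to(\phi*\vartheta*\widetilde\vartheta)(f*g)$, so the second step yields $\tfrac{1}{\vol(B_{R_n})}B_n*\widetilde{B_n}(f*g)\to(\phi*\vartheta*\widetilde\vartheta)(f*g)$ for all $f,g\in\cD$. Since $\{f*g:f,g\in\cD\}$ is dense in $\cD$ and the precompactness hypothesis places the averaged sequence entering the autocorrelation of $\psi*\vartheta$ inside a weak-$*$ compact set, Lemma \ref{conv on cD implies cS} upgrades this convergence on a dense set to full weak-$*$ convergence and identifies the limit. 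Thus
\[
\phi*\vartheta*\widetilde\vartheta=\lim_n \tfrac{1}{\vol(B_{R_n})}\bigl(h_{R_n}(\psi*\vartheta)\bigr)*\widetilde{\bigl(h_{R_n}(\psi*\vartheta)\bigr)}\,,
\]
which is exactly the assertion that $\phi*\vartheta*\widetilde\vartheta$ is the autocorrelation of $\psi*\vartheta$ computed along $\{h_{R_n}\}$.
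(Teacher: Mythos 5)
Your proof is correct and follows essentially the same route as the paper's: weak-$*$ continuity of convolution with the fixed compactly supported distribution $\vartheta*\widetilde{\vartheta}$ identifies the limit of the externally cut-off averages, a Lemma~\ref{L4.5}-style corona comparison between the external and internal cut-offs is tested on functions $f*g$ with $f,g\in\cD$, and Lemma~\ref{conv on cD implies cS} together with the precompactness hypothesis upgrades this to weak-$*$ convergence. You in fact supply more detail than the paper, which compresses your second step into the phrase ``by a similar computation to Lemma~\ref{L4.5}''; your support analysis of $A_n-B_n$ and the uniform boundedness of the corona factors make explicit what that appeal leaves implicit.
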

\begin{proof}

It is easy to see that convolution with a fixed compactly supported distribution is weak-* continuous. Therefore
\[
\phi*\vartheta*\widetilde{\vartheta}= \lim_{n \to \infty} \frac{1}{\vol(B_{R_n})} \bigl( (h_{R_n} \phi)*\widetilde{(h_{R_n} \phi)} \bigr) * *\vartheta*\widetilde{\vartheta}
\]

By a similar computation to Lemma \ref{L4.5} we get
\[
\lim_{n \to \infty} \frac{1}{\vol(B_{R_n})}\bigl[ \bigl( (h_{R_n} \phi)*\widetilde{(h_{R_n} \phi)} \bigr) *\vartheta*\widetilde{\vartheta}-\bigl( (h_{R_n} \phi*\vartheta)*\widetilde{(h_{R_n} \phi*\vartheta)} \bigr) \bigr] (f*g) =0
\]

for all $f,g \in \cD$. Now by our convolution assumption we get by Lemma \ref{conv on cD implies cS} that
\[
\phi*\vartheta*\widetilde{\vartheta}=\lim_{n \to \infty} \frac{1}{\vol(B_{R_n})}\bigl( (h_{R_n} \phi*\vartheta)*\widetilde{(h_{R_n} \phi*\vartheta)} \bigr)
\]
in the weak-* topology of $\cSp$, which proves our claim.
\end{proof}

An immediate consequence of this is the following.

\begin{corollary}\label{ac derivative} Let $\mu \in \cM$ be a translation bounded measure and let $h_R$ be a smooth approximate van Hove sequence as in Proposition \ref{derivative tb measures}. Let $h_{R_n}$ be a sequence with respect to which $D^\alpha \mu$ has an autocorrelation $\phi$ and $\mu$ has autocorrelation $\gamma$. Then
\[
\phi= D^{2 \alpha} \gamma
\]
and the diffraction of $\psi$ is thus
\[
\widehat{\phi}=(2\pi)^{2|\alpha|}x^{2 \alpha} \widehat{\gamma}\,.
\]
\end{corollary}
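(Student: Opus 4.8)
The plan is to recognize Corollary \ref{ac derivative} as a direct application of the preceding Proposition with the specific choices $\psi = \mu$ and $\vartheta = D^\alpha \delta_0$. Since $D^\alpha \delta_0$ is a compactly supported tempered distribution and $\mu$ is translation bounded, the Proposition applies as soon as the required weak-$\ast$ precompactness holds; but here this is automatic, because we are given that $D^\alpha \mu$ has an autocorrelation $\phi$ with respect to $h_{R_n}$, so the defining sequence converges and is in particular precompact. (That $\{h_R\}$ has the form prescribed in Proposition \ref{derivative tb measures} is exactly what guarantees such autocorrelations exist.) Observing that $\mu * D^\alpha \delta_0 = D^\alpha(\mu * \delta_0) = D^\alpha \mu$, and that $\gamma$ is the autocorrelation of $\mu$ with respect to the same sequence $h_{R_n}$, the Proposition yields that $\gamma * \vartheta * \widetilde{\vartheta}$ is the autocorrelation of $D^\alpha\mu$ computed along $h_{R_n}$. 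As the autocorrelation along a fixed sequence is the unique weak-$\ast$ limit, this forces $\phi = \gamma * \vartheta * \widetilde{\vartheta}$.

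First I would compute $\vartheta * \widetilde{\vartheta}$ explicitly. The key identity is
\[
\widetilde{D^\alpha \delta_0} = D^\alpha \delta_0 \,.
\]
This follows from a short direct calculation: writing $D^\alpha = c_\alpha \partial^\alpha$ with $c_\alpha$ the scalar fixed by the paper's convention, reflection of the real distribution $\partial^\alpha \delta_0$ contributes a factor $(-1)^{|\alpha|}$ via the chain rule, while the conjugation in $\widetilde{\phantom{x}}$ replaces $c_\alpha$ by $\overline{c_\alpha}$; the two factors combine as $\overline{c_\alpha}(-1)^{|\alpha|} = c_\alpha$, so the reflection-conjugation leaves $D^\alpha \delta_0$ unchanged. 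Granting this, and using that convolution of point-supported derivatives simply adds the orders, we obtain
\[
\phi = \gamma * \vartheta * \widetilde{\vartheta} = \gamma * \bigl( D^\alpha \delta_0 * D^\alpha \delta_0 \bigr) = \gamma * D^{2\alpha} \delta_0 = D^{2\alpha} \gamma \,,
\]
which is the first assertion.

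For the diffraction I would simply take the Fourier transform of $\phi = D^{2\alpha}\gamma$ and apply the rule converting differentiation into multiplication by the corresponding monomial. With the normalisation used in the paper this gives
\[
\widehat{\phi} = \widehat{D^{2\alpha}\gamma} = (2\pi)^{2|\alpha|}\, x^{2\alpha}\, \widehat{\gamma}\,,
\]
where one checks that the powers of $2\pi$ and the accompanying signs cancel to leave this clean form; the special case is consistent with the direct computation in Example \ref{delta Z'}.

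The only genuine content beyond invoking the preceding Proposition is the bookkeeping of signs and constants, namely verifying $\widetilde{D^\alpha \delta_0} = D^\alpha \delta_0$ and confirming that the factors $(-1)^{|\alpha|}$ and $(2\pi)^{2|\alpha|}$ combine exactly as claimed. This is the step where I would be most careful, but it is routine once the conventions for $D^\alpha$ and for the Fourier transform are pinned down.
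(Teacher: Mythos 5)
Your overall route is exactly the paper's: the Corollary is stated there as an immediate consequence of the preceding Proposition, applied with $\psi=\mu$ and $\vartheta=D^\alpha\delta_0$, and your discharge of its hypotheses is correct --- $\mu$ is translation bounded as a tempered distribution by Corollary \ref{TB measures cor}, $D^\alpha\delta_0$ has compact support, and the required weak-$\ast$ precompactness is automatic because the set in question is precisely the defining sequence of the (assumed) autocorrelation of $\mu\ast\vartheta=D^\alpha\mu$ along $h_{R_n}$. The appeal to uniqueness of weak-$\ast$ limits is also fine; structurally nothing is missing.

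The gap is in the sign bookkeeping --- exactly the step you flagged as needing care. In this paper $D^\alpha$ is the plain partial derivative: Example \ref{eg DdeltaZ} identifies $f\mapsto\sum_k f'(k)$ with $-D\delta_\ZZ$, which forces $Df=f'$, so your scalar is $c_\alpha=1$ and the cancellation $\overline{c_\alpha}(-1)^{|\alpha|}=c_\alpha$ fails whenever $|\alpha|$ is odd. Under the paper's conventions one has
\[
\widetilde{D^\alpha\delta_0}=(-1)^{|\alpha|}D^\alpha\delta_0\,,
\]
as a direct computation shows in one dimension: with $\widetilde{f}(x)=\overline{f(-x)}$ one gets $(\widetilde{f})'(0)=-\overline{f'(0)}$, hence $(D\delta_0)(\widetilde{f})=\overline{f'(0)}$ and so $\widetilde{D\delta_0}(f)=f'(0)=-(D\delta_0)(f)$. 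Consequently the Proposition yields $\phi=\gamma\ast D^\alpha\delta_0\ast\widetilde{D^\alpha\delta_0}=(-1)^{|\alpha|}D^{2\alpha}\gamma$, not $D^{2\alpha}\gamma$. Your Fourier step contains the compensating error: since $2\alpha$ is an even multi-index, $\widehat{D^{2\alpha}\gamma}=(2\pi i x)^{2\alpha}\,\widehat{\gamma}=(-1)^{|\alpha|}(2\pi)^{2|\alpha|}x^{2\alpha}\,\widehat{\gamma}$ (the same with either sign convention for the transform), so the signs do \emph{not} cancel to the clean form you wrote. The two errors offset one another, so your final formula $\widehat{\phi}=(2\pi)^{2|\alpha|}x^{2\alpha}\widehat{\gamma}$ is correct --- as it must be, since a diffraction is a positive measure and $x^{2\alpha}\widehat{\gamma}\geq 0$ --- but the intermediate identity $\phi=D^{2\alpha}\gamma$ holds only up to the factor $(-1)^{|\alpha|}$. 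You are in good company: the Corollary as printed is internally inconsistent in the same way (its two displayed formulas are not Fourier transforms of each other for odd $|\alpha|$), as is Example \ref{delta Z'}, where $\widehat{D^2\delta_\ZZ}=-4\pi^2x^2\delta_\ZZ$ is negative. The consistent statement is $\phi=(-1)^{|\alpha|}D^{2\alpha}\gamma$ together with the stated diffraction; your argument would be literally correct under the convention $D=-i\ts\partial$, but that is not the convention this paper uses.
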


Let us note that in Corollary \ref{ac derivative}, the existence of a sequence $h_{R_n}$ is guaranteed by Proposition \ref{tb measures have strong}  and Proposition \ref{derivative tb measures}.

\begin{example} Let $\Lambda$ be any Delone set. If $\psi$ and $\gamma$ are autocorrelations of $D \delta_{\Lambda}$ and $\delta_{\Lambda}$ calculated with respect to the same choice of the smooth van Hove sequence, respectively the corresponding van Hove sequence, then
\[
\widehat{\psi} = x^2 \widehat{\gamma} \,.
\]

\end{example}

We complete the paper by introducing a non translation bounded tempered distribution which has an autocorrelation. Our example is a measure, and has also the same autocorrelation as measure.

\begin{example} Let
\[
\mu := \delta_\ZZ +\sum_{n \in \NN} n \delta_{2^n} \,.
\]
As $\mu$ has logarithmic growth, it is a tempered distribution. Also, $\mu$ is positive and not translation bounded as a measure, hence not translation bounded as a tempered distribution.

Now, if we pick any smooth approximate van Hove sequence $h_R$, we have for all $m \in \NN$
\[
h_m \mu = \delta_{\ZZ \cap [-m,m]} +\sum_{n \in \NN ; 2^n \leq m} n \delta_{2^n} \,.
\]
Therefore
\begin{eqnarray*}
\begin{split}
\frac{1}{2m} \bigl( ( h_m \mu ) * \widetilde{ h_m \mu} - \delta_{\ZZ \cap [-m,m]} *\widetilde{ \delta_{\ZZ \cap [-m,m]}} \bigr) &= \frac{1}{2m} \bigl( \sum_{k=-m}^m \sum_{n \in \NN; 2^n  \leq  m} n \delta_{k-2^n}+ n \delta_{2^n-k} \bigr) \\
&+\frac{1}{2m} \sum_{n,k \in \NN ; 2^n \leq m ; 2^k \leq m} nk \delta_{2^n-2^k} \,.
\end{split}
\end{eqnarray*}

Now, for each $l \in \ZZ$ there are at most $2\log_2(m)$ pairs $-k \leq k \leq m$ and $n \in \NN$ with $2^n <m$ such that $k-2^n=l$ or $2^n-k=l$. For each such pair, we also have $n \leq \log_2(m)$.

Moreover, using the binary representation of a number, it is easy to see that every integer $l$ can be written in at most one way as the difference $l=2^k-2^n$ of two powers of two. We also have $nk < (\log_2(m))^2$. Therefore we get
\begin{eqnarray*}
\begin{split}
0 \leq \frac{1}{2m} \bigl( ( h_m \mu ) * \widetilde{ h_m \mu} - \delta_{\ZZ \cap [-m,m]} *\widetilde{ \delta_{\ZZ \cap [-m,m]}} \bigr) & \leq \frac{1}{2m} \bigl( \sum_{l=-2m}^{2m}  3(\log_2(m))^2 \delta_l \bigr) \,.
\end{split}
\end{eqnarray*}

It is easy to see that both as measures and tempered distributions we have
\[
\lim_m \frac{1}{2m} \bigl( \delta_{\ZZ \cap [-m,m]} *\widetilde{ \delta_{\ZZ \cap [-m,m]}} \bigr) =\delta_\ZZ \,,
\]
and
\[
\lim_m \frac{1}{2m} \bigl( \sum_{l=-2m}^{2m}  3(\log_2(m))^2 \delta_l \bigr)=0 \,.
\]
Therefore
\[
\lim_m \frac{1}{2m} \bigl( ( h_m \mu ) * \widetilde{ h_m \mu} \bigr)  = \delta_\ZZ \,.
\]

\end{example}

{\large \sc \bf Acknowledgment:} {\small
The authors are grateful to Michael Baake for stimulating our interest in these questions, as well as for numerous discussions and comments which improved the quality of this manuscript. Part of the work was done while NS visited the University of Bielefeld and NS would like to thank the University for the hospitality. The work was partially supported by the German Research Foundation (DFG), within the CRC 701 and partially supported by NSERC with a research grant number 2014-03762 and the authors are grateful for the support.}

\end{document}